\documentclass[11pt,letterpaper]{article}
\usepackage{graphicx} % Required for inserting images
\usepackage[margin=1in]{geometry}
\usepackage{booktabs} % For formal tables
\usepackage[ruled]{algorithm2e} % For algorithms

\usepackage{fluff}
\theoremstyle{definition}
\newtheorem{MainResult}{Main Result}

\usepackage{mathtools}
\usepackage{subfig}
\usepackage{enumitem}
\usepackage{thm-restate}
\usepackage{hyperref}

\title{Stronger core results with multidimensional prices}

\author{
    Mark Braverman\thanks{Princeton University, \texttt{mbraverm@gmail.com}}
    \and
    Jingyi Liu\thanks{Princeton University, \texttt{jingyi.liu@princeton.edu}}
    \and
    Eric Xue\thanks{Princeton University, \texttt{ex3782@princeton.edu}}
    \and
    Chenghan Zhou\thanks{Stanford University, \texttt{chzhou@stanford.edu}}
}

\begin{document}

\maketitle

\begin{abstract}
We study one-sided matchings with endowments in the absence of money.
It is well-known that a competitive equilibrium may not always exist and that the strong core may be empty in this setting \cite{HyllandZ1979}. 
We propose a generalization of competitive equilibria that associates each item with a multi-dimensional price.
We show that this solution concept always exists and resides within the rejective core \cite{Konovalov2005}.
Rejective core stability is strictly stronger than weak core stability: allocations in the rejective core are elements of the weak core, but the opposite is not true.
Moreover, we show that the rejective core always converges to the set of competitive equilibria with multi-dimensional prices as the economy grows, demonstrating core convergence in a setting without non-satiation.
\end{abstract}

\addtocounter{page}{-1}

\newpage

% Paper body
\section{Introduction}

One-sided matching without money is a fundamental problem in economics with applications ranging from course allocation to school choice.
At a high level, the problem consists of $n$ agents with preferences over $n$ (indivisible) goods, and the goal is to assign at most one good to each agent in an efficient and fair manner.
The seminal work of \cite{HyllandZ1979} proposes a pseudomarket approach to this problem when agents possess von Neumann-Morgenstern (vNM) utility functions.
The approach endows the agents with equal exogenous budgets of artificial money and asks each agent to purchase her favorite lottery over the goods subject to equilibrium prices and her budget.
\cite{HyllandZ1979} prove that a competitive equilibrium always exists and is Pareto efficient and envy-free.
Their  application of the concept of competitive equilibria from equal incomes (CEEI) \cite{Varian1974} to the allocation of indivisible goods inspired a long line of work \cite{BudishCKM2013, HeMPY2018, GulP2020, EcheniqueMZ2021Constrained, GulPZ2024, NguyenT2024} that sought to extend their result beyond the setting of one-sided allocation with unit-demand agents.\footnote{Here and henceforth, by unit-demand, we specifically mean agents with vNM utilities.}

A common thread between these works, and a fundamental feature of CEEI, is that agents (that impose the same feasibility constraints) have equal opportunity: they face the same prices, and they possess the same income.
This equity is desirable when the goods being allocated are collectively owned.
However, when the goods in the market are privately owned, bestowing an agent with the same opportunity as her peers may be undesirable if she brings comparatively more to the table. 
To extend the result of \cite{HyllandZ1979} to such markets, we study the problem of one-sided matching with endowments and agents with vNM utility functions.
For concision, we will drop the qualifier that the agents have vNM utility functions moving forward, but when we refer to one-sided matchings with endowments, this is what we mean.
The setting is nearly identical to that of \cite{HyllandZ1979} but assumes that each agent possesses an initial endowment which she contributes to the market.
A competitive equilibrium for this setting consists of a randomized matching and a price for each item such that each agent receives her favorite lottery subject to these prices and a budget that is endogenously determined by her initial endowment.

The study of one-sided matchings with endowments in the absence of money dates back to \cite{ShapleyS1974}, who examined the problem assuming integral endowments.
They show that the weak core\footnote{An allocation resides in the weak core if no coalition can trade within itself to make each member better off.} is always non-empty but may include non-competitive allocations.
\cite{RothP1977} observe that the allocation considered by \cite{ShapleyS1974} is not competitive because there exists a coalition of agents who can trade among themselves to make each member better off upon receiving their allocations.
They define the notion of stability to exclude allocations of this kind and appeal to a result of \cite{ShapleyS1974} to show that a stable allocation always exists.
When agents possess strict preferences, \cite{RothP1977} prove that a competitive equilibrium always exists and is unique.
The equilibrium allocation is strong core stable\footnote{An allocation resides in the strong core if no coalition can trade within itself to make some member better off without making the other members worse off. Note that under strict preferences, the weak and strong core coincide.} and can be efficiently and truthfully computed by the Top Trading Cycle algorithm \cite{ShapleyS1974, Roth1982}.

Unfortunately, it is well-known that in general, a competitive equilibrium may not exist and that the strong core may be empty. 
\cite{HyllandZ1979} provide a counterexample and observe that agents may wish to purchase bundles that are cheaper than their endowments due to satiability.
A line of work \cite{bergstrom1976, DrezeM1980, Makarov1981, AumannD1986, Mas-Colell1992, PolemarchakisS1993, Le2017, McLennan2018, EcheniqueMZ23, GargTV2024} seeks to circumvent this impossibility through various means (and in more general contexts).
\cite{bergstrom1976, PolemarchakisS1993, Le2017} give sufficient conditions for the existence of a competitive equilibrium.
\cite{Mas-Colell1992, Le2017, EcheniqueMZ23, GargTV2024} define notions of competitive equilibria with slack and prove that such equilibria always exist.
\cite{GargTV2024} additionally shows that competitive equilibria with slack lie in the approximate core.
\cite{HyllandZ1979} loosely conjecture that a redistribution of the surplus from satiated agents may resolve the problem of satiability.
In line with this intuition, \cite{DrezeM1980, Makarov1981, AumannD1986} propose a solution concept called \textit{dividend equilibrium} that allows for this redistribution and gives sufficient conditions for existence.
\cite{McLennan2018} extends this concept to economies with production.

Despite the tremendous progress on the problem of one-sided matching with endowments, a solution concept that always exists and exhibits the desirable properties of competitive equilibria remains elusive.
Our first main result is an equilibrium concept, which we term lexicographic dividend equilibria (LDE), that always exists and lies in the \textit{rejective core} \cite{Konovalov2005}.
An allocation resides in the rejective core if no coalition of agents, some endowed with their allocations and others with their initial endowments, can trade within itself such that those in the former group weakly improve while those in the latter group strictly improve (and at least one member of the coalition strictly improves).\footnote{\cite{Konovalov2005} defines the rejective core slightly differently. Their definition requires the blocking allocation to also be well-defined for agents outside of the coalition, and for these agents to weakly prefer the blocking allocation over their initial endowments. They work with a continuum of agents, so this additional requirement is without loss of generality. In finite economies, it unnecessarily complicates analysis but can be handled for our results (see \cite{MurakamiU2017}).}
Note that membership in the rejective core implies \textit{ex ante} Pareto optimality, individual rationality, and stability in the sense of \cite{RothP1977}.
We show that the opposite is not true.

\begin{MainResult}[Informal, see Theorems~\ref{theorem:LDE-existence} and~\ref{theorem:rejective-core-contains-LDEs}]\label{MainResult:existence}
For all one-sided matching markets with endowments, a lexicographic dividend equilibrium (LDE) exists and resides in the rejective core.
\end{MainResult}

Unlike a competitive equilibrium, which consists of an allocation and a price \textit{vector}, an LDE consists of three components: an allocation, a price \textit{matrix} $P \in \RR^{d \times n}$, and a redistribution matrix $A \in \RR^{d \times n}_+$ where $\RR_+$ denotes the non-negative reals.
Each column $p_j \in \RR^d$ of the price matrix corresponds to a commodity $j$, and each column $\alpha_i \in \RR^d_+$ of the redistribution matrix corresponds to the surplus redistributed to agent $i$.
We require that the first non-zero entry of $p_j$ be positive, while subsequent entries are allowed to be any real number.
The budget set of an agent with initial endowment $\omega \in \RR^n_+$ and redistributed surplus $\alpha \in \RR^n_+$ is the \textit{closure} of $\{x \in \Delta^n : P x \leq_{\mathrm{lex}} P \omega + \alpha\}$ where $\Delta^n$ denotes the $(n-1)$-simplex and $\leq_{\mathrm{lex}}$ denotes the lexicographic ordering on $\RR^d$.\footnote{$v \leq_{\mathrm{lex}} w$ if either (1) there exists $k$ such that $v_{\ell} = w_{\ell}$ for all $\ell < k$ and $v_{k} < w_{k}$ or (2) $v_{k} = w_{k}$ for all $k < d$ and $v_d \leq w_d$.}
In the same vein, a bundle $x$ is cheaper than bundle $y$ at prices $P$ if $P x <_{\mathrm{lex}} P y$.
We now have sufficient overhead to define a LDE: an allocation, a price matrix, and a redistribution matrix constitute a LDE if each agent receives her favorite affordable bundle, and there is no cheaper bundle that yields the same utility.
When $d = 1$, the set of LDEs coincides with the set of dividend equilibria \cite{AumannD1986}.

Our proof of Main Result~\ref{MainResult:existence} suggests a natural interpretation for the lexicographic nature of LDEs.
The rows of the price and redistribution matrices correspond to currencies, and each currency is infinitely more valuable than those that follow.
As a consequence, commodities with zero price in the first currency are free for agents with positive endowments of the first currency, even though these commodities may possess positive prices in later currencies.
This is because these agents can always convert an $\varepsilon$ amount of the first currency into an infinite amount of the second, so the closure of $\{x \in \Delta^n : P x \leq_{\mathrm{lex}} P \omega + \alpha\}$ treats goods with zero price in the first currency as costless.
On the other hand, commodities with positive price in the first currency are unobtainable for agents with no endowment of the first currency, even though these agents may possess a surplus of later currencies.
This surplus instead is redistributed.

A loosely related solution concept is that of quasi-equilibrium \cite{Debreu1962}.
An allocation and a price (vector) constitute a quasi-equilibrium if agents with positive income receive their favorite affordable bundles, while those with no income are only required to receive a bundle that is free.
\cite{Debreu1962} introduced this concept specifically due to the ``basic mathematical difficulty that the demand correspondence of a consumer may not be upper semi-continuous when his wealth equals the minimum compatible with his consumption set.''
Prior work circumvented this difficulty through various means.
The assumption that each agent possesses a strictly positive endowment dates back as far as the seminal work of \cite{ArrowD1954} and also appears in \cite{Le2017, McLennan2018}.
The allowance of slack in income \cite{Mas-Colell1992, EcheniqueMZ23, GargTV2024} immediately ensures that no agent lacks wealth.
Separately, \cite{ArrowD1954} assume the existence of a universally desired commodity that is owned in positive quantity by each agent\footnote{As \cite{ArrowD1954} work in an economy with production, the actual condition they give is the existence of a universally desired commodity and a universally productive commodity that is owned in positive quantity by each agent. In a pure exchange economy like ours, the latter condition means positive ownership of a universally desired good.}, and a line of work \cite{McKenzie1959, ArrowH1971, Moore1975, Eaves1985, Maxfield1997} sought to weaken this assumption as much as possible. 
A common thread in this line of work was the idea of ``irreducibility'': it should not be possible to partition the market into two groups such that one group is unable to supply any commodities that the other group wants yet wants commodities that the other group has.
The concept of quasi-equilibrium prevents those that want but cannot supply from consuming these goods by simply endowing them with no income.

The mathematically difficulty highlighted by \cite{Debreu1962} remains an obstacle for us.
The insight of our solution concept is essentially that a competitive equilibrium exists in the sub-market of agents with no income.
It turns out we are not the first to have this insight, although we did not become aware of this fact until later.
\cite{AbrahamSW1996} show that while competitive equilibria may not exist, there exists a sequence of prices with vanishing excess demand that approaches a quasi-equilibria.
We develop this insight into a solution concept that always exists and demonstrate its properties.
It should not come as a surprise that the proof of Main Result~\ref{MainResult:existence} appeals to this sequence of prices.
At a high level, we assign a lexicographic price to a commodity with vanishing price based on the rate at which it vanishes.
The number of distinct convergence rates determines the number of currencies in the LDE.

\subsection{Core Convergence}

It is well-known that the concept of the core \cite{VonNeumannM1944, Gillies1959, Aumann1961, Peleg1963} has a deep connection with the concept of competitive equilibrium in pure exchange economies.
\cite{Scarf1967} gives a sufficient condition for the weak core to be non-empty for a general $n$-person game without side payments, but it has been known since \cite{DebreuS1963} that under certain assumptions, the weak core of a pure exchange economy is always non-empty as it contains the set of competitive equilibria.
These assumptions include insatiability, strict convexity, and strictly positive endowments.
\cite{DebreuS1963} in fact show something even stronger under these assumptions: the weak core converges to the set of competitive equilibria as the economy is replicated.\footnote{A replica economy is one that contains the same number of duplicates of each original agent.}

The idea of core convergence dates back to \cite{Edgeworth1881} and gives weight to the idea of competitive equilibrium as a solution concept: competitive equilibria are the only outcomes that cannot be blocked by a coalition of agents in large markets.
\cite{Aumann1964} verified this thesis in a different way by demonstrating that the core coincides with the set of competitive equilibria when there is a continuum of agents.
Interestingly, the result of \cite{Aumann1964} does not require convexity but insatiability remains indispensable.
The generality of this result compared to that of \cite{DebreuS1963} motivated a long line of work to address this gap.
We refer the reader to \cite{Anderson1992} for an excellent survey.

% \cite{Conley1994} studies core convergence with satiable agents but in economies with public goods.
% He gives sufficient conditions for the core of replica economies to converge to the set of Lindahl equilibria \cite{Lindahl1958}.
% An important difference between economies with public goods and those without is that satiability in public goods helps convergence because it prevents the grand coalition from taking advantage of its cost advantage over sub-coalitions.
% In contrast, agents satiated in private goods are often members of weak blocking coalitions because they are willing to distribute their surplus to other members.

There is a more recent line of work \cite{Konovalov2005, MiyazakiT2012, MurakamiU2017} on core convergence in pure exchange economies with satiable agents.
\cite{Konovalov2005, MiyazakiT2012} give sufficient conditions for identifying the rejective core with the set of dividend equilibria \cite{AumannD1986} when there is a continuum of agents.
\cite{MurakamiU2017} give sufficient conditions for this characterization to hold for replica economies.
In general, a one-sided matching market with endowments may not satisfy any of the sufficient conditions considered by these works.
Our second main result is that the limit of the rejective core as the economy replicates \textit{always} coincides with the set of LDEs.

\begin{MainResult}[Informal, see Theorem~\ref{theorem:core-convergence}]
For all one-sided matching markets with endowments, the intersection over all replica economies of the rejective core coincides with the set of LDEs.
\end{MainResult}

Like much of the prior work, the proof makes use of ideas from \cite{DebreuS1963}.
We find a hyperplane that separates the set of preferred trades from the negative orthant, remove any positively priced items, and recurse.
Some care is needed to ensure that the recursion terminates.

\section{Preliminaries}

The problem of one-sided matching with endowments consists of $n$ agents and $n$ goods.
Each agent $i$ possesses an endowment $\omega_{ij} \in \RR_+$ of each good $j$, as well as a utility $u_{ij} \in \RR_+$.
Here, $\RR_+$ denotes the non-negative reals.
We assume $\sum_i \omega_{ij} = 1$ for all goods $j$, that is, all goods are owned by the agents.
Note that we do not assume that $\omega_i \gg 0$ nor even that $\omega_i \not= 0$.
We use $\omega_i \coloneqq (\omega_{ij})_j$ and $u_i \coloneqq (u_{ij})_j$ to denote agent $i$'s endowment and utilities, respectively, while $\omega \coloneqq (\omega_i)_i$ and $u \coloneqq (u_i)_i$ will denote the endowment and utility profiles of the economy, respectively.

We assume agents possess von Neumann-Morgenstern utility functions, so $(u, \omega)$ suffices to completely identify an economy.
Let $\Delta^n \coloneqq \{x \in \RR^n_+ : \sum_j x_j = 1\}$ and $\Delta^n_- \coloneqq \{x \in \RR^n_+ : \sum_j x_j \leq 1\}$ denote the $(n-1)$-dimensional simplex and its lower contour set (in the non-negative orthant), respectively.
Agent $i$'s utility for the lottery $x_i \in \Delta^n_-$ is then given by $u_i \cdot x_i$.
Agents with utility functions of this kind are also known as unit-demand.
We sometimes write $x_i \succeq_i y_i$ to indicate that $u_i \cdot x_i \geq u_i \cdot y_i$ and $x_i \succ_i y_i$ when the inequality is strict.
We say an agent is satiated at $x_i$ if there does not exist $y_i \in \Delta^n_-$ such that $y_i \succ_i x_i$.

The set of allocations is given by $\mathcal{M} \coloneqq \{x \in \RR^{n \times n}_+ : x_i \in \Delta^n \,\forall\, i \wedge x_j \in \Delta^n \,\forall\, j\}$.
An allocation is integral if $x_{ij} \in \{0, 1\}$ for all agents $i$ and goods $j$.
By the Birkhoff-von Neumann theorem, we can decompose any element of $\mathcal{M}$ into a distribution over integral allocations.

A competitive equilibrium is desirable in part because it is \textit{ex ante}, or fractionally, Pareto optimal, individually rational, and strong core stable.
An allocation $x \in \mathcal{M}$ is 
\begin{itemize}
    \item Fractionally Pareto optimal (fPO) if there does not exist $y \in \mathcal{M}$ such that $y_i \succeq_i x_i$ for all agents $i$ with strict preference for some agent.
    \item Individually rational (IR) if $x_i \succeq_i \omega_i$ for all agents $i$.
    \item Strong core stable if there does not exist a coalition $C$ of agents and consumptions $y \in \Delta^{n \times C}_-$ such that $\sum_{i \in C} (y_i - \omega_i) \leq 0$ and $y_i \succeq_i x_i$ for all  $i \in C$ with strict preference for some agent.
\end{itemize}
In other words, an allocation is strong core stable if no coalition is able to redistribute the goods it collectively owns to make some member better off without making the others worse off.
A coalition capable of this feat is said to weakly block the allocation.

However, it is known that in economies with satiation, a competitive equilibrium may not exist and the strong core may be empty because satiated agents are willing to redistribute their surplus and thus often find themselves members of weakly blocking coalitions.
The literature has proposed various alternative notions of stability to address the presence of these agents.

The notion of weak core stability simply excludes these agents from coalition membership by strengthening what it means to block an allocation.
A coalition $C$ strongly blocks an allocation $x$ if there exists consumptions $y \in \Delta^{n \times C}_-$ such that $\sum_{i \in C} (y_i - \omega_i) \leq 0$ and $y_i \succ_i x_i$ for all $i \in C$.
Note that a strong blocking coalition cannot contain satiated agents as there are no strictly preferred consumptions.
An allocation is weak core stable if it has no strong blocking coalitions.
A weak core allocation is necessarily individually rational.

In general, the weak core can be quite large \cite{AumannD1986}, and allocations in the weak core need not be efficient.
\cite{RothP1977} realized that this inefficiency stems from the fact that although no coalition can redistribute its collective \textit{endowment} to strictly improve, a coalition may wish to redistribute its collective \textit{allocation} to Pareto improve.
An allocation $x$ is stable if it is weak core stable with respect to the economy $(u, \omega)$ and strong core stable with respect to the economy $(u, x)$.\footnote{The definition given by \cite{RothP1977} only requires weak core stability with respect to the economy $(u, x)$ as stable allocations, as defined here, may not exist if allocations must be integral. If we dispense with this requirement, then one can hope for something stronger.}
Stable allocations are \textit{ex ante} Pareto optimal and weak core stable.

To characterize the set of dividend equilibrium \cite{AumannD1986}, \cite{Konovalov2005} introduces the notion of rejective core stability.

\begin{definition}[Rejective core]
A coalition $C$ rejects an allocation $x$ if there exist consumptions $y \in \Delta^{n \times C}_-$ and a partition $(C_1, C_2)$ of $C$ such that 
\begin{itemize}
    \item $\sum_{i \in C} y_i \leq \sum_{i \in C_1} \omega_i + \sum_{i \in C_2} x_i$
    \item $y_i \succ_i x_i$ for all $i \in C_1$
    \item $y_i \succeq_i x_i$ for all $i \in C_2$ with strict preference for some agent if $C_1 = \varnothing$.
\end{itemize}
An allocation is rejective core stable if no coalition rejects it.
\end{definition}

Intuitively, for any coalition that rejects an allocation $x$, there exists a sufficiently large (replica) economy in which it can realize the desired consumption $y$ without infringing on the individual rationality of the agents outside of the coalition (e.g., by giving some of them what they would have gotten in $x$ while leaving others with their endowments).\footnote{Technically, it may be impossible to realize the consumption $y$ without violating the individual rationality of one replica of each agent, but we digress as the purpose of the discussion is to provide intuition.}
An allocation is rejective core stable if no coalition rejects it.
The set of stable allocations contains the rejective core, but opposite is false.

\begin{table}
    \centering
    \subfloat[Utilities]{\begin{tabular}{c|cccc} 
        & $A$ & $B$ & $C$ & $D$ \\
        \hline
        1 & 0 & 0 & 1 & 1 \\
        2 & 2 & 0 & 1 & 0 \\
        3 & 0 & 1 & 2 & 0 \\
        4 & 0 & 0 & 0 & 1 \\
    \end{tabular}}
    \quad\quad\quad\quad
    \subfloat[Endowments]{\begin{tabular}{c|cccc} 
        & $A$ & $B$ & $C$ & $D$ \\
        \hline
        1 & $1/2$ & 0 & 0 & $1/2$\\
        2 & 0 & 0 & 1 & 0 \\
        3 & 0 & 0 & $1/2$ & $1/2$ \\
        4 & 0 & $1/2$ & 0 & $1/2$
    \end{tabular}}
    \quad\quad\quad\quad
    \subfloat[Allocation]{\begin{tabular}{c|cccc} 
        & $A$ & $B$ & $C$ & $D$ \\
        \hline
        1 & 0 & 0 & $1/2$ & $1/2$ \\
        2 & $1/2$ & 0 & $1/2$ & 0 \\
        3 & 0 & $1/2$ & $1/2$ & 0 \\
        4 & 0 & 0 & 0 & 1
    \end{tabular}}
    \caption{A stable allocation that is not rejective core stable}
    \label{table:stable-not-rejective-core-example}
\end{table}

We give an example demonstrating this fact in Table~\ref{table:stable-not-rejective-core-example}.
The example consists of agents $\{1,2,3,4\}$ and goods $\{A, B, C, D\}$.
Each of $A$ and $B$ exist in a $1/2$ share, while each of $C$ and $D$ exist in a $3/2$ share.
The economy can be made integral by duplicating the market if one wishes.
The allocation given in the example clearly resides in the weak core with respect to the economy $(u, \omega)$: all but agent 3 cannot be made happier, so a strong blocking coalition cannot include them, and agent 3 cannot strictly improve her standing by her lonesome.
It is also strong core with respect to the economy $(u, x)$ since to make agent 3 happier, either agent 1 or agent 2 needs to give up some of their shares of $C$.
However, there are no shares of $A$ left to compensate agent 2 with, as we already allocated all shares of $A$ to her, and to compensate agent 3, agent 4 would need to relinquish some of her shares of $D$, which would make her strictly worse off.
However, note that the coalition consisting of agent 1 with her allocation and agent 3 with her endowment rejects the allocation.

\section{Lexicographic Dividend Equilibrium}

Two obstacles prevent the existence of a competitive equilibrium in economies with satiation.
The first is that a competitive equilibrium requires no surplus but a satiated agent may spend strictly less than the value of her endowment.
\cite{HyllandZ1979} give the example in Table~\ref{table:HZ-example} to illustrate this point.
The example consists of agents 1, 2, and 3 and goods $A$ and $B$.
Agents 1 and 2 prefer good $A$, while agent 3 prefers good $B$.
Each agent is endowed with a $1/3$ share of $A$ and a $2/3$ share of $B$, so there is one unit of $A$ and two units of $B$ in the economy.
We sketch a proof of the non-existence of a competitive equilibrium (see \cite{EcheniqueMZ23} for a formal proof).

\begin{table}
    \centering
    \subfloat[Utilities]{\begin{tabular}{c|cc} 
        & $A$ & $B$ \\
        \hline
        1 & 2 & 1 \\
        2 & 2 & 1 \\
        3 & 0 & 1
    \end{tabular}}
    \quad\quad\quad\quad
    \subfloat[Endowments]{\begin{tabular}{c|cc} 
        & $A$ & $B$ \\
        \hline
        1 & $1/3$ & $2/3$ \\
        2 & $1/3$ & $2/3$ \\
        3 & $1/3$ & $2/3$
    \end{tabular}}
    \quad\quad\quad\quad
    \subfloat[Allocation]{\begin{tabular}{c|cc} 
        & $A$ & $B$ \\
        \hline
        1 & $1/2$ & $1/2$ \\
        2 & $1/2$ & $1/2$ \\
        3 & 0 & 1
    \end{tabular}}
    \caption{Example from \cite{HyllandZ1979}}
    \label{table:HZ-example}
\end{table}

A competitive equilibrium must allocate one unit of $B$ to agent 3 as anything else is inefficient.
Moreover, since agents 1 and 2 have the same endowment and hence equal income, they each must receive a $1/2$ share of $A$ and a $1/2$ share of $B$ in a competitive equilibrium.
If the price of $A$ exceeds the price of $B$, then agents 1 and 2 cannot afford their bundles.
On the other hand, if the price of $B$ exceeds the price of $A$, then agent 3 cannot afford her bundle.
But if the price of the two commodities coincide, then agents 1 and 2 would each attempt to consume an entire unit of $A$.

As noted by \cite{HyllandZ1979}, the allocation \textit{should} be $(1/2, 1/2)$ for agents 1 and 2 and $(0,1)$ for agent 3, and the market clearing prices \textit{should} be $(2,0)$.
However, as mentioned earlier, the issue with this assignment of prices is that agents 1 and 2 lack the income to purchase the lotteries they should receive.
Meanwhile, agent 3 squats on $2/3$ units of surplus.
The dividend equilibrium of \cite{AumannD1986} addresses this issue by redistributing this surplus to agents 1 and 2: by endowing each of them with an additional $1/3$ unit of income, they can now afford their allocations.
We give a definition of dividend equilibrium specialized to our setting.

\begin{definition}[Dividend equilibrium \cite{AumannD1986}]
A dividend equilibrium of the economy $(u, \omega)$ consists of an allocation $x \in \mathcal{M}$, prices $p \in \RR^n_+$, and dividends $\alpha \in \RR^n_+$ such that $y_i \succ_i x_i$ implies that $p \cdot y > p \cdot \omega_i + \alpha_i$ for all agents $i$.
\end{definition}

There is no guarantee that a dividend equilibrium exists.
Consider the example in Table~\ref{table:our-example}. 
Individual rationality requires that agent 1 receives at least a $1/2$ share of $A$ and that agent 2 receives at least a $1/2 - O(\delta)$ share.
These constraints preclude the existence of a satiated agent at an equilibrium, so there is no surplus to redistribute, and we return to looking for a competitive equilibrium.
In a competitive equilibrium, the price of $A$ must exceed the price of $B$ as otherwise, agents 1 and 2 would each attempt to consume an entire unit of $A$.
Similarly, the price of $B$ must exceed the price of $C$ as otherwise, agent 3 would attempt to consume an entire unit of $B$, of which agent 1 needs at least a $1/2 - O(\delta)$ share if we wish to respect individual rationality.
However, these constraints on the prices imply excess demand for $A$ as agent 2 will consume more shares of this good than agent 1: the two agents have the same endowment and hence equal income in the absence of surplus, but whereas agent 1 will spend her income on $A$ and $B$, agent 2 will spend her income on $A$ and $C$, a strictly cheaper good than $B$.

\begin{table}
    \centering
    \subfloat[Utilities]{\begin{tabular}{c|ccc} 
        & $A$ & $B$ & $C$ \\
        \hline
        1 & 2 & 1 & 0 \\
        2 & 2 & 1 & $1 + \delta$ \\
        3 & 0 & 1 & 0
    \end{tabular}}
    \quad\quad\quad\quad
    \subfloat[Endowments]{\begin{tabular}{c|ccc} 
        & $A$ & $B$ & $C$ \\
        \hline
        1 & $1/2$ & $1/2$ & 0 \\
        2 & $1/2$ & $1/2$ & 0 \\
        3 & 0 & 0 & 1 
    \end{tabular}}
    \quad\quad\quad\quad
    \subfloat[Allocation]{\begin{tabular}{c|ccc} 
        & $A$ & $B$ & $C$ \\
        \hline
        1 & $1/2$ & $1/2$ & 0 \\
        2 & $1/2$ & 0 & $1/2$ \\
        3 & 0 & $1/2$ & $1/2$ 
    \end{tabular}}
    \caption{Example with an infinitely more valuable good}
    \label{table:our-example}
\end{table}

The example in Table~\ref{table:our-example} highlights the second obstacle: the presence of goods that are ``infinitely more valuable.''
In the example, $A$ is infinitely more valuable than $B$ and $C$: the unique quasi-equilibrium consists of the allocation in Table~\ref{table:our-example} and the prices $(1,0,0)$.
This suggests that if we want to price the trade of $B$ and $C$ between agents 2 and 3, then we must do so in an infinitely less valuable currency than the currency in which $A$ is priced.

Consider the following pricing scheme. 
\[
    p_A = \begin{pmatrix}
        1 \\
        0
    \end{pmatrix}, 
    p_B = \begin{pmatrix}
        0 \\
        1
    \end{pmatrix}, 
    p_C = \begin{pmatrix}
        0 \\
        0
    \end{pmatrix}
\]
Let $P = (p_A, p_B, p_C)$ denote the 2-by-3 matrix whose columns are given by the vectors above.
Let $p^{(1)}$ denote the first row of $P$ and $p^{(2)}$ the second.
At these ``prices,'' the worth of each agent's endowment is as follows.
\[
    P \omega_1 = P \omega_2 = \begin{pmatrix}
        1/2 \\
        1/2
    \end{pmatrix}, 
    P \omega_3 = \begin{pmatrix}
        0 \\
        0
    \end{pmatrix}
\]
Now, consider the following dividends
\[
    \alpha_1 = \alpha_2 = \begin{pmatrix}
        0 \\
        0
    \end{pmatrix}, 
    \alpha_3 = \begin{pmatrix}
        0 \\
        1/2
    \end{pmatrix}
\]
and suppose we allowed agent $i$ to purchase any lottery $x$ in the closure\footnote{We take the closure because an agent could always exchange $\varepsilon$ units of the first currency for infinite units of the second.} of $\{x \in \Delta^3_- : P x \leq_{\mathrm{lex}} P \omega_i + \alpha_i\}$ where $\leq_{\mathrm{lex}}$ denotes the lexicographic ordering on $\RR^3$.
Since agents 1 and 2 possess positive wealth in the first currency, their budget sets are as follows.
\[
    C^\alpha_1(P) = C^\alpha_2(P) = \{x \in \Delta^3_- : p^{(1)} \cdot x \leq p^{(1)} \cdot \omega_1 + \alpha^{(1)}_1\} = \{x \in \Delta^3_- : x_A \leq 1/2\}
\]
Here, we have used the fact that agents 1 and 2 have the same endowment and dividend.
Meanwhile, since agent 3 has no income in the first currency, her budget set is the following.
\[
    C^\alpha_3(P) = \{x \in \Delta^3_- : p^{(1)} \cdot x \leq p^{(1)} \cdot \omega_3 + \alpha^{(1)}_3 \wedge p^{(2)} \cdot x \leq p^{(2)} \cdot \omega_3 + \alpha^{(2)}_3\} = \{x \in \Delta^3_- : x_A = 0, x_B \leq 1/2\}
\]
With these budget sets, each agent's favorite lottery is precisely the one she receives under the allocation in Table~\ref{table:our-example}, so this allocation with prices $P$ and dividends $\alpha$ clears the market.
We extend this idea into a general solution concept for the problem of one-sided matching with endowments.
We call this solution concept a lexicographic dividend equilibrium (LDE).

We make two observations regarding the equilibrium for the example in Table~\ref{table:our-example}.
The first observation is that because agents 1 and 2 own positive wealth in the first currency, their budget set does not restrict their consumption of $B$ and $C$ \textit{at all}, despite the fact that $B$ has a positive price in the second currency.
In other words, these two goods appear free to agents 1 and 2, so they face the same optimization problem they did at the unique quasi-equilibrium prices.
The second observation is that surplus exists in the second currency, despite the fact that no agent is satiated.
This is because agent 2 is satiated with respect to the second currency: she wants more of $A$ but cannot use her surplus in the second currency make this purchase as $A$ is infinitely more expensive.
As a consequence, we can redistribute this surplus to agent 3 so that she can trade half of her share of $C$ for the entirety of agent 2's share of $B$.

\begin{definition}[Lexicographic dividend equilibrium]
A $d$-dimensional lexicographic dividend equilibrium (LDE) for the economy $(u, \omega)$ consists of an allocation $x \in \mathcal{M}$, a price matrix $p \in \RR^{d \times n}$, and a dividend matrix $\alpha \in \RR^{d \times n}_+$ such that
\begin{itemize}
    \item The existence of $k \in [d]$ such that $p^{(k)}_j < 0$ implies that $p^{(\ell)}_j > 0$ for some $\ell \in [k-1]$.
        That is, the first non-zero entry in $p_j$ is positive for all items $j \in [m]$.
    \item $\alpha^{(k)}_i = \max\{p^{(k)} \cdot (x_i - \omega_i), 0\}$ for all agents $i \in [n]$ and $k \in [d]$, .
    \item $x_i \succeq_i y_i$ for all agents $i \in [n]$ and $y_i \in C^\alpha_i(p)$.
\end{itemize}
where
\begin{align*}
    C^\alpha_i(p) 
        &\coloneqq \{y_i \in \Delta^n_- : p^{(k)} \cdot y_i \leq p^{(k)} \cdot \omega_i + \alpha^{(k)}_i \,\forall\, \ell \in [k_i]\} \\
    k_i 
        &\coloneqq \min \{k \in [d] : p^{(k)} \cdot \omega_i + \alpha^{(k)}_i > 0\} \cup \{d\}
\end{align*}
Additionally, an LDE $(x, p, \alpha)$ satisfies 
\begin{itemize}
    \item Simple prices if $\lvert \mathrm{supp}(p_j) \rvert \leq 1$ for all items $j$.
    \item Non-negative prices if $p \in \RR^{d \times n}_+$.
    Note that simple prices imply non-negative prices since the first non-zero entry in $p_j$ is positive for all items $j$.
    \item The strong cheapest bundle property if there does not exist an agent $i$, a bundle $y_i \succeq_i x_i$, and $k \in [d]$ such that $p^{(\ell)} \cdot y_i = p^{(\ell)} \cdot x_i$ for all $\ell \in [k-1]$ and $p^{(k)} \cdot y_i < p^{(k)} \cdot x_i$.
    \item The weak cheapest bundle property if there does not exist an agent $i$, a bundle $y_i \succeq_i x_i$, and $k \in [k_i]$ such that $p^{(\ell)} \cdot y_i = p^{(\ell)} \cdot x_i$ for all $\ell \in [k-1]$ and $p^{(k)} \cdot y_i < p^{(k)} \cdot x_i$.
    \item The aggregate cheapest bundle property if there does not exist $\beta \in \RR^n_+$, a profile of bundles $y \in \Delta^{n \times n}_-$, and $k \in [d]$ such that 
    \begin{itemize}
        \item $y_i \succeq_i x_i$ for all agents $i \in \supp{\beta}$
        \item $p^{(\ell)} \cdot \sum_i \beta_i y_i = p^{(\ell)} \cdot \sum_i \beta_i x_i$ for all $\ell \in [k-1]$ and $p^{(k)} \cdot \sum_i \beta_i  y_i < p^{(k)} \cdot \sum_i \beta_i x_i$
        \item $\sum_i \beta_i y_{ij} \leq \sum_i \beta_i x_{ij}$ for all items $j$ such that $p^{(\ell)}_j > 0$ for some $\ell \in [k-1]$.
    \end{itemize}
\end{itemize}
\end{definition}

Several remarks are in order.
First, when $d = 1$, the definition of an LDE coincides with that of a dividend equilibrium.
Second, $\sum_i \alpha^{(k)}_i$ is in fact the unspent income in currency $k$ since $\sum_i x_i = \sum_i \omega_i$.
However, unlike \cite{AumannD1986, McLennan2018}, this unspent income need not flow from satiated agents to unsatiated agents.
There are no satiated agents in the LDE for the example in Table~\ref{table:our-example}, yet there exists a positive surplus in the second currency.
In general, the surplus in currency $k$ need not even flow from agents with positive wealth in higher currencies (who have no use for this currency) to those without (who do have use for this currency). 
See Table~\ref{table:d=3-example} for an example where two agents (agents 2 and 3) with positive endowments in the first currency exchange their surplus in the second and third currencies at the LDE.

\begin{table}
    \centering
    \subfloat[Utilities]{\begin{tabular}{c|cccc} 
        & $A$ & $B$ & $C$ & $D$ \\
        \hline
        1 & 2 & 1 & 0 & 0 \\
        2 & 2 & 1 & $1 + \delta$ & 0 \\
        3 & 2 & $1 + \delta$ & 1 & 0 \\
        4 & 2 & 0 & 1 & $1 + \delta$ \\
        5 & 0 & 0 & 1 & 0 \\
        6 & 0 & 1 & 0 & 0
    \end{tabular}}
    \quad\quad
    \subfloat[Endowments]{\begin{tabular}{c|cccc} 
        & $A$ & $B$ & $C$ & $D$ \\
        \hline
        1 & $1/2$ & $1/2$ & 0 & 0 \\
        2 & $1/2$ & $1/2$ & 0 & 0 \\
        3 & $1/2$ & 0 & $1/2$ & 0 \\
        4 & $1/2$ & 0 & $1/2$ & 0 \\
        5 & 0 & 0 & 0 & 1 \\
        6 & 0 & 0 & 0 & 1
    \end{tabular}}
    \quad\quad
    \subfloat[Allocation]{\begin{tabular}{c|cccc} 
        & $A$ & $B$ & $C$ & $D$ \\
        \hline
        1 & $1/2$ & $1/2$ & 0 & 0 \\
        2 & $1/2$ & 0 & $1/2$ & 0 \\
        3 & $1/2$ & $1/2$ & 0 & 0 \\
        4 & $1/2$ & 0 & 0 & $1/2$ \\
        5 & 0 & 0 & $1/2$ & $1/2$ \\
        6 & 0 & 0 & 0 & 1
    \end{tabular}}
    \quad\quad
    \subfloat[Prices]{\begin{tabular}{c|cccc} 
        $k$ & $A$ & $B$ & $C$ & $D$ \\
        \hline
        1 & 1 & 0 & 0 & 0 \\
        2 & 0 & 1 & 0 & 0 \\
        3 & 0 & 0 & 1 & 0
    \end{tabular}}
    \quad\quad\quad\quad
    \subfloat[Dividends]{\begin{tabular}{c|cccccc} 
        $k$ & 1 & 2 & 3 & 4 & 5 & 6 \\
        \hline
        1 & 0 & 0       & 0 & 0 & 0     & 0 \\
        2 & 0 & 0       & $1/2$     & 0 & 0     & 0 \\
        3 & 0 & $1/2$   & 0     & 0 & $1/2$ & 0
    \end{tabular}}
    \caption{Example with $d = 3$}
    \label{table:d=3-example}
\end{table}

It is well-known that the first welfare theorem need not hold, that is, the allocation may be inefficient, without some sort of cheapest bundle property \cite{Eisenberg1961, HyllandZ1979, Mas-Colell1992, BudishCKM2013, Le2017, EcheniqueMZ2021Constrained, EcheniqueMZ23}.
We give three cheapest bundles properties.
The strong cheapest bundle property requires that a weakly preferred bundle that costs lexicographically less does not exist and implies the other two properties.
The weak cheapest bundle property only requires that there does not exist a weakly preferred bundle that costs strictly less in the most valuable currency that an agent owns a positive quantity of.
The existence of a weakly preferred bundle that costs the same in this currency but strictly less in a lower currency violates the strong cheapest bundle property but not its weak counterpart.
Meanwhile, the aggregate cheapest bundle property requires that no group of agents can redistribute their collective allocation and spend lexicographically less without making some member worse off, even in larger economies.
The weak and aggregate cheapest bundle properties are disjoint notions.
On its own, the strong cheapest bundle property suffices for the first welfare theorem to hold.
In fact, it suffices for an LDE to reside within the rejective core.
We will see that the weak and aggregate versions, together with the simple prices property, serve the same function.

Our first main result is that an LDE satisfying the strong cheapest bundle property always exists for any one-sided matching market with endowments.

\begin{restatable}{theorem}{LDEexistence}\label{theorem:LDE-existence}
For all economies $(u, \omega)$, an LDE satisfying the strong cheapest bundle property exists.
\end{restatable}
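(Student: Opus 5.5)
We construct an LDE as the limit of a sequence of perturbed economies that admit ordinary equilibria with strictly positive incomes, and we read off the lexicographic price structure from the rates at which prices vanish; this is the approach hinted at in the introduction via \cite{AbrahamSW1996}.

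\textbf{Step 1: perturbed equilibria.} For $\varepsilon>0$, consider the economy in which every agent's endowment is replaced by $\omega_i^\varepsilon = (1-\varepsilon)\omega_i + \varepsilon \mathbf{1}/n$. Every agent then has positive wealth at any nonzero nonnegative price vector. In this economy we invoke a Hylland--Zeckhauser style fixed-point argument to obtain a dividend equilibrium $(x^\varepsilon, p^\varepsilon, \alpha^\varepsilon)$. Concretely, we use a Kakutani map on normalized prices in $\Delta^n$, where each agent is assigned her cheapest utility-maximizing bundle in her budget set, and dividends equal unspent income redistributed via $\alpha_i=\max\{p\cdot(x_i-\omega_i),0\}$. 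Strict positivity of income gives upper hemicontinuity, which sidesteps the Debreu difficulty. The result is an equilibrium satisfying the one-dimensional cheapest bundle property.

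\textbf{Step 2: extracting lexicographic prices.} Take a sequence $\varepsilon_t\to 0$ along which $x^{\varepsilon_t}\to x$. By passing to further subsequences, we build an orthonormal-type expansion of the prices:
\begin{itemize}
\item Take $p^{(1)}=\lim p^{\varepsilon_t}$.
\item Project the residual $p^{\varepsilon_t}-p^{(1)}$ (suitably rescaled to unit norm) and take a further limit $p^{(2)}$.
\item Continue recursively.
\end{itemize}
Each new limit is linearly independent of the earlier ones, so the process stops after at most $n$ steps, giving $d\le n$. This yields $p^{\varepsilon_t} = \sum_k c_t^{(k)} p^{(k)} + o(c_t^{(d)})$ with $c_t^{(1)}=1$ and $c_t^{(k+1)}/c_t^{(k)}\to 0$. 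Since $p^{\varepsilon_t}\ge 0$, the first nonzero entry of each column $p_j$ is positive. The dividends $\alpha^{(k)}_i$ are then defined exactly as the definition requires, from $x$ and $p$.

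\textbf{Step 3: verifying the equilibrium conditions.} Lexicographic comparisons of $Py$ versus $P\omega_i+\alpha_i$ are the limits of sign comparisons of $p^{\varepsilon_t}\cdot y$ versus $p^{\varepsilon_t}\cdot\omega_i^{\varepsilon_t}+\alpha_i^{\varepsilon_t}$. Given $y\in C^\alpha_i(p)$ with $y\succ_i x_i$, we perturb $y$ slightly toward a cheap bundle to get $y'\succ_i x_i$ strictly within budget at the level $k_i$. At level $k_i$ the agent has positive wealth, so a slack of order $c_t^{(k_i)}$ absorbs all lower-order terms. Then $y'$ is affordable at $p^{\varepsilon_t}$ for large $t$ while being preferred to $x_i^{\varepsilon_t}$, which is a contradiction. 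The strong cheapest bundle property follows similarly: a weakly preferred bundle that is lexicographically cheaper would, at scale $c_t^{(k)}$, be strictly cheaper at $p^{\varepsilon_t}$, contradicting the cheapest bundle property of the perturbed equilibria after a small utility-preserving perturbation.

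\textbf{Main obstacle.} The hardest step is the \emph{exact} accounting of income at each currency level $k$. We must show that $p^{(k)}\cdot x_i \le p^{(k)}\cdot\omega_i+\alpha_i^{(k)}$ holds lexicographically up to level $k_i$. This requires that the dividends of the perturbed economy, rescaled by $c_t^{(k)}$, converge compatibly with the definition $\alpha^{(k)}_i=\max\{p^{(k)}\cdot(x_i-\omega_i),0\}$. The $\varepsilon$-perturbation of endowments must also be shown to be negligible relative to every $c_t^{(k)}$. I would first try to handle this by choosing $\varepsilon_t$ decaying faster than all the $c^{(k)}_t$, using a diagonal argument. If that fails, I would instead redefine the perturbed dividends directly at each scale.
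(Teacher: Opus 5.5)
\paragraph{Where the argument breaks.} Your architecture is the paper's: perturb to $\omega^\varepsilon_i=(1-\varepsilon)\omega_i+\frac{\varepsilon}{n}\mathbf{1}$, take one-dimensional equilibria with the cheapest-bundle property, split the prices by their rates of vanishing, and derive contradictions at the scale $c^{(k)}_t$. Defining $\alpha^{(k)}_i=\max\{p^{(k)}\cdot(x_i-\omega_i),0\}$ directly from $(x,p)$ and keeping every level is also legitimate.

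But with that choice, the inequality you call the main obstacle, $p^{(k)}\cdot x_i\le p^{(k)}\cdot\omega_i+\alpha^{(k)}_i$, is trivially true. The step that does need an argument is that the shrunken $y'$ is affordable at $p^{\varepsilon_t}$. You leave it unjustified, and your planned justification cannot work:
\begin{itemize}
\item The perturbation is not negligible relative to the $c^{(k)}_t$, and there is nothing to diagonalize over, since the $c^{(k)}_t$ are read off from the equilibrium prices of the $\varepsilon_t$-economy itself.
\item In Table~\ref{table:eps-economy}, $p^\varepsilon=(1,4\varepsilon,0)$, and agent 3's entire income, which buys her half a unit of $B$, comes from the perturbation. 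The paper embraces this: it stops at the first level $m$ at which $(\frac{\varepsilon}{n}\sum_j p^\varepsilon_j+\alpha^\varepsilon)/C^\varepsilon_m$ has a positive limit, and uses that limit as the last-currency dividend.
\item The rescaled perturbed dividends need not converge to your $\alpha^{(k)}_i$.
\end{itemize}

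What you actually need is a one-sided bound. Let $W=\max\{p^{(k_i)}\cdot x_i,p^{(k_i)}\cdot\omega_i\}>0$. Then $x_i$, $\omega_i$ and $y$ are supported on items whose first nonzero price level is at least $k_i$. Since $p^{\varepsilon_t}\ge 0$ and $x^{\varepsilon_t}_i$ is affordable, the perturbed income is at least $\max\{p^{\varepsilon_t}\cdot x^{\varepsilon_t}_i,(1-\varepsilon_t)p^{\varepsilon_t}\cdot\omega_i\}$, whose ratio to $c^{(k_i)}_t$ has $\liminf$ at least $W$. Meanwhile $p^{\varepsilon_t}\cdot y'/c^{(k_i)}_t\to p^{(k_i)}\cdot y'<W$. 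Any additional perturbed income only helps.

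\paragraph{Further repairs.} Three more points need fixing.
\begin{enumerate}
\item Agents with no wealth at any level are not covered by your positive-wealth argument. Their budget consists of items whose entire price column is zero, so they need the decomposition to be exact, with $p^{\varepsilon_t}_j=0$ on those items. Drop the $o(c^{(d)}_t)$ remainder and run the orthogonal-projection recursion until the residual vanishes identically; linear independence guarantees this within $n$ steps. With that change your expansion is a fine substitute for the paper's item-representative construction.
\item For the strong cheapest bundle property, you cannot compare a weakly preferred, lexicographically cheaper $x'$ directly with $x^{\varepsilon_t}_i$. The reason is that $u_i\cdot x^{\varepsilon_t}_i$ may exceed $u_i\cdot x_i$, and $p^{\varepsilon_t}\cdot(x^{\varepsilon_t}_i-x_i)$ may dominate $c^{(k)}_t$. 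The paper translates instead: $x^{\varepsilon_t}_i-(x_i-x')/2$ is feasible for large $t$, has utility at least $u_i\cdot x^{\varepsilon_t}_i$ by linearity, and is strictly cheaper at $p^{\varepsilon_t}$.
\item Your Step 1 fixed point is not well posed. The rule $\alpha_i=\max\{p\cdot(x_i-\omega_i),0\}$ lets every agent afford her own choice, so no Walras-type identity forces market clearing. You need an aggregate slack, such as a uniform dividend that is lowered afterwards. The paper obtains this, together with the cheapest-bundle property, from VCG prices and a Brouwer argument on welfare weights (Theorem~\ref{theorem:VCG-LDE}), which is a substantial part of its proof.
\end{enumerate}
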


We defer the proof to the appendix but offer some intuition using the example in Table~\ref{table:our-example}.
Recall that the price and dividend matrices of the LDE are given by
\[
    p_A = \begin{pmatrix}
        1 \\
        0
    \end{pmatrix}, 
    p_B = \begin{pmatrix}
        0 \\
        1
    \end{pmatrix}, 
    p_C = \begin{pmatrix}
        0 \\
        0
    \end{pmatrix},
    \alpha_1 = \alpha_2 = \begin{pmatrix}
        0 \\
        0
    \end{pmatrix}, 
    \alpha_3 = \begin{pmatrix}
        0 \\
        1/2
    \end{pmatrix}.
\]
The main idea of the proof is to examine a sequence of economies $(u, \omega^\varepsilon)_\varepsilon$ that approach the economy $(u, \omega)$.
In the economy $(u, \omega^\varepsilon)$, we have redistributed an $\varepsilon$ fraction of the goods so that a 1-dimensional LDE exists.
In the proof, we redistribute an $\varepsilon / n$ fraction of each good, but to keep the analysis here simple, we will instead exchange an $\varepsilon$ share of $A$ from each of agents 1 and 2 with a $2 \varepsilon$ share of $C$ from agent 3 as described in Table~\ref{table:eps-economy}.
Note that
\[
    \omega^\varepsilon = (1 - 2\varepsilon) \cdot \omega + 2\varepsilon \cdot \begin{pmatrix}
        0 & 1/2 & 1/2 \\
        0 & 1/2 & 1/2 \\
        1 & 0 & 0
    \end{pmatrix}
\]
For all $\varepsilon > 0$, the allocation $x^\varepsilon$ in Table~\ref{table:eps-economy} and the prices $p^\varepsilon = (1, 4\varepsilon, 0)$ constitute a competitive equilibrium.\footnote{This is another sense in which $A$ is infinitely more valuable than $B$ and $C$: simply owning any positive share of this good enables the existence of a dividend equilibrium. In this sense, $A$ acts as a medium of exchange.}
Note that $x^\varepsilon$ converges to the LDE allocation $x$ given in Table~\ref{table:our-example}.

\begin{table}
    \centering
    \subfloat[Utilities]{\begin{tabular}{c|ccc} 
        & $A$ & $B$ & $C$ \\
        \hline
        1 & 2 & 1 & 0 \\
        2 & 2 & 1 & $1 + \delta$ \\
        3 & 0 & 1 & 0
    \end{tabular}}
    \quad\quad\quad\quad
    \subfloat[Endowments]{\begin{tabular}{c|ccc} 
        & $A$ & $B$ & $C$ \\
        \hline
        1 & $1/2 - \varepsilon$ & $1/2$ & $\varepsilon$ \\
        2 & $1/2 - \varepsilon$ & $1/2$ & $\varepsilon$ \\
        3 & $2\varepsilon$ & 0 & $1-2\varepsilon$ 
    \end{tabular}}
    \quad\quad\quad\quad
    \subfloat[Allocation]{\begin{tabular}{c|ccc} 
        & $A$ & $B$ & $C$ \\
        \hline
        1 & $1/2 - \varepsilon$ & $1/2$ & $\varepsilon$ \\
        2 & $1/2 + \varepsilon$ & 0 & $1/2 - \varepsilon$ \\
        3 & 0 & $1/2$ & $1/2$ 
    \end{tabular}}
    \caption{The prices $p^\varepsilon = (1, 4\varepsilon, 0)$ clear the economy $(u, \omega^\varepsilon)$}
    \label{table:eps-economy}
\end{table}

To recover the price and dividend matrices of the LDE, we examine the order of the entries in $p^\varepsilon$ and elect a representative commodity from each order.
In this case, we see that $p^\varepsilon_A$ is $\Omega(1)$ while $p^\varepsilon_B$ and $p^\varepsilon_C$ are $O(\varepsilon)$.
Our proof would elect $A$ and $B$ as the representatives of their respective orders.
The first row of the price matrix $p^{(1)}$ is then defined as
\[
    p^{(1)} = \lim_{\varepsilon \to 0} p^\varepsilon / p^\varepsilon_A = (1, 0, 0)
\]
while the second row is defined as
\[
    p^{(2)} = \lim_{\varepsilon \to 0} (p^\varepsilon - p^{(1)}) / p^\varepsilon_B = (0, 1, 0)
\]
In other words, we essentially decompose $p^\varepsilon$ into its principal components.
Note that this process recovers the price matrix of the LDE.
We recover the dividend matrix as follows.
For all but the last currency, the entries of the $k$-th row of the dividend matrix are simply taken to be 
\[
    \alpha^{(k)}_i = \max\{p^{(k)} \cdot (x_i - \omega_i), 0\},
\]
while the last row of the dividend matrix is instead defined as
\[
    \alpha^{(2)} = \lim_{\varepsilon \to 0} \, (\omega^\varepsilon - (1 - 2\varepsilon) \cdot \omega) \, p^\varepsilon / p^\varepsilon_B = (0, 0, 1/2)
\]
That is, the worth of the $\varepsilon$ redistribution with respect to the representative of the last currency determines the surplus in this currency.\footnote{Because in the proof, we redistribute a $\varepsilon / n$ share of each good to every agent, each agent will have a positive dividend in the last currency. This dividend technically does not satisfy the condition of an LDE.
However, some agents have no use for extra income in the last currency, so we can lower their dividends so that the required condition is satisfied.}
We then use the fact that $(x^\varepsilon, p^\varepsilon)$ constitutes a 1-dimensional LDE in the economy $(u, \omega^\varepsilon)$ to argue that $(x, p, \alpha)$ constitutes an LDE in the economy $(u, \omega)$.
Some additional care is required to guarantee the strong cheapest bundle property.

\section{Core Convergence}

A compelling argument for the notion of competitive equilibrium as a solution concept is that competitive allocations are the only viable outcomes in a very strong sense: they are the only outcomes that a coalition cannot weakly block as the economy grows.
\cite{DebreuS1963} formalize this idea using replica economies.
We demonstrate a similar result for the notion of lexicographic dividend equilibrium in one-sided matching markets with endowments.
More specifically, we show that LDEs are the only outcomes that a coalition cannot reject in a large market.

\begin{definition}[Replica economy]
The $N$-fold replica economy of $(u, \omega)$ is defined as the economy $(u^{(m)}, \omega^{(m)})_{m \in [N]}$, that is, the economy with $N$ replicas of each agent in $(u, \omega)$.
For notational simplicity, we simply denote the $N$-fold replica economy as $(u, \omega)_N$.
We define $LDE_+(u, \omega)$ to be the set of LDEs in $(u, \omega)$ satisfying the strong cheapest bundle property and $RC(u, \omega, N)$ to be the set of rejective core stable allocations in $(u, \omega)_N$ that allocate the same bundle to each replica of each agent.
\end{definition}

Our definition for the set of core allocations in a replica economy slightly departs from the conventions established by \cite{DebreuS1963}.
Their assumptions of insatiability and strict convexity imply that elements of the weak core always allocate each replica of an agent the same bundle.
Because we have neither of these assumptions, it is possible for an agent's replicas to receive distinct bundles, in which case the elements of $RC(u, \omega, N)$ would reside in a higher dimension than the elements of $LDE_+(u, \omega)$, and the two sets would be incomparable.
Our second main result is that the set of LDEs satisfying the strong cheapest bundle property coincides with the set of rejective core stable allocations in large economies.

\begin{theorem}\label{theorem:core-convergence}
$LDE_+(u, \omega) = \bigcap_{N=1}^\infty RC(u, \omega, N)$ for all economies $(u, \omega)$.
\end{theorem}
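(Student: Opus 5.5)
We argue the two inclusions separately. Throughout, an LDE $(x,p,\alpha)$ of $(u,\omega)$ induces an LDE of $(u,\omega)_N$ by giving every replica of agent $i$ the bundle $x_i$ and dividend $\alpha_i$; the strong cheapest bundle property is preserved, since it is a statement about individual agents.

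\textbf{The inclusion $LDE_+(u,\omega)\subseteq\bigcap_N RC(u,\omega,N)$.} By the replication remark, it suffices to apply Theorem~\ref{theorem:rejective-core-contains-LDEs} to the replicated LDE: an LDE with the strong cheapest bundle property is rejective core stable in $(u,\omega)_N$ for every $N$. Its allocation gives identical bundles to replicas, so it lies in $RC(u,\omega,N)$.

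If we prove that theorem directly, the argument is a first-welfare style one. Suppose $(C_1,C_2,y)$ rejects $x$, and assign each member the most valuable currency index in which she has positive wealth.
\begin{itemize}
\item A member $i\in C_1$ strictly prefers $y_i$. Since $x_i$ is optimal in the closed budget set $C^\alpha_i(p)$, we get that $p y_i$ lexicographically exceeds $p\omega_i+\alpha_i\ge_{\mathrm{lex}} p\omega_i$ on the first $k_i$ coordinates, and hence $p y_i >_{\mathrm{lex}} p\omega_i$ after truncation.
\item A member $i\in C_2$ weakly prefers $y_i$. The strong cheapest bundle property gives $p y_i\ge_{\mathrm{lex}} p x_i$, strictly if the preference is strict. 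We use satiation or budget exhaustion here.
\end{itemize}
Summing and comparing with the feasibility constraint $\sum y_i\le\sum_{C_1}\omega_i+\sum_{C_2}x_i$ under non-negative prices then gives a contradiction.

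The delicate point is that prices may be negative in lower currencies. Feasibility only yields componentwise $\le$, and that does not transfer to $\le_{\mathrm{lex}}$ when a lower row has negative entries. We would handle this by restricting to goods whose first nonzero price row is positive, and by arguing currency by currency, using that in each row $k$ the inequality comparison matters only once all higher rows agree.

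\textbf{The inclusion $\bigcap_N RC(u,\omega,N)\subseteq LDE_+(u,\omega)$.} Fix $x$ in every $RC(u,\omega,N)$. We follow Debreu–Scarf. Define
\[
\Gamma=\mathrm{conv}\Big(\{y_i-\omega_i: y_i\succ_i x_i\}\cup\{y_i-x_i: y_i\succeq_i x_i\}\Big),
\]
where the second family requires strict preference for at least one agent with positive weight, a condition we handle via relative interiors. Rejective stability in all replicas implies that no rational convex combination of $\Gamma$ lies in $-\RR^n_+$: clearing denominators produces a rejecting coalition in some replica. Density of the rationals and closedness then make $\Gamma$ disjoint from the open negative orthant, so a separating hyperplane $p^{(1)}\ge 0$ exists with $p^{(1)}\cdot z\ge 0$ on $\Gamma$.

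Applying this with $y_i=x_i$ gives $p^{(1)}\cdot(x_i-\omega_i)\ge 0$. We set $\alpha^{(1)}_i=\max\{p^{(1)}\cdot(x_i-\omega_i),0\}$, which is then $p^{(1)}\cdot(x_i-\omega_i)$. Separation shows that strictly preferred bundles cost more than $p^{(1)}\cdot\omega_i+\alpha^{(1)}_i$ whenever that wealth is positive, and the $x_i-$part of $\Gamma$ gives the cheapest bundle property in row one.

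For the recursion, agents with zero first-currency wealth are constrained to goods with $p^{(1)}_j=0$. We restrict to the sub-economy of those goods and of trades that hold the $p^{(1)}$-value fixed, and separate again to obtain $p^{(2)}$, which may be negative on goods already priced. Strict progress comes from choosing $p^{(k)}$ linearly independent of the previous rows, or from a strict decrease in the dimension of the span of the remaining trade cone, which bounds $d\le n$.

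\textbf{Main obstacle.} The hardest step is making the recursion both terminate and certify the closure budget condition and strong cheapest bundle at every level simultaneously. Separation at level $k$ must be done in the face of the cone where all previous rows vanish, and we must ensure this face still admits a separating functional that is not identically zero on it. Otherwise we stop, and the remaining agents must then be shown to be satisfied, which again uses the rejective condition on the $C_2$ side.
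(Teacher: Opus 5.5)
\textbf{Verdict: genuine gap in the reverse inclusion.}

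Your first inclusion is fine. It is Theorem~\ref{theorem:rejective-core-contains-LDEs}, and the currency-by-currency treatment you gesture at is what the paper does. One small slip: $p^{(1)}\cdot(x_i-\omega_i)\ge 0$ holds only for agents not satiated at $x_i$, since only then is $x_i-\omega_i$ a limit of strictly preferred trades. This is harmless, because $\alpha$ is defined with a $\max$.

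The genuine gap is in the reverse inclusion, exactly at the step you call the main obstacle, and your setup cannot close it. At level $2$, your preferred trades all satisfy $p^{(1)}\cdot z=0$: incomeless agents trade only goods with $p^{(1)}_j=0$, and income-holding agents hold their own $p^{(1)}$-value fixed. So if you separate from the negative orthant, $p^{(1)}$ is again a valid separator and nothing forces a new currency. If instead you separate inside the face $\{p^{(1)}\cdot z=0\}$, that face misses the open negative orthant entirely, because $p^{(1)}\ge 0$ and $p^{(1)}\neq 0$.

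You therefore must separate from a set constrained only on the unpriced goods $S$. The paper uses $V(S)=\{z : z_j<0\ \forall j\in S\}$. This forces the separator to vanish on already-priced goods and to be positive on some $j\in S$, which gives both guaranteed progress (hence $d\le n$) and simple prices. But disjointness from $V(S)$, argued via a rejecting coalition, then requires every point of the preferred set to be a genuine redistribution of the already-priced goods. Per-agent $p^{(1)}$-value constraints do not give this: a convex combination can over-demand one priced good while under-demanding another of equal value.

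The missing construction has two parts:
\begin{itemize}
\item the aggregated set $Q_T(S)=\{z : z_j\le 0\ \forall j\notin S\}\cap\mathrm{conv}\bigcup_{i\in T}Q_i$, with non-$T$ agents' trades restricted to $S$;
\item the stopping rule ``stop once every incomeless agent receives her favourite free bundle.''
\end{itemize}
The stopping rule guarantees that, while the recursion runs, some incomeless agent can strictly improve using free goods. Perturbing the convex weights then yields a coalition with a strictly improving member. Your rule runs the other way (stop when no nontrivial separator exists, then argue the remaining agents are satisfied), and that argument is not supplied.

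The cost of aggregating over $T$ is that separation certifies only an \emph{aggregate} cheapest bundle property for agents who already have income. It does not give the individual strong property you aim to certify level by level. Moreover, the separating rows are nonnegative; they are not negative on priced goods.

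The paper obtains the strong property in a separate step, Theorem~\ref{theorem:from-weak-aggregate-to-strong}. In each currency $k$ it adds a correction $r^{(k)}$ supported on goods priced in earlier currencies. This correction changes no budget set. It is chosen so that no agent with earlier income has a weakly preferred affordable bundle that is cheaper in currency $k$. Such an $r^{(k)}$ exists because, by LP duality for that finite program, infeasibility would exhibit a violation of the aggregate property.

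That duality step is where the negative prices come from, and nothing in your plan replaces it. As it stands, neither termination nor the strong cheapest bundle property is established for $\bigcap_N RC(u,\omega,N)\subseteq LDE_+(u,\omega)$.
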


\subsection{The rejective core stability of LDEs in large economies}

The main result of this section is that the set of rejective core stable allocations in large economies contains the set of LDEs satisfying the strong cheapest bundle properties.

\begin{restatable}{theorem}{rejectiveCoreContainsLDEs}\label{theorem:rejective-core-contains-LDEs}
$LDE_+(u, \omega) \subseteq \bigcap_{N=1}^\infty RC(u, \omega, N)$ for all economies $(u, \omega)$.
\end{restatable}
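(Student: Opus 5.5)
The plan is to argue by contradiction, in the spirit of the first welfare theorem, using lexicographic value comparisons. Fix an LDE $(x,p,\alpha)\in LDE_+(u,\omega)$ and some $N\ge 1$, and give every replica of agent $i$ the bundle $x_i$. Suppose a coalition $C$ of replicas rejects this allocation via consumptions $y$ and a partition $(C_1,C_2)$. Since replicas share $u_i,\omega_i,x_i,\alpha_i$, I will treat $C$ as a multiset of original agents, so the replica structure plays no further role. The goal is to show that
\[
\textstyle\sum_{i\in C} p\, y_i \;>_{\mathrm{lex}}\; \sum_{i\in C_1} p\,\omega_i + \sum_{i\in C_2} p\, x_i ,
\]
which contradicts feasibility. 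Indeed, the feasibility constraint says that the right-hand bundle minus $\sum_{i\in C} y_i$ is a nonnegative vector $z$. Every column $p_j$ has positive first nonzero entry, so $p z \ge_{\mathrm{lex}} 0$ for every $z\ge 0$ (induct on rows: if the first $\ell-1$ rows of $pz$ vanish, then $z$ is supported on items whose first $\ell-1$ price entries are zero, so row $\ell$ is $\ge 0$). Hence $\sum_{i\in C} p\,y_i \le_{\mathrm{lex}}$ the right-hand side, and it suffices to obtain the opposite strict inequality. Lexicographic inequalities add, and one strict summand makes the sum strict, so the work is agent by agent.

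\textbf{Agents in $C_2$.} For $i\in C_2$ we have $y_i\succeq_i x_i$, and the strong cheapest bundle property gives $p\,y_i \ge_{\mathrm{lex}} p\,x_i$ directly. I also need strictness when $y_i\succ_i x_i$, which is relevant when $C_1=\varnothing$. In that case $u_i\cdot y_i>u_i\cdot x_i\ge 0$, so $t y_i\succeq_i x_i$ for $t<1$ close to $1$. Applying the strong cheapest bundle property to $t y_i$ gives $t\,p y_i\ge_{\mathrm{lex}} p x_i$. If $p y_i$ is lexicographically positive, this is incompatible with $p y_i = p x_i$, which gives strictness. If instead $p y_i = 0$, then $p x_i = 0$ as well, and I would need a separate argument. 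My first attempt would be to use the $\alpha$-budget to show this degenerate case cannot carry the strict preference. Failing that, I would route the strictly improving agent through the same budget argument used for $C_1$ below.

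\textbf{Agents in $C_1$ (main obstacle).} For $i\in C_1$ we have $y_i\succ_i x_i$, and I want $p\,y_i >_{\mathrm{lex}} p\,\omega_i+\alpha_i \ge p\,x_i$. The last comparison holds componentwise because $\alpha^{(k)}_i\ge p^{(k)}\cdot(x_i-\omega_i)$.

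For the first comparison, take $t<1$ with $t y_i\succ_i x_i$. Then $t y_i\notin C^\alpha_i(p)$, so some row $k\le k_i$ has $t\,p^{(k)}\cdot y_i > p^{(k)}\cdot\omega_i+\alpha^{(k)}_i$. Take $k$ to be the first violated row. For $\ell<k_i$ the right-hand side is $0$ by the definition of $k_i$, so rows $\ell<k$ satisfy $p^{(\ell)}\cdot y_i\le 0$. The lex-nonnegativity argument above then forces these rows to equal $0$, and row $k$ is strictly positive when $k<k_i$.

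If instead $k=k_i$, the earlier rows of both $p y_i$ and $p\omega_i+\alpha_i$ are zero. Letting $t\uparrow 1$ gives the strict inequality $p^{(k_i)}\cdot y_i > p^{(k_i)}\cdot\omega_i+\alpha^{(k_i)}_i$, because the right-hand side is positive and hence $t$ times the left-hand side exceeding it forces a strict gap. The edge case $k_i=d$ with zero income is handled by the same limiting argument.

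Summing over $C$ then yields the displayed strict inequality and the contradiction. I expect the delicate points to be exactly this row-by-row handling of the truncated budget set at index $k_i$, together with the degenerate zero-value case in $C_2$.
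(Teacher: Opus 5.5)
Your argument is correct, and it takes a cleaner route than the paper's.

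\textbf{How the paper proceeds.} The paper works row by row. It takes the first currency $\ell$ in which some coalition member's $y_i$ costs more than $x_i$. It applies the strong cheapest bundle property to \emph{every} member, including those in $C_1$, to get equal spending in rows $1,\dots,\ell-1$. It then has to convert from $x_i$ back to $\omega_i$ for members of $C_1$. That conversion needs three further ingredients:
\begin{itemize}
\item a separate lemma saying an agent neither owns nor consumes items whose first non-zero price row comes before her $k_i$;
\item a claim that $C_2$ re-consumes exactly its own allocation of the items priced before row $\ell$;
\item a check that $p^{(\ell)}\cdot(x_i-\omega_i)\ge 0$ on $C_1$.
\end{itemize}
Only then does it use $p^{(\ell)}_j\ge 0$ on the remaining items to absorb free disposal.

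\textbf{How your route differs.} You replace all of this with two observations. First, every column of $p$ is lexicographically nonnegative, so $z\mapsto pz$ is monotone from the componentwise order to $\le_{\mathrm{lex}}$. This handles negative later-row prices and free disposal in one stroke. Second, for any strictly improving agent, optimality of $x_i$ on $C^\alpha_i(p)$ together with the first-violated-row argument gives $p y_i >_{\mathrm{lex}} p\omega_i+\alpha_i$. This right-hand side dominates both $p\omega_i$ and $p x_i$ componentwise. The strong cheapest bundle property is then needed only for the weak inequality on $C_2$. What the paper's version buys is explicit information about the currency in which a would-be rejecting coalition fails, but the theorem does not need it.

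\textbf{Points to tidy in the write-up.}
\begin{itemize}
\item The $t$-scaling is unnecessary: $y_i\succ_i x_i$ already gives $y_i\notin C^\alpha_i(p)$ directly.
\item The ``degenerate'' case $p y_i=0$ for a strict improver cannot occur, since it would place $y_i$ in the budget set. So drop the hedge and apply the budget argument to every strictly improving agent, whether in $C_1$ or $C_2$.
\item For $C_1$, the comparison you actually need is $p\omega_i+\alpha_i\ge p\omega_i$, which follows from $\alpha_i\ge 0$; the comparison with $p x_i$ is not what is required there.
\item The definition of $k_i$ only yields $p^{(\ell)}\cdot\omega_i+\alpha^{(\ell)}_i\le 0$ for $\ell<k_i$. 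Equality to $0$ comes from the same squeeze against lexicographic nonnegativity that you use for $p y_i$.
\end{itemize}
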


We defer a formal proof to Section~\ref{section:proof-rejective-core-contains-LDEs}, but the idea is quite simple.
Let $(x, p, \alpha) \in LDE_+(u, \omega)$, and consider any coalition $C = (C_1, C_2)$ that rejects this allocation because its members collectively prefer the allocation $y$, which they can reach by redistributing the collective endowment of $C_1$ and the collective allocation of $C_2$.
Since $y$ is a redistribution of $\omega_{C_1}$ and $x_{C_2}$, we have that 
\[
    \sum_{i \in C} p y_i \leq_{\mathrm{lex}} \sum_{i \in C_1} p \omega_i + \sum_{i \in C_2} p x_i.
\]
However, the strong cheapest bundle property, together with the fact that at least one member of the coalition is strictly happier, implies the opposite inequality.
Some care is needed to formalize this intuition as LDE prices may be negative in some currencies (but never in the first non-zero coordinate) and the definition of a rejecting coalition allows it to freely dispose of the goods it possesses, either through the initial endowments or the allocations of its members.\footnote{At this point, the reader may have concerns about certain economies such as those in which a single agent is endowed with all the goods in the market. Our results imply that the rejective core is non-empty for this economy, but it appears as though this agent forms a rejecting coalition with any other agent if we allow free disposal. However, note that the definition of a rejecting coalition only permits the inclusion of this agent if she brings her allocation, as she can only bring her endowment if she can be made strictly happier, which is not possible.}

\subsection{The rejective core converges to the set of LDEs}

In this section, we prove the reverse containment: the set of LDEs satisfying the strong cheapest bundle property contains the rejective core regardless of the size of the economy.

\begin{theorem}
$LDE_+(u, \omega) \supseteq \bigcap_{N=1}^\infty RC(u, \omega, N)$ for all economies $(u, \omega)$.
\end{theorem}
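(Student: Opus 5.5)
Fix an allocation $x$ that is rejective core stable in $(u,\omega)_N$ for every $N$, with each replica receiving the same bundle. We must produce a price matrix $p$ and dividends $\alpha$ such that $(x,p,\alpha)$ is an LDE with the strong cheapest bundle property. The plan follows Debreu--Scarf, applied recursively.

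\textbf{Step 1: the first currency.} Let $Z_i^{>}=\{y-\omega_i : y\in\Delta^n_-,\ y\succ_i x_i\}$ and $Z_i^{\geq}=\{y-x_i : y\in\Delta^n_-,\ y\succeq_i x_i\}$. Set
\[
\Gamma=\mathrm{conv}\Bigl(\bigcup_i Z_i^{>}\cup\bigcup_i Z_i^{\geq}\Bigr).
\]
The replica argument shows that no convex combination with rational weights in which some agent strictly improves lies in $-\RR^n_+$. Weights $\beta_i/N$ correspond to a coalition in the $N$-fold replica, with $C_1$ the replicas contributing endowment terms and $C_2$ those contributing allocation terms. Rational approximation then extends this from rational to real weights, using openness of the strict-preference sets.

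The delicate point is that $Z_i^{\geq}$ contains $0$, so $0\in\Gamma$ and strict separation from $-\RR^n_+$ is impossible. I will therefore separate the relative interior: take a hyperplane $p^{(1)}\ge 0$ with $p^{(1)}\cdot z\ge 0$ on $\Gamma$ and $p^{(1)}\ne 0$. Convexity and the rejective-core property give that $\Gamma$ meets $-\RR^n_+$ only in vectors built from zero-surplus weak improvements. Consequences to extract:
\begin{itemize}
\item $p^{(1)}\cdot y\ge p^{(1)}\cdot x_i$ whenever $y\succeq_i x_i$, which is the cheapest bundle property in currency 1;
\item $p^{(1)}\cdot y\ge p^{(1)}\cdot\omega_i$ whenever $y\succ_i x_i$, which is budget exhaustion;
\item using $\sum_i x_i=\sum_i\omega_i$, agents with $p^{(1)}\cdot x_i<p^{(1)}\cdot\omega_i$ are satiated relative to currency-1-cheaper bundles.
\end{itemize}
Set $\alpha^{(1)}_i=\max\{p^{(1)}\cdot(x_i-\omega_i),0\}$.

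\textbf{Step 2: recursion.} Let $T_1$ be the set of goods with $p^{(1)}_j>0$. Restrict attention to agents whose budget in currency 1 is zero, that is, those with $k_i>1$, and to deviations $y$ agreeing with $x$ in currency-1 value. For agents with zero currency-1 income this means deviations avoiding $T_1$; for the others, $p^{(1)}$-neutral trades. Repeat the separation in the subspace where currency 1 is held fixed, allowing negative entries in later rows on $T_1$. The first nonzero entry of each column stays positive because on $T_1$ it is $p^{(1)}_j>0$. The dividends in the new currency absorb unspent surplus.

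\textbf{Step 3: termination.} Each round either strictly shrinks the set of goods not yet priced, or strictly increases the dimension of the span of $p^{(1)},\dots,p^{(k)}$. Choosing each new row outside the previous span keeps $d\le n$. Once every weakly preferred deviation that is lexicographically cheaper is excluded, the strong cheapest bundle property holds. Affordability with the closure-type budget set $C^\alpha_i(p)$ then follows because affordable bundles are lexicographically no more expensive than $\omega_i$ plus dividends.

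\textbf{Main obstacle.} I expect the hardest part to be the degenerate separation in Step 1, and its analogue in later rounds. Because $0\in\Gamma$, one only gets weak separation, and the hyperplane may fail to discriminate the trades we need. The remedy I would try first is to choose $p^{(k)}$ as a supporting functional in the relative interior of the normal cone of $\Gamma$ at $0$ intersected with the nonnegative orthant. That choice makes $p^{(k)}$ positive on every direction not forced to zero, so the recursion only has to handle the face where $p^{(k)}$ vanishes on $\Gamma$, which is a strictly smaller face, and termination follows.
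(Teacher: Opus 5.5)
\paragraph{Main gap: Step 2.} Your Step 1 is essentially the paper's first iteration, with $S=[n]$ and no agents yet holding income. It needs two repairs. First, the replica and rational-approximation argument only gives disjointness from the \emph{open} negative orthant. Second, it needs some agent who can strictly improve: if every agent is satiated at $x$, then $\Gamma$ can meet the open orthant (e.g.\ when some agent is indifferent to discarding her bundle), and one must simply take $p=0$.

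The genuine gap is Step 2, which is where all the work lies, and the remedy you propose cannot close it. If $p^{(1)}$ lies in the relative interior of $K=\{p\ge 0 : p\cdot z\ge 0\ \forall z\in\Gamma\}$, then \emph{every} $q\in K$ vanishes on the face $F=\{z\in\Gamma : p^{(1)}\cdot z=0\}$. So the second currency must come from a different separation. You never specify which sets are separated, why they are disjoint, or why the separator prices a good not yet priced.

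The natural candidate is to separate $F$ (or the $p^{(1)}$-neutral preferred trades) from the nonpositive trades inside $\{p^{(1)}\cdot z=0\}$. This does not force progress. That cone has $z_j=0$ on your $T_1$, so nothing prevents the separator from being supported entirely on already-priced goods. Such a row changes no zero-income agent's budget set, and your termination clause (``or the span grows'') accepts exactly these useless rounds.

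\paragraph{How the paper forces progress.} It separates from $V(S)=\{z: z_j<0\ \forall j\in S\}$, where $S$ is the set of unpriced goods. This set is open and unconstrained on priced goods, so any separator is nonnegative, vanishes off $S$, and hence prices some $j\in S$. Disjointness from $V(S)$ still follows from rejective-core stability, but only because the preferred-trade set is tailored:
\begin{itemize}
\item zero-income agents' trades are confined to $S$;
\item the trades of the set $T$ of agents with income enter only through $Q_T(S)=\mathrm{conv}\bigl(\bigcup_{i\in T}Q_i\bigr)\cap\{z: z_j\le 0\ \forall j\notin S\}$, i.e.\ they are aggregated \emph{before} feasibility on priced goods is imposed.
\end{itemize}

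\paragraph{Second missing ingredient: the strong cheapest bundle property.} That tailoring has a consequence your plan misses: no single separation can deliver the strong cheapest bundle property for agents with income in an earlier currency. Their individual $p^{(1)}$-neutral trades need not aggregate to a redistribution of the priced goods, so rejective-core stability says nothing about them one at a time. The recursion above yields only simple prices with the weak and the \emph{aggregate} cheapest bundle properties (Theorem~\ref{theorem:weak-LDE-contains-rejective-core}).

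The paper then upgrades to the strong property by a separate duality argument (Theorem~\ref{theorem:from-weak-aggregate-to-strong}). For each currency $k$, it seeks a correction $r^{(k)}$ supported on goods priced in earlier currencies such that $(p^{(k)}+r^{(k)})\cdot(y-x_i)\ge 0$ for every agent $i$ with income before $k$ and every vertex $y$ of the polytope $\{y\in C^\alpha_i(p): y\succeq_i x_i\}$. If this finite system were infeasible, Farkas' lemma would produce weights $\beta\ge 0$ violating the aggregate cheapest bundle property. These corrections are exactly the ``negative entries on $T_1$'' you allow, but their existence is the content of this duality step, not something your recursion provides.

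Without both ingredients---the separation against $V(S)$ using $Q_T(S)$ for progress, and the Farkas upgrade for the strong property---Steps 2 and 3 do not amount to a proof.
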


The proof is broken down into two overarching steps.
The first step shows that the set of LDEs with simple prices satisfying the weak and aggregate cheapest bundle properties contains the rejective core (Theorem~\ref{theorem:weak-LDE-contains-rejective-core}).
We use $LDE_-(u, \omega)$ to denote this set of LDEs and in the second step, prove that it resides in $LDE_+(u, \omega)$ (Theorem~\ref{theorem:from-weak-aggregate-to-strong}).

\begin{restatable}{theorem}{weakLDEcontainsRC}\label{theorem:weak-LDE-contains-rejective-core}
$LDE_-(u, \omega) \supseteq \bigcap_{N=1}^\infty RC(u, \omega, N)$ for all economies $(u, \omega)$.
\end{restatable}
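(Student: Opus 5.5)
Fix an allocation $x \in \bigcap_N RC(u,\omega,N)$. The plan follows Debreu--Scarf: build the price matrix one currency at a time, each row coming from a separating hyperplane. At each stage we then remove the items that have just received a positive price and recurse on the residual market.

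\textbf{Base stage.} For each agent $i$ let
\[
\Gamma_i := \{y - \omega_i : y \in \Delta^n_-,\ y \succ_i x_i\} \cup \{y - x_i : y \in \Delta^n_-,\ y \succeq_i x_i\},
\]
and let $\Gamma$ be the convex hull of $\bigcup_i \Gamma_i$. Replication delivers the usual implication: if a rational convex combination of elements of the $\Gamma_i$ were strictly negative, with some strict preference involved, then in a large enough replica a coalition would reject $x$. Here $C_1$ collects the replicas trading from their endowments and $C_2$ those trading from their allocations. Approximating real weights by rationals uses openness of strict preference on the first kind of term, so $\Gamma$ misses the open negative orthant. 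Separation then gives $p^{(1)} \ge 0$, $p^{(1)} \ne 0$, with $p^{(1)} \cdot z \ge 0$ on $\Gamma$. Two consequences follow:
\begin{itemize}
\item[(i)] every $y \succ_i x_i$ has $p^{(1)} \cdot y \ge p^{(1)} \cdot \omega_i$;
\item[(ii)] every $y \succeq_i x_i$ has $p^{(1)} \cdot y \ge p^{(1)} \cdot x_i$.
\end{itemize}
Summing (ii) at $y = x_i$ with market clearing shows that surplus $p^{(1)} \cdot (\omega_i - x_i)$ can be positive. This is where the dividends $\alpha^{(1)}_i = \max\{p^{(1)} \cdot (x_i - \omega_i), 0\}$ enter: they restore affordability of $x_i$. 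I would take $p^{(1)}$ simple by choosing the separating functional as the normalized indicator-type vector of a face. Concretely, I would choose an extreme supporting functional, or argue that the positive-price items can be given equal weight within a currency.

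\textbf{Recursion.} Let $J_1 = \supp{p^{(1)}}$. The agents with $p^{(1)} \cdot \omega_i + \alpha^{(1)}_i > 0$ have budget sets determined by currency $1$ alone. Condition (i) already makes $x_i$ their optimum on the closure of their budget: any strictly better bundle costs at least their wealth, and that is the boundary case the closure is meant to handle. For the remaining agents $k_i > 1$, which forces them to buy no items in $J_1$, I restrict attention to the sub-economy on items $[n]\setminus J_1$. I then repeat the separation, but only with net trades $z$ that satisfy $z_j \le 0$ on all previously priced items. The rejection argument carries over because in replicas the coalition may simply not use priced goods. This yields $p^{(2)}$ supported on new items, and so on. Each stage removes at least one item, so the recursion terminates within $n$ steps. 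One care point: the separating functional must be nonzero on the unpriced items, and if it is supported only on already-priced coordinates the stage must be rerun. To rule this out I would use the nonexistence of rejections built from trades among unpriced goods only; if none exist, every agent is already optimal and we stop.

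\textbf{Checking the definition and the main obstacle.} The weak cheapest bundle property follows from (ii) applied at level $k_i$. The aggregate cheapest bundle property follows from separation applied to the aggregate cone: a violating $\beta, y$ would, after rational approximation and replication, give a $C_2$-only coalition that rejects through free disposal. I expect the main obstacle to be the recursion's well-definedness. We must guarantee that the separating functional at each stage gives strictly positive price to some not-yet-priced item. We must also ensure that agents whose wealth vanishes at every level (so $k_i = d$) still receive their optimum. Otherwise the termination and the final-currency budget conditions break. My first attempt would be to define $\Gamma^{(k)}$ using only agents with $k_i \ge k$ and net trades constrained on priced items. I would then show that if the optimal supporting functional vanished on all unpriced items, the constrained sets would admit a strictly negative combination, which yields a rejection in some replica.
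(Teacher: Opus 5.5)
Your base stage is essentially the paper's first iteration. The recursion, however, has a genuine gap exactly at the ``care point'' you flag.

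At stage $k\ge 2$ you separate constrained trade sets from the open negative orthant of the whole space. Those sets are subsets of the ones $p^{(1)}$ already supports, so $p^{(1)}$ (or any functional living on already-priced coordinates) remains a legitimate separator. Nothing forces the new row to vanish on priced items (simplicity) or to be positive on some unpriced item (progress). ``Rerun the stage'' is not a procedure. And the existence of one bad separator does not produce a strictly negative combination, so the contradiction you propose does not go through.

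The missing idea is to separate instead from $V(S)=\{z : z_j<0 \text{ for all } j\in S\}$, where $S$ is the set of unpriced items. This set is unrestricted on priced coordinates (equivalently, project onto the coordinates in $S$). Any functional bounded above on $V(S)$ must vanish off $S$ and be nonnegative on $S$. So the new row is automatically simple, and, being nonzero, it prices some item of $S$, which gives termination.

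The cost is that $V(S)$ no longer enforces feasibility on priced goods, so this must be built into the trade sets:
\begin{itemize}
\item Agents without income contribute strictly preferred trades from $\omega_i$ and weakly preferred trades from $x_i$, both with $z_{ij}=0$ for $j\notin S$.
\item Agents with income enter only through the aggregate set $Q_T(S)=\{z : z_j\le 0 \ \forall j\notin S\}\cap\mathrm{conv}\bigcup_{i\in T}\{y-x_i : y\succeq_i x_i\}$.
\end{itemize}
A point in the intersection of the convex hull of these sets with $V(S)$ then yields a rejecting coalition in a replica. This requires first perturbing the weights to put positive mass on a strictly preferred trade of some agent who is not yet optimal. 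Such an agent exists precisely because the stopping rule has not fired; the all-satiated case is handled by $p=0$. Your base stage needs the same perturbation to conclude that $\Gamma$ misses the negative orthant, since combinations of purely weakly preferred trades give no rejection.

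Your derivation of the aggregate cheapest bundle property also fails. A violating $(\beta,y)$ only says the $\beta$-weighted bundle is cheaper in currency $k$ without using more of earlier-priced goods. It may use more of some unpriced good, and nobody strictly gains, so there is no rejecting coalition. For example, take an agent with income who is indifferent between two unpriced goods, only one of which your separator happens to price. Whether a violation exists depends on which separator you pick. The property therefore has to be forced by including $Q_T(S)$ in the separated set. Your final plan, which uses only agents with $k_i\ge k$, omits exactly these agents.

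Two smaller slips:
\begin{itemize}
\item Simple prices means each item is priced in at most one currency. It does not mean equal prices within a currency, and equal weights generally cannot support $x$.
\item Condition (i), giving ``at least their wealth,'' is not enough, because the budget set is defined by weak inequalities: a strictly better bundle costing exactly the wealth would be affordable. You need the scaling argument. Wealth equals $\max\{p^{(1)}\cdot\omega_i, p^{(1)}\cdot x_i\}>0$, so $(1-\varepsilon)y$ is still strictly preferred yet strictly cheaper, contradicting (i) and (ii).
\end{itemize}
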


We defer the proof to the appendix and opt for an intuitive argument here.
At a high level, the proof recursively finds a quasi-equilibrium among the goods that have yet to be priced and continues to do so until either every good has received a positive price in some currency or the agents with no income in any currency receive their favorite free bundles.

Let $x \in \bigcap_{N=1}^\infty RC(u, \omega, N)$.
We outline the second step of the recursion because the first step can be seen as a special case (and uses more standard ideas).
Moreover, doing so allows us to convey the main ideas without introducing the cumbersome notation needed to describe a general recursive step.
With that aside, suppose we have found $p^{(1)} \in \RR^n_+$ such that $(x, p^{(1)})$ constitutes an expenditure-minimizing quasi-equilibrium, and let us turn to the second recursive step.

As in \cite{DebreuS1963}, we seek a hyperplane, which we will ultimately take to be $p^{(2)}$, that separates the set of preferred trades from 0 simultaneously for each agent. 
However, if we naively define these sets (and for that matter, the set we wish to separate them from), then it is possible that the only separating hyperplane is $p^{(1)}$, in which case no progress is made.
Thus, we need to carefully define the sets of interest so that 
\begin{enumerate}
    \item at least one good with zero price in $p^{(1)}$ receives a positive price in $p^{(2)}$
    \item each agent with no income with respect to $p^{(1)}$ but positive income with respect to $p^{(2)}$ receives her favorite affordable bundle that minimizes her expenditure in the second currency
    \item there does not exist a ``coalition'' consisting of a $\beta_i$ share of each agent $i$ with positive income in the first currency that can redistribute its collective allocation so that each agent in the support of $\beta$ is weakly better off, but the coalition spends strictly less with respect to $p^{(2)}$
\end{enumerate}
Note that we require the third condition for the aggregate cheapest bundle property.

Rather than separate the set of preferred trades (whose definition we defer for now) from 0, we separate from the following set:
\[
    V \coloneqq \{z \in \RR^n : z_{j} < 0 \,\forall\, j \in S\}
\]
where $S$ denotes the goods with zero price in $p^{(1)}$.
We separate from $V$ for two reasons.
The first is that our definition for the set of preferred trades will in fact contain 0 as an element.
Without disjoint-ness, we cannot apply the separating hyperplane theorem.
The second reason is that we need the hyperplane to assign some item in $S$ a positive price.
The most straightforward way to ensure that this happens is to make $V$ unconstrained in the coordinates corresponding to the goods in $S$.
This way, the separating hyperplane cannot price the goods in $S$ in the second currency.

Now, let $T$ denote the agents with positive income in the first currency.
For the agents not in $T$, we define
\begin{align*}
    P_i 
        &\coloneqq \{z_i \in \RR^n : z_i + \omega_i \succ_i x_i \wedge z_{ij} = 0 \,\forall\, j \not\in S\} \\
    Q_i
        &\coloneqq \{z_i \in \RR^n : z_i + x_i \succeq_i x_i \wedge z_{ij} = 0 \,\forall\, j \not\in S\}
\end{align*}
With the exception of the requirement that $z_{ij} = 0$ for all $j \not\in S$, the definition of these sets is rather standard.
Note that there is no need to consider trades that involve the goods not in $S$ since these goods are priced in the first currency and hence inaccessible to the agents not in $T$ anyway.
In particular, if $p^{(2)}$ separates $P_i$ from $V$, then we can show that agent $i \not\in T$ receives her favorite affordable bundle if given a dividend of $\alpha^{(2)}_i \coloneqq \max\{p^{(2)} \cdot (x_i - \omega_i), 0\}$.
Moreover, if $p^{(2)}$ separates $Q_i$ from $V$ as well, then we can show that this bundle also minimizes her expenditure in the second currency.
The extra degree of freedom from restricting the set of preferred trades to those that do not trade any goods not in $S$ is one reason why a separating hyperplane aside from $p^{(1)}$ might exist.

More care is required to properly define the set of preferred trades for the agents in $T$ if we wish to obtain the third condition listed while giving the hyperplane as much wiggle room as possible.
First, for each agent in $T$, let
\[
    Q_i \coloneqq \{z_i \in \RR^n : z_i + x_i \succeq_i x_i\}.
\]
The ``correct'' set of preferred trades to consider for the agents in $T$ is then 
\[
    Q_T \coloneqq \{z \in \RR^n : z_{j} \leq 0 \,\forall\, j \not\in S\} \cap \mathrm{conv} \cup_{i \in T} Q_i
\]
where $\mathrm{conv}$ denotes the convex hull.
Note that an element of $Q_T$ can be expressed as $\sum_{i \in T} \beta_i (y_i - x_i)$ where $\beta \in \Delta^T$ and $(y_i - x_i) \in Q_i$ (so $y_i \succeq_i x_i$) for all $i \in T$. 
Moreover, because we intersect with the set of $z$ such that $z_j \leq 0$ for all $j \not\in S$, we have that
\[
    \sum_{i \in T} \beta_i (y_{ij} - x_{ij}) \leq 0 \,\forall\, j \not\in S
\]
so $y$ is a redistribution (at least with respect to the goods not in $S$) of $x$ among a ``coalition'' that consists of a $\beta_i$ share of each agent $i \in T$.
It is no coincidence that the elements of $Q_T$ have this interpretation: $Q_T$ is purposefully defined to recover the aggregate cheapest bundle property.
If $p^{(2)}$ separates $Q_T$ from $V$, then it must be that $y$ collectively costs at least as much as $x$ in the second currency for this coalition.

Note that had we defined $Q_i$ for $i \in T$ analogously to $Q_i$ for $i \not\in T$ and separated this set from $V$, then we would only be able to say that the agents with positive wealth in the first currency are minimizing their expenditure in the second currency \textit{fixing their consumption of the goods not in $S$}.
On the other hand, had we worked with $Q_i$ as defined for $i \in T$ directly, then there may not be enough wiggle room for a separating hyperplane other than $p^{(1)}$ to exist. 

The set of preferred trades we wish to separate from $V$ is then
\[
    P \coloneqq \mathrm{conv} \, (\cup_{i \not\in T} (P_i \cup Q_i) \cup Q_T).
\]
It remains to show that a separating hyperplane exists.
As $P$ and $V$ are non-empty and convex, it suffices to show that they are disjoint.
Toward this end, we appeal to the fact that $x$ resides within the rejective core of every replica economy: if the two sets are not disjoint, then from a point in the intersection, we can recover a rejecting coalition in a sufficiently large market.
This part of the argument is standard, but we highlight one subtlety.
The definition of rejecting coalition requires that the allocation $y$ collectively desired by the coalition is a redistribution of some members' endowments and other members' allocations.
However, $V$ is unconstrained in the coordinates corresponding to the goods not in $S$.
The fact that these goods are not over-demanded instead comes from the definition of $P$: $Q_T$ ensures that the agents in $T$ demand only as much as the coalition owns, while $P_i$ and $Q_i$ for $i \not\in T$ ensure that these agents continue to consume only goods in $S$.

If there remain goods yet to be priced and agents with no income in either the first or second currencies that do not receive their favorite free bundles, then the recursion continues.
In the proof of Theorem~\ref{theorem:weak-LDE-contains-rejective-core}, we define a formal process that recovers an LDE with simple prices satisfying the weak and aggregate cheapest bundle properties and prove its correctness and termination by induction.
Each recursive step and its analysis resemble the second recursive step and what we discussed above.

\begin{restatable}{theorem}{fromWeakToStrong}\label{theorem:from-weak-aggregate-to-strong}
$LDE_+(u, \omega) \supseteq LDE_-(u, \omega)$.
That is, the set of LDEs satisfying the strongest bundle property contains the set of LDEs with simple prices that satisfy the weak and aggregate cheapest bundle properties.
\end{restatable}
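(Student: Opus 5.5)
The plan is to show that the same triple $(x,p,\alpha)\in LDE_-(u,\omega)$ already satisfies the strong cheapest bundle property; since it is an LDE, it then lies in $LDE_+(u,\omega)$. Simple prices make the argument combinatorial. Write $S_\ell$ for the set of items whose unique nonzero price lies in currency $\ell$. By simple prices these entries are positive, and the sets $S_\ell$ together with the set of completely free items partition $[n]$.

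Suppose the strong property fails for agent $i$. Then there are $y_i \succeq_i x_i$ and $k$ with $p^{(\ell)}\cdot y_i = p^{(\ell)}\cdot x_i$ for $\ell<k$ and $p^{(k)}\cdot y_i < p^{(k)}\cdot x_i$. I split into cases according to $k$ versus $k_i$. If $k\le k_i$, this is exactly a violation of the weak cheapest bundle property, so there is nothing to do. The real case is $k>k_i$, and here I plan to derive a violation of the aggregate cheapest bundle property with $\beta = e_i$ (or a small perturbation of it).

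The first two conditions of the aggregate property hold for $\beta=e_i$ by assumption. What remains is the coordinatewise condition $y_{ij}\le x_{ij}$ for all $j\in S_1\cup\dots\cup S_{k-1}$, which I would establish in three layers.

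\emph{Currencies $\ell<k_i$.} By definition of $k_i$ we have $p^{(\ell)}\cdot\omega_i+\alpha^{(\ell)}_i\le 0$, and with non-negative prices and dividends this forces a zero budget in currency $\ell$. Since $x_i$ is affordable and $p^{(\ell)}\cdot y_i = p^{(\ell)}\cdot x_i$, both bundles vanish on $S_\ell$, so the condition holds trivially.

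\emph{Currencies strictly between $k_i$ and $k$.} These items are unconstrained in $C^\alpha_i(p)$. I would modify $y_i$ into $y_i'$ by moving the mass of $y_i$ on $S_\ell$ to agree with $x_i$ there, rerouting the excess through the free items of highest utility. Optimality of $x_i$ over $C^\alpha_i(p)$ then shows that $x_i$ already puts its mass on utility-maximal free items, which should allow $y_i' \succeq_i x_i$. Because the only currency in which $y_i'$ must be strictly cheaper is $k$, and $S_k$ is untouched, the key inequality $p^{(k)}\cdot y_i' < p^{(k)}\cdot x_i$ survives.

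\emph{Currency $k_i$.} Only the aggregate equality $p^{(k_i)}\cdot y_i = p^{(k_i)}\cdot x_i$ is given here, and this is the main obstacle, since it need not hold coordinatewise. My first attempt is an exchange argument. Among bundles $z\succeq_i x_i$ with $p^{(\ell)}\cdot z = p^{(\ell)}\cdot x_i$ for $\ell<k$, consider those minimizing $p^{(k)}\cdot z$ and choose one that is closest to $x_i$ on $S_{k_i}$. If some $j\in S_{k_i}$ had $z_j>x_{ij}$, then equal spending forces some $j'\in S_{k_i}$ with $z_{j'}<x_{ij'}$. Shifting mass from $j$ to $j'$ at the rate $p^{(k_i)}_j/p^{(k_i)}_{j'}$ keeps currency-$k_i$ spending fixed. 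The weak cheapest bundle property applied to $x_i$, together with optimality of $x_i$ in $C^\alpha_i(p)$, should show that this shift is weakly utility-improving, because the utility-per-price ratios on the support of $x_i$ within $S_{k_i}$ are maximal. That contradicts the choice of $z$.

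If this exchange argument breaks down, for example when prices make the shift infeasible within $\Delta^n_-$, the fallback is to enlarge $\beta$. One would include a small weight on agents who hold the displaced items of $S_{k_i}$ in $x$, so that the coordinate constraint is satisfied in aggregate rather than for agent $i$ alone, while keeping every agent in $\mathrm{supp}(\beta)$ weakly better off. In either version, the resulting $\beta$ and modified bundles contradict the aggregate cheapest bundle property, which completes the proof.
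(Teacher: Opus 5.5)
\textbf{The approach fails: the triple $(x,p,\alpha)$ itself need not satisfy the strong cheapest bundle property, so the prices have to change.} The statement is about allocations, and the paper's proof replaces $p$ by different prices. Here is a counterexample to your claim.

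Take three agents and goods $A,A',B$ with $p^{(1)}=(1,2,0)$ and $p^{(2)}=(0,0,1)$. Let $u_1=(2,3,1)$, $u_2=(1,0,0)$, $u_3=(0,2,1)$, and let $\omega=x$ with $x_1=(0,\tfrac14,\tfrac34)$, $x_2=(1,0,0)$, $x_3=(0,\tfrac34,\tfrac14)$, and $\alpha=0$. Every agent has $k_i=1$.

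\emph{It is in $LDE_-$.} On $\Delta^3_-$ we have $u_1\cdot y\le 1+p^{(1)}\cdot y$ and $u_3\cdot y\le 1+y_{A'}$. These give optimality of $x_1$ and $x_3$ and the weak property, and agent $2$ is satiated. For the aggregate property at $k=2$: agent $2$ must keep all of $A$, so no one else gets any $A$. The mass constraint then forces $y_{1A'}\ge\tfrac14$ and $y_{3A'}\ge\tfrac34$. So the $A'$ constraint binds, which forces $y_{1B}\ge\tfrac34$ and $y_{3B}\ge\tfrac14$, and total consumption of $B$ cannot fall.

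\emph{It fails the strong property.} The bundle $y=(\tfrac12,0,\tfrac12)$ gives agent $1$ utility $\tfrac32=u_1\cdot x_1$. It has the same currency-$1$ cost $\tfrac12$ and currency-$2$ cost $\tfrac12<\tfrac34$.

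This is exactly your case $k>k_i$, and it shows where your currency-$k_i$ step breaks. Optimality of $x_i$ gives $u_{ij}=a+b\,p^{(k_i)}_j$ on the support, where $a$ is the multiplier of the mass constraint $\sum_j y_j\le 1$. It does not give maximal $u_{ij}/p^{(k_i)}_j$: here $x_1$ uses $A'$, the good with the lower ratio. Shifting mass from $A$ to $A'$ at the price ratio frees probability mass. That mass must be refilled with tight lower-tier goods (here $B$) to keep utility, which raises the currency-$k$ cost, so the exchange never contradicts minimality of $p^{(k)}\cdot z$. Your fallback of enlarging $\beta$ cannot help either, because the aggregate property genuinely holds in this example.

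\textbf{The missing idea is to modify prices.} Add to $p^{(k)}$ a vector $r^{(k)}$ supported on goods already priced in earlier currencies (those outside $S_k$). This leaves every budget set unchanged: agents in $T_k$ only face currencies up to $k_i<k$, and agents outside $T_k$ cannot buy those goods. You then need $(p^{(k)}+r^{(k)})\cdot(y-x_i)\ge 0$ for all $i\in T_k$ and all $y\in C^\alpha_i(p)$ with $y\succeq_i x_i$. This is a finite linear system, since it suffices to check vertices of a polytope. By LP duality, a certificate of its infeasibility is precisely a $\beta$ violating the aggregate cheapest bundle property. The coordinatewise constraints in that property are exactly the dual of the free variables $r^{(k)}_j$, $j\notin S_k$. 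In the example, $q^{(2)}=(\tfrac12,0,1)$ restores the strong property.
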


The reader can find the complete proof in Section~\ref{section:proof-from-weak-to-strong}.
The main idea is to modify the price vector in each currency so that the ultimate price matrix satisfies the strong cheapest bundle property. 
Let $(x, p, \alpha) \in LDE_-(u, \omega)$.
Since $p$ is simple, each good is priced (positively) in at most one currency.
Let $S_k$ denote the set of items that are priced in currencies $k$ through $d$ (so any good not in $S_k$ has a positive price in a more valuable currency), and let $T_k$ denote the set of agents with positive income in a currency strictly more valuable than currency $k$.
Note that modifying the price in currency $k$ of a good not in $S_k$ does not affect the demand of any agent: for the agents in $T_k$, only the prices in more valuable currencies are relevant, and agents not in $T_k$ cannot afford these goods anyway since they do not possess positive wealth in currencies more valuable than $k$.
Thus, we can take advantage of this degree of freedom to modify these prices to ensure that the agents in $T_k$ do not possess bundles weakly preferred to their allocations that cost strictly less in currency $k$.
Note that this suffices for the strong cheapest bundle property.

We show that such a modification exists by setting up a system of linear constraints and proving that it has a feasible solution.
In particular, we argue that infeasibility implies a violation to the aggregate cheapest bundle property of $(x, p, \alpha)$.
We then use the feasible solution to construct new price and dividend matrices, $q$  and $\gamma$, respectively.
Since we only modify the prices inconsequential to each agent's demand, we get that $(x, q, \gamma)$ constitutes an LDE satisfying the strong cheapest bundle property, that is, $(x, q, \gamma) \in LDE_+(u, \omega)$.

\section{Discussion}

In this paper, we propose the solution concept of a lexicographic dividend equilibrium for the problem of one-sided matching with endowments.
We show that an LDE always exists and that the set of LDEs coincides with the rejective core as the economies grows, which suggests that equilibria of this kind are the only viable market outcomes in some sense.
Our result completely circumvents the non-existence of a competitive (and dividend) equilibrium in economies with satiation but does so in a way that does not sacrifice desirable properties, such as individual rationality and weak core stability.

We leave the reader with two open questions.
The first question is whether two currencies always suffice to clear the market.
While the example given in Table~\ref{table:d=3-example} involves three currencies, it is not difficult to see that there exists an LDE that involves only two.
Table~\ref{table:d=3-example-but-actually-d=2} constitutes such an LDE and also serves as an example that prices more than one item in the first currency.
Intuitively, two currencies should suffice: the agents who possess positive wealth in the first currency consume their favorite bundles and discard the rest, inducing a surplus in the second currency that can be redistributed so that the remaining agents all have positive wealth in this currency.
Unfortunately, our proof approach does not suffice to validate this conjecture because it fails to distinguish between the LDEs given in Table~\ref{table:d=3-example} and Table~\ref{table:d=3-example-but-actually-d=2}.
Note that by our limiting argument in Theorem~\ref{theorem:LDE-existence}, it would suffice to prove that in an economy that redistributes an $\varepsilon$ share of each good, there exists a competitive (or dividend) equilibrium whose prices are either $\Theta(1)$ or $O(\varepsilon)$.

\begin{table}
    \centering
    \subfloat[Allocation]{\begin{tabular}{c|cccc} 
        & $A$ & $B$ & $C$ & $D$ \\
        \hline
        1 & $1/2$ & $1/2$ & 0 & 0 \\
        2 & $\frac{1+\delta}{2}$ & 0 & $\frac{1-\delta}{2}$ & 0 \\
        3 & $\frac{1-\delta}{2}$ & $1/2$ & $\delta/2$ & 0 \\
        4 & $1/2$ & 0 & 0 & $1/2$ \\
        5 & 0 & 0 & $1/2$ & $1/2$ \\
        6 & 0 & 0 & 0 & 1
    \end{tabular}}
    \quad\quad
    \subfloat[Prices]{\begin{tabular}{c|cccc} 
        $k$ & $A$ & $B$ & $C$ & $D$ \\
        \hline
        1 & 1 & $\delta$ & 0 & 0 \\
        2 & 0 & 0 & 1 & 0 
    \end{tabular}}
    \quad\quad
    \subfloat[Dividends]{\begin{tabular}{c|cccccc} 
        $k$ & 1 & 2 & 3 & 4 & 5 & 6 \\
        \hline
        1 & 0 & 0   & 0     & 0 & 0         & 0 \\
        2 & 0 & $\frac{1-\delta}{2}$   & 0     & 0 & $1/2$     & 0 
    \end{tabular}}
    \caption{Example with $d = 2$}
    \label{table:d=3-example-but-actually-d=2}
\end{table}

The second open question regards the extent to which our results extend beyond the setting of one-sided matching with endowments.
\cite{AumannD1986, McLennan2018} study more general settings and give sufficient conditions for the existence of a 1-dimensional LDE in an economy that has redistributed an $\varepsilon$ share of each good, so one could presumably apply our limiting argument to obtain an LDE for settings where their results hold.
However, our argument for the strong cheapest bundle property requires linearity of preferences.
In particular, we appeal to linearity to argue that if $x \sim_i x'$, then $x^\varepsilon + (x' - x) / 2 \succeq_i x^\varepsilon$ for all $x^\varepsilon$ in a sufficiently small neighborhood of $x$.
That is, exchanging part of the bundle with another part that an agent is indifferent between should not make her strictly worse off.
Note that convexity of preferences is in some sense almost enough to guarantee this condition as $x^\varepsilon$ and $x$ are arbitrarily close.

Moreover, our proof that the limit of the rejective core resides within the set of LDEs with simple prices that satisfy the weak and aggregate (Theorem~\ref{theorem:weak-LDE-contains-rejective-core}) only requires continuity and convexity of preferences.
However, our argument that the latter set resides within the set of LDEs satisfying the strong cheapest bundle property (Theorem~\ref{theorem:from-weak-aggregate-to-strong}) requires that the preferences be linear and that the consumption set be a polyhedron.
Otherwise, the system of linear constraints that describe the set of feasible price modifications may require an infinite number of constraints to describe, in which case strong duality may fail to hold.

\section{Proof of Theorem~\ref{theorem:rejective-core-contains-LDEs}}\label{section:proof-rejective-core-contains-LDEs}

For the proofs in this section, it will be helpful to refer to the coordinate $k^j$ of the first non-zero entry in $p_j$ (if it exists).
Recall from the definition of LDE that this entry must be positive.
\[
    k^j \coloneqq \min \{k \in [d] : p^{(k)}_j \not= 0\} \cup \{d\}
\]

\begin{lemma}\label{lemma:no-ownership-of-higher-tiered-items-unequal-surplus}
Let $(x, p, \alpha) \in LDE_+(u, \omega)$.
$x_{ij} = \omega_{ij} = 0$ for all agents $i$ and items $j$ such that $k^j < k_i$. 
\end{lemma}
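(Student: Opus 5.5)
The plan is to prove, by induction on $\ell = 1, \dots, k_i - 1$, the stronger statement that $x_{ij} = \omega_{ij} = 0$ for every item $j$ with $k^j \leq \ell$, and that in addition $p^{(\ell)} \cdot \omega_i = \alpha^{(\ell)}_i = 0$. Taking $\ell = k_i - 1$ gives the lemma. The argument only uses the sign structure of $p$, the nonnegativity of $\alpha$, the dividend formula $\alpha^{(k)}_i = \max\{p^{(k)} \cdot (x_i - \omega_i), 0\}$, and the definition of $k_i$. It does not need the budget sets or the cheapest bundle property.

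First, two observations.
\begin{itemize}
\item If $k^j < k_i \leq d$, then $k^j < d$. So the fallback value $d$ in the definition of $k^j$ is not in play, $p_j \neq 0$, and by the LDE sign condition $p^{(k^j)}_j > 0$ while $p^{(\ell)}_j = 0$ for all $\ell < k^j$.
\item By the definition of $k_i$ as a minimum, every $\ell < k_i$ satisfies $p^{(\ell)} \cdot \omega_i + \alpha^{(\ell)}_i \leq 0$. This also covers the fallback case $k_i = d$.
\end{itemize}

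Now fix $\ell < k_i$ and assume the claim for all smaller indices. Split the sum $p^{(\ell)} \cdot \omega_i = \sum_j p^{(\ell)}_j \omega_{ij}$ according to $k^j$.
\begin{itemize}
\item Items with $k^j < \ell$ contribute nothing, since $\omega_{ij} = 0$ by the inductive hypothesis. Their price $p^{(\ell)}_j$ may be negative, but that is harmless here.
\item Items with $k^j > \ell$ contribute nothing, since $p^{(\ell)}_j = 0$.
\item Items with $k^j = \ell$ contribute $p^{(\ell)}_j \omega_{ij}$ with $p^{(\ell)}_j > 0$.
\end{itemize}
Hence $p^{(\ell)} \cdot \omega_i = \sum_{j : k^j = \ell} p^{(\ell)}_j \omega_{ij} \geq 0$. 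Since $\alpha^{(\ell)}_i \geq 0$, the sum $p^{(\ell)} \cdot \omega_i + \alpha^{(\ell)}_i$ is nonnegative. It is also $\leq 0$ by the second observation, so it equals $0$. Both terms are nonnegative, so each is $0$. This forces $\omega_{ij} = 0$ for every $j$ with $k^j = \ell$, and $\alpha^{(\ell)}_i = 0$.

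Next, $\alpha^{(\ell)}_i = 0$ together with the dividend formula gives $p^{(\ell)} \cdot x_i \leq p^{(\ell)} \cdot \omega_i = 0$. The same three-way split applies to $x_i$, using the inductive hypothesis $x_{ij} = 0$ for $k^j < \ell$. It yields $p^{(\ell)} \cdot x_i = \sum_{j : k^j = \ell} p^{(\ell)}_j x_{ij} \geq 0$. So this sum is $0$, and $x_{ij} = 0$ for all $j$ with $k^j = \ell$. This completes the inductive step.

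The only delicate point is that later-currency prices of higher-tier items can be negative. This is why the induction has to zero out the higher-tier holdings of both $\omega_i$ and $x_i$ before examining currency $\ell$: otherwise negative terms could cancel positive ones and the sign argument would fail. I expect no other obstacle.
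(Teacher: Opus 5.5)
Your proof is correct and is essentially the paper's argument. The paper takes the smallest tier $\ell$ at which agent $i$ holds anything in $x_i$ or $\omega_i$. It notes that items priced negatively in currency $\ell$ are then unheld, so $p^{(\ell)} \cdot \omega_i + \alpha^{(\ell)}_i \geq \max\{p^{(\ell)} \cdot x_i, p^{(\ell)} \cdot \omega_i\} > 0$, which contradicts the minimality of $k_i$; this is just the minimal-counterexample form of your tier-by-tier induction, with your extra conclusion $p^{(\ell)} \cdot \omega_i = \alpha^{(\ell)}_i = 0$ for $\ell < k_i$ as a harmless by-product.
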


\begin{proof}
Let $\ell \coloneqq \min \{k^j: x_{ij} > 0 \vee \omega_{ij} > 0\}$, and suppose by way of contradiction that $\ell < k_i$.
Note that $p^{(\ell)}_j > 0$ for all items $j$ such that $k^j = \ell$ (since in particular, $\ell \in [d - 1]$), so we know that either $x_{ij} > 0$ or $\omega_{ij} > 0$ for some item $j$ with $p^{(\ell)}_j > 0$.
Moreover, since the first non-zero entry in $p_j$ is positive for all items $j$, the minimality of $\ell$ implies that $x_{ij} = \omega_{ij} = 0$ for all items $j$ such that $p^{(\ell)}_j < 0$.
It follows that either $p^{(\ell)} \cdot \omega_i + \alpha^{(\ell)}_i \geq p^{(\ell)} \cdot x_i > 0$ or $p^{(\ell)} \cdot \omega_i > 0$, but either inequality would contradict the minimality of $k_i$.
\end{proof}

\rejectiveCoreContainsLDEs*

\begin{proof}
Let $(x, p, \alpha) \in LDE_+(u, \omega)$, and suppose by way of contradiction that $x$ does not reside within the intersection.
Then, for some replica economy $(u, \omega)_N$, there exist coalitions $C$ and $C'$ and an alternative allocation $y$ such that $\sum_{i \in C \cup C'} y_i \leq \sum_{i \in C} \omega_i + \sum_{i \in C'} x_i$ yet 
\begin{itemize}
    \item $y_i \succ_i x_i$ for all $i \in C$.
    \item $y_i \succeq_i x_i$ for all $i \in C'$.
    \item There exists $i \in C \cup C'$ such that $y_i \succ_i x_i$.
\end{itemize}
Since agent $i$ weakly prefers $x_i$ over her other affordable bundles, it must be that $y_i \not\in C^\alpha_i(p)$ for all $i \in C \cup C'$ such that $y_i \succ_i x_i$.
That is, for all such $i$, there exists $k \in [k_i]$ such that 
\[
    p^{(k)} \cdot y_i > p^{(k)} \cdot \omega_i + \alpha^{(k)}_i \geq p^{(k)} \cdot x_i.
\]
Let 
\[
    \ell \coloneqq \min \{k \in [d] : \exists\,i \in C \cup C' : p^{(k)} \cdot y_i > p^{(k)} \cdot x_i\},
\]
so $p^{(k)} \cdot y_i \leq p^{(k)} \cdot x_i$ for all $i \in C \cup C'$ and $k \in [\ell-1]$.
Note that as a result, 
\begin{equation}\label{equation:strong-cheapest-bundle-implication-LDE-in-intersection}
    p^{(k)} \cdot (y_i - x_i) \geq 0 \,\forall\, k \in [\ell]
\end{equation}
by the strong cheapest bundle property with equality for all $k \in [\ell - 1]$.
It follows that
\begin{equation}\label{equation:coalition-spending-in-higher-tier-equal}
    \sum_{i \in C \cup C'} p^{(k)} \cdot y_i 
        \geq \sum_{i \in C \cup C'} p^{(k)} \cdot x_i \,\forall\, k \in [\ell] 
\end{equation}
with equality for all $k \in [\ell-1]$ and strict equality for $\ell$ (by definition of $\ell$).
% We claim that the inequality is in fact strict for $\ell$.
% Otherwise, there exists $i \in C \cup C'$ such that $p^{(\ell)} \cdot (y_i - x_i) < 0$ since there exists an agent for which the opposite strict inequality holds by the definition of $\ell$.
% This contradicts the strong cheapest bundle property.
% To see why, note that for all $i \in C$ with $k_i \in [\ell]$, the minimality of $\ell$ implies that 
% \begin{equation*}
%     p^{(\ell)} \cdot (y_i - \omega_i) > \alpha^{(\ell)}_i \geq 0.
% \end{equation*}
% Meanwhile, Lemma~\ref{lemma:no-ownership-of-higher-tiered-items-unequal-surplus} and Equation~\ref{equation:strong-cheapest-bundle-implication-LDE-in-intersection} imply that 
% \begin{equation*}
%     p^{(\ell)} \cdot (y_i - \omega_i) = p^{(\ell)} \cdot (y_i - x_i) \geq 0.
% \end{equation*}
% for all $i \in C$ with $k_i \not\in [\ell]$.

Now, let $S$ denote the set of items $j$ with $k^j \in [\ell - 1]$.
That is, each item $j \in S$ has its first non-zero entry prior to the $\ell$-th coordinate.
Note that by Lemma~\ref{lemma:no-ownership-of-higher-tiered-items-unequal-surplus}, 
\begin{equation}\label{equation:lower-tier-consumer-consume-no-high-tier-goods}
    \omega_{ij} = 0 \,\forall\, i \in C, j \in S    
\end{equation}
since $y_i \not\in C^\alpha_i(p)$ for all such $i$, so $\ell \leq \min \{k_i : i \in C\}$.
We argue that 
\begin{equation}\label{equation:high-tier-consumers-consume-all-high-tier-items}
    \sum_{i \in C'} y_{ij} = \sum_{i \in C'} x_{ij} \,\forall\, j \in S
\end{equation}
To see why, first note that for all such $j$,
\[
    \sum_{i \in C'} y_{ij} \leq \sum_{i \in C \cup C'} y_{ij} \leq \sum_{i \in C} \omega_{ij} + \sum_{i \in C'} x_{ij} = \sum_{i \in C'} x_{ij}
\]
The second inequality follows from fact that $y$ is a redistribution, and the equality follows from Equation~\ref{equation:lower-tier-consumer-consume-no-high-tier-goods}.
Suppose by way of contradiction that $\sum_{i \in C'} y_{ij} < \sum_{i \in C'} x_{ij}$ for some $j \in S$.
Let $S'$ denote the set of items in $S$ for which this inequality holds, and let
\[
    h \coloneqq \min \{k^j : j \in S'\}
\]
Note that minimality of $h$ implies that $\sum_{i \in C'} y_{ij} = \sum_{i \in C'} x_{ij}$ for all $j \in S'$ with $k^j \in [h-1]$.
Moreover, the definition of $S'$ implies that  $\sum_{i \in C'} y_{ij} = \sum_{i \in C'} x_{ij}$ for all $j \in S \setminus S'$.
It follows that
\begin{align*}
    \sum_{i \in C'} p^{(h)} \cdot (y_i - x_i) 
        &= \sum_{j \in S} \sum_{i \in C'} p^{(h)}_j \cdot (y_{ij} - x_{ij}) \tag{$p^{(h)}_j = 0 \,\forall\, j \not\in S$} \\
        &= \sum_{\substack{j \in S' : \\ k^j \geq h}} \sum_{i \in C'} p^{(h)}_j \cdot (y_{ij} - x_{ij}) \\
        &= \sum_{\substack{j \in S' : \\ k^j = h}} \sum_{i \in C'} p^{(h)}_j \cdot (y_{ij} - x_{ij}) \tag{$p^{(h)}_j = 0 \,\forall\, j: k^j > h$}  \\
        &< 0 \tag{definition of $S'$; $p^{(h)}_j > 0 \,\forall\, j : k^j = h$}
\end{align*}
By Equation~\ref{equation:coalition-spending-in-higher-tier-equal}, this inequality implies that
\[
    \sum_{i \in C} p^{(h)} \cdot y_i > \sum_{i \in C} p^{(h)} \cdot x_i,
\]
so there exists $i \in C$ such that $p^{(h)} \cdot y_i > p^{(h)} \cdot x_i$, which contradicts the minimality of $\ell$ since $h < \ell$.

We now show that 
\begin{equation}\label{equation:coalition-spending-in-current-tier-greater}
    \sum_{i \in C \cup C'} p^{(\ell)} \cdot y_i 
        > \sum_{i \in C} p^{(\ell)} \cdot \omega_i + \sum_{i \in C'} p^{(\ell)} \cdot x_i. 
\end{equation}
Since the inequality in Equation~\ref{equation:coalition-spending-in-higher-tier-equal} is strict for $\ell$, it suffices to show that $p^{(\ell)} \cdot (x_i - \omega_i) \geq 0$ for all $i \in C$.
If not, then $k_i \in [\ell]$.
Note that the minimality of $\ell$, together with the fact that $y_i \succ_i x_i$, implies that $k_i = \ell$.
However, $x_{ij} = 0$ for all $j \in S$ by Lemma~\ref{lemma:no-ownership-of-higher-tiered-items-unequal-surplus} and $y_{ij} = 0$ for all $j \in S$ by Equations~\ref{equation:lower-tier-consumer-consume-no-high-tier-goods} and~\ref{equation:high-tier-consumers-consume-all-high-tier-items} (and the fact that $y$ is a redistribution), so there exists $t \in (0,1)$ such that $ty_i + (1-t)x_i$ is affordable but strictly preferred over $x_i$, contradicting the fact that $x_i$ is agent $i$'s favorite bundle.

To conclude, note that Equations~\ref{equation:lower-tier-consumer-consume-no-high-tier-goods} and~\ref{equation:high-tier-consumers-consume-all-high-tier-items}, coupled with the fact that $y$ is a redistribution, that is, $\sum_{i \in C \cup C'} y_{i} \leq \sum_{i \in C} \omega_{i} + \sum_{i \in C'} x_{i}$, imply that $y_{ij} = 0$ for all $i \in C, j \in S$.
It now readily follows that
\[
     \sum_{i \in C} p^{(\ell)} \cdot (y_i - \omega_i) + \sum_{i \in C'} p^{(\ell)} \cdot (y_i - x_i) 
        = \sum_{j \not\in S} p^{(\ell)}_j \cdot \left(\sum_{i \in C \cup C'} y_{ij} - \sum_{i \in C} \omega_{ij} - \sum_{i \in C'} x_{ij} \right) \\
        \leq 0 
\]
The equality uses Equations~\ref{equation:lower-tier-consumer-consume-no-high-tier-goods} and~\ref{equation:high-tier-consumers-consume-all-high-tier-items}.
The inequality uses the fact that $y$ is a redistribution and the fact that $p^{(\ell)}_j \geq 0$ for all $j \not\in S$.
However, note that this final inequality contradicts Equation~\ref{equation:coalition-spending-in-current-tier-greater}.
\end{proof}

\section{Proof of Theorem~\ref{theorem:from-weak-aggregate-to-strong}}\label{section:proof-from-weak-to-strong}

\fromWeakToStrong*

\begin{proof}
Let $(x, p, \alpha) \in LDE_-(u, \omega)$.
We show that there exists prices $q$ and dividends $\gamma$ such that $(x, q, \gamma) \in LNE_+(u, \omega)$.
For all $k \in [d]$, let
\begin{align*}
    S_k 
        &\coloneqq \{j \in [n] : p^{(\ell)}_j = 0 \,\forall\, \ell \in [k-1]\} \\
    T_k 
        &= \{i \in [n] :  \exists\,\ell \in [k-1] : p^{(\ell)} \cdot \omega_i + \alpha^{(\ell)}_i > 0\}
\end{align*}
We argue that for all $k \in [d]$, there exists $r^{(k)}$ such that
\begin{align*}
    & r^{(k)}_j = 0 \,\forall\, j  \in S_k \\
    & (r^{(k)} + p^{(k)}) \cdot (y - x_i) \geq 0 \,\forall\, i \in T_k, y \in C^\alpha_i(p) : y \succeq_i x_i
\end{align*}
Note that this linear program can be made finite since $\{y \in C^\alpha_i(p) : y \succeq_i x_i\}$ is a polytope, so it suffices to consider only the vertices.
The dual of this linear program is
\begin{align*}
    \min 
        & \textstyle \sum_{i \in T_k} \sum_{y \in C^\alpha_i(p) : y \succeq_i x_i} \beta_{iy} \cdot p^{(k)} \cdot (y - x_i) \\
    \mathrm{s.t.} 
        & \textstyle \sum_{i \in T_k} \sum_{y \in C^\alpha_i(p) : y \succeq_i x_i} \beta_{iy} \cdot (y_j - x_{ij}) = 0 \,\forall\, j \not\in S_k \\
        & \beta_{iy} \geq 0 \,\forall\, i \in T_k, y \succeq_i x_i
\end{align*}
Note that 0 is a feasible solution to the dual, so by strong duality, if the primal is infeasible, then there exists a feasible dual $(\beta_{iy})_{iy}$ with negative objective value, contradicting the fact that $(x,p,\alpha)$ satisfies the aggregate cheapest bundle property.

Now, for all $k \in [d]$, let
\begin{align*}
    q^{(k)} &\coloneqq p^{(k)} + r^{(k)}\\
    \gamma^{(k)}_i &\coloneqq \max\{q^{(k)} \cdot (x_i - \omega_i), 0\} 
\end{align*}
Since the simplicity of $p$ implies that $p^{(k)}_j = 0$ for all $j \not\in S_k$, we have that
\begin{equation}\label{equation:q-decomposition-item-wise}
    q^{(k)}_j = p^{(k)}_j \cdot \1(j \in S_k) + r^{(k)}_j \cdot \1(j \not\in S_k)
\end{equation}
for all $j \in [m]$ and $k \in [d]$.
Moreover, the first non-zero entry in $q_j$ coincides with the first non-zero entry in $p_j$.
To see why, note that $S_1 \supseteq \dots \supseteq S_{d+1}$.
Thus, Equation~\ref{equation:q-decomposition-item-wise} implies that for $j \in S_k \setminus S_{k+1}$,
\begin{equation}\label{equation:first-non-zero-p-q-same}
    q^{(\ell)}_j = p^{(\ell)}_j \cdot \1(\ell = k) + r^{(\ell)}_j \cdot \1(\ell \not\in [k])
\end{equation}
Another important consequence of Equation~\ref{equation:q-decomposition-item-wise} is that for all $y_i$ such that $y_{ij} = 0$ for all $j \not\in S_k$,
\begin{equation}\label{equation:q-p-coincide}
    q^{(k)} \cdot y_i 
        = \sum_{j \in S_{k}} p^{(k)}_j \cdot y_{ij} + \sum_{j \not\in S_{k}} q^{(k)}_j \cdot y_{ij} 
        = \sum_{j \in S_{k}} p^{(k)}_j \cdot y_{ij} 
        = p^{(k)} \cdot y_i 
\end{equation}

We now argue that $C^\gamma_i(q) = C^\alpha_i(p)$ for all agents $i$.
Let
\[
    k_i \coloneqq \min \{k \in [d] : p^{(k)} \cdot \omega_i + \alpha^{(k)}_i > 0\} \cup \{d\},
\]
and note that
\begin{align*}
    C^\alpha_i(p) 
        &= \{y_i \in \Delta^n_i : p^{(k)} \cdot y_i \leq p^{(k)} \cdot \omega_i + \alpha^{(k)}_i \,\forall\, k \in [k_i]\} \tag{definition} \\
        &= \textstyle \{y_i \in \Delta^n_- : y_{ij} = 0 \,\forall\, j \not\in S_{k_i} \wedge \sum_{j \in S_{k_i}} p^{(k_i)}_j \cdot y_{ij} \leq \sum_{j \in S_{k_i}} p^{(k_i)}_j \cdot \omega_{ij} + \alpha^{(k_i)}_i\} \tag{$p \in \RR^{m \times d}_+$}
\end{align*}
The last equality follows from the fact that $p^{(k)} \cdot y_i \leq p^{(k)} \cdot \omega_i + \alpha^{(k)}_i \leq 0$ implies that $y_{ij} = \omega_{ij} = 0$ for all $k \in [d]$ and $j \not\in S_k$ (which implicitly uses the non-negativity of $p$).
It follows that Equation~\ref{equation:q-p-coincide} holds for $x_i$ and $\omega_i$ for all $k \in [k_i]$ since both lie in $C^\alpha_i(p)$ and $S_k \supseteq S_{k_i}$ for all $k \in [k_i]$, so
\[
    \gamma^{(k)}_i = \max\{q^{(k)} \cdot (x_i - \omega_i), 0\} = \max\{p^{(k)} \cdot (x_i - \omega_i),0\} = \alpha^{(k)}_i
\]
and
\[
    q^{(k)} \cdot \omega_i + \gamma^{(k)}_i = p^{(k)} \cdot \omega_i + \alpha^{(k)}_i
\]
for all $k \in [k_i]$.
At this point, it is not hard to see that $C^\alpha_i(p) \subseteq C^\gamma_i(q)$ using Equation~\ref{equation:q-p-coincide} and the alternative definition of $C^\alpha_i(p)$.
Moreover, note that the first currency that agent $i$ has positive wealth in under $q$ is exactly $k_i$, the first currency that she has positive wealth in under $p$, by Equation~\ref{equation:first-non-zero-p-q-same}.

To prove that $C^\alpha_i(p) \supseteq C^\gamma_i(q)$, it suffices to show that $y_{ij} = 0$ for all $y_i \in C^\gamma_i(q)$ and $j \not\in S_{k_i}$ as well.
Given $y_i \in C^\gamma_i(q)$, let
\[
    \ell \coloneqq \max\{k \in [d] : y_{ij} = 0 \,\forall\, j \not\in S_k\}
\]
and suppose by way of contradiction that $\ell \in [k_i - 1]$.
The maximality of $\ell$ implies that $y_{ij} = 0$ for all $j \not\in S_\ell$ and $y_{ij} > 0$ for some $j \in S_\ell \setminus S_{\ell+1}$.
By Equation~\ref{equation:q-p-coincide},
\[
    q^{(\ell)} \cdot y_i = \sum_{j \in S_\ell \setminus S_{\ell+1}} p^{(\ell)}_j \cdot y_{ij} > 0
\]
contradicting the fact that $q^{(\ell)} \cdot y_i \leq q^{(\ell)} \cdot \omega_i + \gamma^{(\ell)}_i = p^{(\ell)} \cdot \omega_i + \alpha^{(\ell)}_i = 0$.
Thus, $C^\gamma_i(q) = C^\alpha_i(p)$ for all agents $i$, and $(x, q, \gamma)$ constitutes an EDE.

It remains to show that $(x, q, \gamma)$ satisfies the strong cheapest bundle property.
Let $y_i \in C^\gamma_i(q) = C^\alpha_i(p)$ such that $y_i \succeq_i x_i$.
Since $x_i \in C^\gamma_i(q) = C^\alpha_i(p)$ as well, we have that $x_{ij} = y_{ij} = 0$ for all $j \not\in S_{k_i}$, so by Equation~\ref{equation:q-p-coincide}, the fact that $S_k \subseteq S_{k_i}$ for all $k \in [k_i]$, and the weak cheapest bundle property of $(x,p,\alpha)$, we have
\[
    q^{(k)} \cdot y_i = p^{(k)} \cdot y_i = p^{(k)} \cdot x_i = q^{(k)} \cdot x_i
\]
for all $k \in [k_i]$.
For $k \not\in [k_i]$, simply note that $i \in T_k$, so $q^{(k)} \cdot y_i \geq q^{(k)} \cdot x_i$ by definition of $q^{(k)}$.
\end{proof}

% In the interest of anonymization, please do not include acknowledgements in your submission.
%
%\begin{acks}
%
%	The authors would like to thank Dr. Maura Turolla of Telecom
%	Italia for providing specifications about the application scenario.
%
%	The work is supported by the \grantsponsor{GS501100001809}{National
%		Natural Science Foundation of
%		China}{http://dx.doi.org/10.13039/501100001809} under Grant
%	No.:~\grantnum{GS501100001809}{61273304\_a}
%	and~\grantnum[http://www.nnsf.cn/youngscientsts]{GS501100001809}{Young
%		Scientsts' Support Program}.
%
%
%\end{acks}

% Bibliography
\bibliographystyle{alpha}
\bibliography{references}

@Article{HyllandZ1979,
journal={Journal of Political Economy},
author={Hylland, Aanund and Zeckhauser, Richard},
title={The Efficient Allocation of Individuals to Positions},
year={1979},
month={April},
pages={293-314},
volume={87},
number={2},
doi={10.1086/260757},
url={https://ideas.repec.org/a/ucp/jpolec/v87y1979i2p293-314.html},
}

@Inbook{Mas-Colell1992,
author="Mas-Colell, Andreu",
editor="Majumdar, Mukul",
title="Equilibrium Theory with Possibly Satiated Preferences",
bookTitle="Equilibrium and Dynamics: Essays in Honour of David Gale",
year="1992",
publisher="Palgrave Macmillan UK",
address="London",
pages="201--213",
isbn="978-1-349-11696-6",
doi="10.1007/978-1-349-11696-6_9",
url="https://doi.org/10.1007/978-1-349-11696-6_9"
}

@article{AbrahamSW1996,
 ISSN = {09382259, 14320479},
 URL = {http://www.jstor.org/stable/25054959},
 author = {H. Abraham and D. R. Smart and J. Whittaker},
 journal = {Economic Theory},
 number = {1},
 pages = {177--182},
 publisher = {Springer},
 title = {Existence of Competitive Equilibrium When Incomes May Be Zero},
 urldate = {2026-02-03},
 volume = {8},
 year = {1996}
}

@article{EcheniqueMZ2021Constrained,
Author = {Echenique, Federico and Miralles, Antonio and Zhang, Jun},
Title = {Constrained Pseudo-Market Equilibrium},
Journal = {American Economic Review},
Volume = {111},
Number = {11},
Year = {2021},
Month = {November},
Pages = {3699–3732},
DOI = {10.1257/aer.20201769},
URL = {https://www.aeaweb.org/articles?id=10.1257/aer.20201769}}

@article{EcheniqueMZ23,
title = {Balanced equilibrium in pseudo-markets with endowments},
journal = {Games and Economic Behavior},
volume = {141},
pages = {428-443},
year = {2023},
issn = {0899-8256},
doi = {https://doi.org/10.1016/j.geb.2023.07.002},
url = {https://www.sciencedirect.com/science/article/pii/S0899825623000957},
author = {Federico Echenique and Antonio Miralles and Jun Zhang}
}

@article{Le2017,
author = {Le, Phuong},
title = {Competitive equilibrium in the random assignment problem},
journal = {International Journal of Economic Theory},
volume = {13},
number = {4},
pages = {369-385},
doi = {https://doi.org/10.1111/ijet.12135},
url = {https://onlinelibrary.wiley.com/doi/abs/10.1111/ijet.12135},
eprint = {https://onlinelibrary.wiley.com/doi/pdf/10.1111/ijet.12135},
year = {2017}
}

@article{GargTV2024,
	author = {Garg, Jugal and Tr{\"o}bst, Thorben and Vazirani, Vijay},
	date = {2024/08/12},
	doi = {10.1007/s10458-024-09670-9},
	id = {Garg2024},
	isbn = {1573-7454},
	journal = {Autonomous Agents and Multi-Agent Systems},
	number = {2},
	pages = {40},
	title = {One-sided matching markets with endowments: equilibria and algorithms},
	url = {https://doi.org/10.1007/s10458-024-09670-9},
	volume = {38},
	year = {2024}}

@article{BudishCKM2013,
Author = {Budish, Eric and Che, Yeon-Koo and Kojima, Fuhito and Milgrom, Paul},
Title = {Designing Random Allocation Mechanisms: Theory and Applications},
Journal = {American Economic Review},
Volume = {103},
Number = {2},
Year = {2013},
Month = {April},
Pages = {585–623},
DOI = {10.1257/aer.103.2.585},
URL = {https://www.aeaweb.org/articles?id=10.1257/aer.103.2.585}}

@article{Eisenberg1961,
 ISSN = {00251909, 15265501},
 URL = {http://www.jstor.org/stable/2627055},
 author = {E. Eisenberg},
 journal = {Management Science},
 number = {4},
 pages = {337--350},
 publisher = {INFORMS},
 title = {Aggregation of Utility Functions},
 urldate = {2026-02-04},
 volume = {7},
 year = {1961}
}

@article{Varian1974,
title = {Equity, envy, and efficiency},
journal = {Journal of Economic Theory},
volume = {9},
number = {1},
pages = {63-91},
year = {1974},
issn = {0022-0531},
doi = {https://doi.org/10.1016/0022-0531(74)90075-1},
url = {https://www.sciencedirect.com/science/article/pii/0022053174900751},
author = {Hal R Varian}
}

@inproceedings{NguyenT2024,
author = {Nguyen, Th\`{a}nh and Teytelboym, Alexander},
title = {Equilibrium in Pseudomarkets},
year = {2024},
isbn = {9798400707049},
publisher = {Association for Computing Machinery},
address = {New York, NY, USA},
url = {https://doi.org/10.1145/3670865.3673504},
doi = {10.1145/3670865.3673504},
booktitle = {Proceedings of the 25th ACM Conference on Economics and Computation},
pages = {1289},
numpages = {1},
location = {New Haven, CT, USA},
series = {EC '24}
}

@article{ShapleyS1974,
title = {On cores and indivisibility},
journal = {Journal of Mathematical Economics},
volume = {1},
number = {1},
pages = {23-37},
year = {1974},
issn = {0304-4068},
doi = {https://doi.org/10.1016/0304-4068(74)90033-0},
url = {https://www.sciencedirect.com/science/article/pii/0304406874900330},
author = {Lloyd Shapley and Herbert Scarf}
}

@article{RothP1977,
title = {Weak versus strong domination in a market with indivisible goods},
journal = {Journal of Mathematical Economics},
volume = {4},
number = {2},
pages = {131-137},
year = {1977},
issn = {0304-4068},
doi = {https://doi.org/10.1016/0304-4068(77)90004-0},
url = {https://www.sciencedirect.com/science/article/pii/0304406877900040},
author = {Alvin E. Roth and Andrew Postlewaite}
}

@article{Roth1982,
title = {Incentive compatibility in a market with indivisible goods},
journal = {Economics Letters},
volume = {9},
number = {2},
pages = {127-132},
year = {1982},
issn = {0165-1765},
doi = {https://doi.org/10.1016/0165-1765(82)90003-9},
url = {https://www.sciencedirect.com/science/article/pii/0165176582900039},
author = {Alvin E. Roth}
}

@article{HeMPY2018,
Author = {He, Yinghua and Miralles, Antonio and Pycia, Marek and Yan, Jianye},
Title = {A Pseudo-Market Approach to Allocation with Priorities},
Journal = {American Economic Journal: Microeconomics},
Volume = {10},
Number = {3},
Year = {2018},
Month = {August},
Pages = {272–314},
DOI = {10.1257/mic.20150259},
URL = {https://www.aeaweb.org/articles?id=10.1257/mic.20150259}}

@article{bergstrom1976,
title = {How to discard ‘free disposability’ - at no cost},
journal = {Journal of Mathematical Economics},
volume = {3},
number = {2},
pages = {131-134},
year = {1976},
issn = {0304-4068},
doi = {https://doi.org/10.1016/0304-4068(76)90021-5},
url = {https://www.sciencedirect.com/science/article/pii/0304406876900215},
author = {Theodore C. Bergstrom}
}

@article{PolemarchakisS1993,
title = {Competitive equilibria without free disposal or nonsatiation},
journal = {Journal of Mathematical Economics},
volume = {22},
number = {1},
pages = {85-99},
year = {1993},
issn = {0304-4068},
doi = {https://doi.org/10.1016/0304-4068(93)90032-G},
url = {https://www.sciencedirect.com/science/article/pii/030440689390032G},
author = {H.M. Polemarchakis and P. Siconolfi}
}

@article{GulPZ2024,
author = {Gul, Faruk and Pesendorfer, Wolfgang and Zhang, Mu},
title = {Efficient Allocation of Indivisible Goods in Pseudomarkets with Constraints},
journal = {Journal of Political Economy},
volume = {132},
number = {11},
pages = {3708-3736},
year = {2024},
doi = {10.1086/730561},

URL = { 
    
        https://doi.org/10.1086/730561
    
    

},
eprint = { 
    
        https://doi.org/10.1086/730561
    
    

}
}

@misc{GulP2020,
      title={Lindahl Equilibrium as a Collective Choice Rule}, 
      author={Faruk Gul and Wolfgang Pesendorfer},
      year={2020},
      eprint={2008.09932},
      archivePrefix={arXiv},
      primaryClass={econ.TH},
      url={https://arxiv.org/abs/2008.09932}, 
}

@article{ArrowD1954,
 ISSN = {00129682, 14680262},
 URL = {http://www.jstor.org/stable/1907353},
 author = {Kenneth J. Arrow and Gerard Debreu},
 journal = {Econometrica},
 number = {3},
 pages = {265--290},
 publisher = {[Wiley, Econometric Society]},
 title = {Existence of an Equilibrium for a Competitive Economy},
 urldate = {2026-02-04},
 volume = {22},
 year = {1954}
}

@article{Debreu1962,
 ISSN = {00206598, 14682354},
 URL = {http://www.jstor.org/stable/2525394},
 author = {Gerard Debreu},
 journal = {International Economic Review},
 number = {3},
 pages = {257--273},
 publisher = {[Economics Department of the University of Pennsylvania, Wiley, Institute of Social and Economic Research, Osaka University]},
 title = {New Concepts and Techniques for Equilibrium Analysis},
 urldate = {2026-02-04},
 volume = {3},
 year = {1962}
}

@article{McKenzie1959,
 ISSN = {00129682, 14680262},
 URL = {http://www.jstor.org/stable/1907777},
 author = {Lionel W. McKenzie},
 journal = {Econometrica},
 number = {1},
 pages = {54--71},
 publisher = {[Wiley, Econometric Society]},
 title = {On the Existence of General Equilibrium for a Competitive Market},
 urldate = {2026-02-04},
 volume = {27},
 year = {1959}
}

@article{Maxfield1997,
title = {General equilibrium and the theory of directed graphs},
journal = {Journal of Mathematical Economics},
volume = {27},
number = {1},
pages = {23-51},
year = {1997},
issn = {0304-4068},
doi = {https://doi.org/10.1016/0304-4068(95)00763-6},
url = {https://www.sciencedirect.com/science/article/pii/0304406895007636},
author = {Robert R. Maxfield}
}

@Inbook{Eaves1985,
author="Curtis Eaves, B.",
editor="Manne, Alan S.",
title="Finite solution of pure trade markets with Cobb-Douglas utilities",
bookTitle="Economic Equilibrium: Model Formulation and Solution",
year="1985",
publisher="Springer Berlin Heidelberg",
address="Berlin, Heidelberg",
pages="226--239",
isbn="978-3-642-00917-4",
doi="10.1007/BFb0121035",
url="https://doi.org/10.1007/BFb0121035"
}

@article{Moore1975,
 ISSN = {00206598, 14682354},
 URL = {http://www.jstor.org/stable/2525812},
 author = {James C. Moore},
 journal = {International Economic Review},
 number = {2},
 pages = {267--300},
 publisher = {[Economics Department of the University of Pennsylvania, Wiley, Institute of Social and Economic Research, Osaka University]},
 title = {The Existence of "Compensated Equilibrium" and the Structure of the Pareto Efficiency Frontier},
 urldate = {2026-02-04},
 volume = {16},
 year = {1975}
}

@article{DebreuS1963,
 ISSN = {00206598, 14682354},
 URL = {http://www.jstor.org/stable/2525306},
 author = {Gerard Debreu and Herbert Scarf},
 journal = {International Economic Review},
 number = {3},
 pages = {235--246},
 publisher = {[Economics Department of the University of Pennsylvania, Wiley, Institute of Social and Economic Research, Osaka University]},
 title = {A Limit Theorem on the Core of an Economy},
 urldate = {2026-02-04},
 volume = {4},
 year = {1963}
}

@book{Edgeworth1881,
  title={Mathematical Psychics: An Essay on the Application of Mathematics to the Moral Sciences},
  author={Edgeworth, Francis Ysidro},
  year={1881},
  publisher={C. Kegan Paul \& Co.},
  address={London}
}

@book{ArrowH1971,
  author    = {Arrow, Kenneth J. and Hahn, Frank H.},
  title     = {General Competitive Analysis},
  year      = {1971},
  series    = {Mathematical Economics Texts},
  publisher = {Holden-Day},
  address   = {San Francisco}
}

@book{VonNeumannM1944,
  author    = {von Neumann, John and Morgenstern, Oskar},
  title     = {Theory of Games and Economic Behavior},
  year      = {1944},
  publisher = {Princeton University Press},
  address   = {Princeton, NJ}
}

@incollection{Gillies1959,
  title = {Solutions to general non-zero-sum games},
  booktitle = {Contributions to the Theory of Games},
  volume = {4},
  author = {Gillies, Donald B.},
  editor = {Albert William Tucker and Robert Duncan Luce},
  year = {1959},
  pages = {47--83},
  publisher = {Princeton University Press},
  address = {Princeton, NJ},
  series = {Annals of Mathematics Studies}
}

@article{Aumann1961,
 ISSN = {00029947},
 URL = {http://www.jstor.org/stable/1993348},
 author = {Robert J. Aumann},
 journal = {Transactions of the American Mathematical Society},
 number = {3},
 pages = {539--552},
 publisher = {American Mathematical Society},
 title = {The Core of a Cooperative Game Without Side Payments},
 urldate = {2026-02-04},
 volume = {98},
 year = {1961}
}

@article{Peleg1963,
 ISSN = {00029947},
 URL = {http://www.jstor.org/stable/1993770},
 author = {Bezalel Peleg},
 journal = {Transactions of the American Mathematical Society},
 number = {2},
 pages = {280--292},
 publisher = {American Mathematical Society},
 title = {Solutions to Cooperative Games Without Side Payments},
 urldate = {2026-02-04},
 volume = {106},
 year = {1963}
}

@article{Scarf1967,
 ISSN = {00129682, 14680262},
 URL = {http://www.jstor.org/stable/1909383},
 author = {Herbert E. Scarf},
 journal = {Econometrica},
 number = {1},
 pages = {50--69},
 publisher = {[Wiley, Econometric Society]},
 title = {The Core of an N Person Game},
 urldate = {2026-02-04},
 volume = {35},
 year = {1967}
}

@incollection{Anderson1992,
title = {Chapter 14 The core in perfectly competitive economies},
series = {Handbook of Game Theory with Economic Applications},
publisher = {Elsevier},
volume = {1},
pages = {413-457},
year = {1992},
issn = {1574-0005},
doi = {https://doi.org/10.1016/S1574-0005(05)80017-7},
url = {https://www.sciencedirect.com/science/article/pii/S1574000505800177},
author = {Robert M. Anderson}
}

@article{Aumann1964,
 ISSN = {00129682, 14680262},
 URL = {http://www.jstor.org/stable/1913732},
 author = {Robert J. Aumann},
 journal = {Econometrica},
 number = {1/2},
 pages = {39--50},
 publisher = {[Wiley, Econometric Society]},
 title = {Markets with a Continuum of Traders},
 urldate = {2026-02-04},
 volume = {32},
 year = {1964}
}

@techreport{McLennan2018,
author = {Andrew McLennan},
title = {Efficient disposal equilibria of pseudomarkets},
year = {2018},
type = {Mimeo},
institution = {University of Queensland}
}

@article{Konovalov2005,
 ISSN = {09382259, 14320479},
 URL = {http://www.jstor.org/stable/25055907},
 author = {Alexander Konovalov},
 journal = {Economic Theory},
 number = {3},
 pages = {711--719},
 publisher = {Springer},
 title = {The Core of an Economy with Satiation},
 urldate = {2026-02-06},
 volume = {25},
 year = {2005}
}

@article{DrezeM1980,
title = {Optimality properties of rationing schemes},
journal = {Journal of Economic Theory},
volume = {23},
number = {2},
pages = {131-149},
year = {1980},
issn = {0022-0531},
doi = {https://doi.org/10.1016/0022-0531(80)90001-0},
url = {https://www.sciencedirect.com/science/article/pii/0022053180900010},
author = {Jacques H Drèze and Heinz Müller}
}

@article{AumannD1986,
 ISSN = {00129682, 14680262},
 URL = {http://www.jstor.org/stable/1914300},
 author = {Robert J. Aumann and Jacques H. Drèze},
 journal = {Econometrica},
 number = {6},
 pages = {1271--1318},
 publisher = {[Wiley, Econometric Society]},
 title = {Values of Markets with Satiation or Fixed Prices},
 urldate = {2026-02-06},
 volume = {54},
 year = {1986}
}

@article{Makarov1981,
title = {Some results on general assumptions about the existence of economic equilibrium},
journal = {Journal of Mathematical Economics},
volume = {8},
number = {1},
pages = {87-99},
year = {1981},
issn = {0304-4068},
doi = {https://doi.org/10.1016/0304-4068(81)90014-8},
url = {https://www.sciencedirect.com/science/article/pii/0304406881900148},
author = {V.L. Makarov}
}

@Inbook{MiyazakiT2012,
author="Miyazaki, Kentaro
and Takekuma, Shin-Ichi",
editor="Kusuoka, Shigeo
and Maruyama, Toru",
title="On the equivalence between the rejective core and the dividend equilibrium: a note",
bookTitle="Advances in Mathematical Economics Volume 16",
year="2012",
publisher="Springer Japan",
address="Tokyo",
pages="111--117",
isbn="978-4-431-54114-1",
doi="10.1007/978-4-431-54114-1_5",
url="https://doi.org/10.1007/978-4-431-54114-1_5"
}

@article{MurakamiU2017,
  author = {Murakami, Hiromi and Urai, Ken},
  title = {Replica core limit theorem for economies with satiation},
  journal = {Economic Theory Bulletin},
  year = {2017},
  volume = {5},
  pages = {259--270},
  month = {Oct},
  doi = {10.1007/s40505-017-0119-2},
  url = {https://link.springer.com/article/10.1007/s40505-017-0119-2}
}

% Appendix
\appendix

\section{Existence: Strict Endowments}

In this section, we demonstrate the existence of a 1-dimensional LDE satisfying the strong cheapest bundle property if endowments are strictly positive.
We will use this result later to prove that an LDE satisfying this property exists even without this condition.
The main result of \cite{McLennan2018} ``qualitatively'' implies our result, but it is potentially interesting that there exists an equilibrium supported on VCG prices.

\begin{theorem}\label{theorem:VCG-LDE}
If $\omega \gg 0$, then there exists $\lambda \in \RR_+^n$, $\alpha \in \RR_+$, and $x \in \arg\max \{\sum_i \lambda_i u_i \cdot y_i : y \in \mathcal{M}\}$ such that for all agents $i$,
\[
    u_i \cdot x_i = \max_{y_i \in \Delta_-^n} \{u_i \cdot y_i : p(\lambda) \cdot y_i \leq p(\lambda) \cdot \omega_i + \alpha\}
\]
where $p_j(\lambda)$ denotes the VCG price of good $j$ with respect to utilities $(\lambda_i u_i)_i$.
\end{theorem}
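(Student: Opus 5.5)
We plan to prove Theorem~\ref{theorem:VCG-LDE} with a fixed-point argument over the welfare weights $\lambda$. By homogeneity we normalize $\lambda \in \Delta^n$. For each $\lambda$, we consider the weighted assignment LP $\max\{\sum_i \lambda_i u_i \cdot y_i : y \in \mathcal{M}\}$. Its dual has variables $(v_i, p_j)$ with $v_i + p_j \geq \lambda_i u_{ij}$. The VCG prices $p(\lambda)$ are the minimal dual prices: $p_j(\lambda)$ is the marginal loss in weighted welfare from removing one unit of good $j$. They are nonnegative, continuous in $\lambda$, and by complementary slackness every optimal $x$ satisfies
\[
x_{ij}>0 \implies \lambda_i u_{ij} - p_j(\lambda) = \max_{j'} \{\lambda_i u_{ij'} - p_{j'}(\lambda)\}.
\]
So for $\lambda_i>0$, $x_i$ maximizes $u_i\cdot y - p(\lambda)\cdot y/\lambda_i$ over $\Delta^n$. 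This means $x_i$ is the utility-maximizing bundle given its own cost: any $y$ with $p(\lambda)\cdot y \le p(\lambda)\cdot x_i$ has $u_i\cdot y \le u_i\cdot x_i$. VCG minimality also gives a cheapest-bundle property.

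The key quantity is agent $i$'s \emph{deficit}
\[
e_i(\lambda,x) = p(\lambda)\cdot x_i - p(\lambda)\cdot\omega_i .
\]
Since $\sum_i x_i = \sum_i \omega_i$, the deficits sum to zero. We need to choose $\lambda$ and an optimal $x$ with a common dividend $\alpha \ge 0$ such that:
\begin{itemize}
\item every agent with $e_i > \alpha$ has no budget-feasible improvement;
\item otherwise $e_i \le \alpha$, so $x_i$ is affordable.
\end{itemize}

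The planned map is a Gale-style adjustment. Define $\alpha(\lambda,x) = \max_i e_i^+$, or better, the smallest common dividend making everyone's optimal bundle affordable. Then lower the weight of agents who overspend relative to the others and raise the weight of satiated or underspending agents. One candidate is
\[
\lambda_i' \propto \max\{\lambda_i + \alpha - e_i,\ 0\},
\]
composed with the correspondence $\lambda \mapsto \arg\max$ (convex-valued and upper hemicontinuous by Berge). Kakutani then yields a fixed point. The deficit $e_i$ is defined via $x_i$, so the map is a correspondence in $(\lambda, x)$ with convex values, because the argmax set is a face of $\mathcal{M}$.

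At a fixed point, every agent with $\lambda_i>0$ has $e_i \le \alpha$ up to the normalization, so her bundle is affordable with dividend $\alpha$. Combined with the first-order characterization, $x_i$ is the best bundle in her budget. For a satiated agent we additionally need that spending more does not help, which holds since she already attains $\max_j u_{ij}$.

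The main obstacle is agents with $\lambda_i = 0$ at the fixed point. Their welfare weight vanishes, so the VCG allocation may give them goods they do not value, and then the $\arg\max$ no longer rationalizes their demand. Here $\omega \gg 0$ is essential. Since $p(\lambda)\ne 0$ whenever some good is contested, $p(\lambda)\cdot \omega_i > 0$ for every $i$. I would show that an agent with tiny $\lambda_i$ has deficit bounded below by a strictly positive amount, since she is allocated only goods of near-zero price while holding a positively priced endowment. The adjustment map then strictly raises her weight, so the fixed point has $\lambda \gg 0$. I would make this rigorous by restricting $\lambda$ to a compact set $\{\lambda_i \ge \eta\}$ for small $\eta$ and checking that the map points inward there.

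The degenerate case $p(\lambda)=0$, where every agent gets a favorite good, must be handled separately. In that case $x$ itself is trivially an equilibrium with $\alpha=0$.
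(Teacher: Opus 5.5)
\textbf{Gap: your deficit-driven map cannot produce the theorem.} Its fixed points with $\lambda\gg0$ are exactly competitive equilibria \emph{without} dividends.

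With $\alpha=\max_i e_i^+$ you have $\lambda_i+\alpha-e_i\ge\lambda_i\ge0$, so the truncation never binds. Since $\sum_i e_i=0$, the normalizer is $1+n\alpha$, and a fixed point satisfies $e_i=\alpha(1-n\lambda_i)$ for all $i$. If $\lambda\gg0$ and $\alpha>0$, this gives $\max_i e_i<\alpha$, contradicting $\alpha=\max_i e_i$. So $\alpha=0$, all $e_i=0$, and by your first-order characterization $(x,p(\lambda))$ is a competitive equilibrium. These need not exist even when $\omega\gg0$. The Hylland--Zeckhauser example of Table~\ref{table:HZ-example} (with $B$ split into two unit goods) has strictly positive endowments, and $p_A(\lambda)=\min\{\lambda_1,\lambda_2\}>0$ for all $\lambda\gg0$.

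Hence the step ``the map points inward on $\{\lambda_i\ge\eta\}$'' is false, and you can see it directly. Whenever $\alpha>0$, the agent attaining $e_i=\alpha$ is sent to $\lambda_i/(1+n\alpha)<\lambda_i$. At $\lambda=(\eta,\eta,1-2\eta)$ in that example, $p_A=\eta$ and $p_B=0$, and agents 1 and 2 hold all of $A$. So $e_1+e_2=\eta/3$, and the one with the larger deficit is pushed below $\eta$. Your heuristic that a low-weight agent runs a strict surplus fails because her spending and the value of her endowment both scale with the price level. That price level is itself of order $\eta$ when only low-weight agents contest a good. The only fixed points lie on the boundary (e.g.\ $\lambda=(0,0,1)$ with $p=0$), where agents 1 and 2 are not rationalized. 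Your treatment of the case $p(\lambda)=0$ is likewise valid only for $\lambda\gg0$.

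\textbf{Second gap: optimality of agents with slack.} You claim that at a fixed point an agent with $e_i\le\alpha$ is optimal in her budget. This does not follow from the first-order characterization, which only says $x_i$ is best among bundles costing at most $p(\lambda)\cdot x_i$. An unsatiated agent with slack $e_i<\alpha$ faces a strictly larger budget.

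\textbf{What the paper does instead.} Its map is built around this utility shortfall:
\begin{itemize}
\item It raises $\lambda_i$ by $[u_i^\delta(\lambda)-u_i\cdot x_i^\delta(\lambda)]_0^1$, measured against the budget with common dividend $\max_k e_k^+$. Satiated or already-optimal agents with slack, like agent 3 above, are therefore not boosted, whereas your map keeps boosting agent 3.
\item It normalizes by the price level $\eta+(1-\eta)\sum_j p_j(\lambda)$ rather than the simplex.
\item It clips to $[\varepsilon,\bar\lambda]^n$ and regularizes the allocation so that Brouwer applies.
\end{itemize}
The crucial observation is that the maximal-deficit agent always consumes optimally, since her dividend equals her own deficit. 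If her weight exceeds $\varepsilon$, the fixed-point equation yields $\sum_j p_j\le1$, which in turn makes every agent optimal. If her weight equals $\varepsilon$ for every $\varepsilon$, then $\sum_j p_j\ge1$ while she spends only $O(\varepsilon)$. By $\omega\gg0$ the maximal deficit then becomes negative, contradicting $\sum_i e_i=0$. That is where $\omega\gg0$ enters, rather than through a direct inward-pointing estimate.
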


\begin{lemma}
If $\lambda \in \RR_+^n$ and $x \in \arg\max \{\sum_i \lambda_i u_i \cdot y_i : y \in \mathcal{M}\}$, then 
\[
    u_i \cdot y_i \geq u_i \cdot x_i \implies p(\lambda) \cdot y_i \geq p(\lambda) \cdot x_i.
\]
\end{lemma}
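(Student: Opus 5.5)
We want to show that if $x$ maximizes $\sum_i \lambda_i u_i\cdot y_i$ over $\mathcal{M}$, then any bundle $y_i$ that agent $i$ weakly prefers to $x_i$ costs at least as much at VCG prices. The plan is to use the standard fact that VCG prices for the assignment problem are dual variables of the assignment LP. Concretely, take the LP $\max\{\sum_{i,j}\lambda_i u_{ij} y_{ij} : y\in\mathcal{M}\}$ with dual variables $v_i$ (agents) and $p_j$ (goods). Its constraints are $v_i + p_j \ge \lambda_i u_{ij}$ for all $i,j$, and the objective is to minimize $\sum_i v_i + \sum_j p_j$.

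\textbf{Step 1: VCG prices are dual-optimal.} The VCG price vector $p(\lambda)$ is the minimal Walrasian price vector, i.e.\ $p_j(\lambda)$ equals the marginal externality $W_{-}(j)$-type difference. Setting $v_i \coloneqq \max_j(\lambda_i u_{ij} - p_j(\lambda))$ gives a dual-feasible pair $(v,p(\lambda))$ that is optimal. I would cite the standard Leonard/Demange--Gale result, or verify it directly: VCG prices are the componentwise-minimal dual-optimal $p$. This is the only potentially nontrivial step. If the paper's convention for VCG prices is unusual, I would instead prove optimality directly via the lattice structure of dual solutions.

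\textbf{Step 2: complementary slackness.} Since $x$ is primal-optimal, $x_{ij}>0$ implies $v_i + p_j = \lambda_i u_{ij}$. Since $x_i\in\Delta^n$, this gives
\[
  v_i = \lambda_i u_i\cdot x_i - p(\lambda)\cdot x_i .
\]

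\textbf{Step 3: conclude.} Fix any $y_i\in\Delta^n_-$ with $u_i\cdot y_i \ge u_i\cdot x_i$. Dual feasibility summed against $y_i$ gives
\[
  \lambda_i u_i\cdot y_i \le v_i\textstyle\sum_j y_{ij} + p(\lambda)\cdot y_i .
\]
Substituting $v_i$ from Step 2 yields
\[
  p(\lambda)\cdot y_i - p(\lambda)\cdot x_i \ge \lambda_i (u_i\cdot y_i - u_i\cdot x_i) + v_i\Bigl(1-\textstyle\sum_j y_{ij}\Bigr) .
\]
The first term on the right is nonnegative because $\lambda_i\ge0$ and $y_i\succeq_i x_i$.

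For the second term I need $v_i\ge 0$ whenever $\sum_j y_{ij}<1$. Here I would use that $v_i$ can be taken nonnegative: the dual of the problem over $\Delta^n_-$ (free disposal) has $v_i\ge0$, and the VCG utilities $v_i$ are nonnegative marginal contributions. I expect this sign issue to be the main point needing care.

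If $\lambda_i=0$, the dual constraint still holds. There $v_i = -\min$-type quantities, and I would argue $v_i=0$ and the prices of goods in $\mathrm{supp}(x_i)$ are zero (no externality). The conclusion then follows from $p(\lambda)\ge 0$ and $p(\lambda)\cdot x_i=0$.
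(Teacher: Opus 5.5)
Your proposal is correct and follows the paper's proof essentially step for step: the paper likewise asserts that VCG prices, together with $\alpha_i = \lambda_i u_{ij} - p_j(\lambda)$ for $j$ in the support of $x_i$, form an optimal solution of the assignment dual, uses complementary slackness to get $\alpha_i = \lambda_i u_i \cdot x_i - p(\lambda)\cdot x_i$, and then sums dual feasibility against $y_i$. The sign issue you flag is handled in the paper by writing the dual with $\alpha, \beta \geq 0$ (the free-disposal relaxation), and your general inequality already covers $\lambda_i = 0$, so that separate case is unnecessary.
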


\begin{proof}
The dual linear program is
\begin{align*}
    \min & \textstyle \sum_i \alpha_i + \sum_j \beta_j \\
    \text{s.t. } 
        & \alpha_i + \beta_j \geq \lambda_i u_{ij} \,\forall\, i, j \\
        & \alpha, \beta \geq 0
\end{align*}
A property of VCG prices is that $\alpha_i = \lambda_i u_{ij} - p_j(\lambda)$ for any $j$ such that $x_{ij} > 0$ and $\beta_j = p_j(\lambda)$ is an optimal solution to the dual.
Thus, for any $y_i$ such that $u_i \cdot y_i \geq u_i \cdot x_i$, we have that
\[
    p(\lambda) \cdot x_i = \lambda_i u_i \cdot x_i - \alpha_i \leq \lambda_i u_i \cdot y_i - \alpha_i \leq p(\lambda) \cdot y_i
\]
\end{proof}

\subsection{Proof of Theorem~\ref{theorem:VCG-LDE}}

\begin{lemma}\label{lemma:VCG-price-continuous}
$p(\lambda)$ is continuous in $\lambda$.
\end{lemma}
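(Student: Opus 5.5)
The plan is to write each VCG price $p_j(\lambda)$ as a difference of optimal values of linear programs. Each LP optimizes over a fixed polytope, and only its objective coefficients depend on $\lambda$. Continuity then follows from the continuity of such value functions.

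First, fix the closed form. For a set of agents $N' \subseteq [n]$ and a multiset of goods $M'$, let $W_\lambda(N', M')$ be the optimal value of the assignment LP $\max\{\sum_{i \in N'} \lambda_i u_i \cdot y_i\}$ over fractional matchings of $N'$ to $M'$ (rows summing to at most one, item capacities given by $M'$). By the Leonard (1983) correspondence between VCG payments and minimal Walrasian prices in the assignment game, the VCG price of good $j$ is $W_\lambda([n], [n] + j) - W_\lambda([n], [n])$, i.e.\ the externality of removing one unit of $j$, equivalently the marginal value of an extra copy of $j$. Whichever of the two standard conventions the paper intends, the price is a difference of two optimal values of this form. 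The same Leonard correspondence is what guarantees these $p_j(\lambda)$, together with $\alpha_i = \lambda_i u_{ij} - p_j(\lambda)$, are dual optimal, as the preceding lemma uses. The point of this step is that the formula does not depend on which optimal allocation $x$ is selected, even though the argmax can be non-unique.

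Second, show each $W_\lambda(N', M')$ is continuous in $\lambda$. The feasible region is a fixed polytope independent of $\lambda$, so the optimum is attained at one of its finitely many vertices $v$. Hence $W_\lambda(N',M') = \max_v \sum_{i} \lambda_i\, u_i \cdot v_i$ is a maximum of finitely many linear functions of $\lambda$. It is therefore convex, piecewise linear, and in particular Lipschitz continuous on $\RR^n_+$. A difference of two such functions is continuous, so $p_j(\lambda)$ is continuous for every $j$.

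The main obstacle is the first step. If VCG prices are defined via the payment of the agent who receives $j$ under a chosen optimal allocation, then for fractional or non-unique optima one must check the definition is well posed. One must also check that it coincides with the allocation-free welfare-difference formula, for instance via the complementary-slackness argument in the preceding lemma: every optimal primal solution is paired with the same minimal dual $\beta$. Once that identification is made, continuity is immediate from the finite-vertex argument above.
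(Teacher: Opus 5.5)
Your proposal is correct and follows the paper's proof. The paper likewise writes $p_\ell(\lambda)$ as the optimal value of the assignment LP in which good $\ell$ has capacity $2$, minus the optimal value over $\mathcal{M}$, and concludes from the continuity of these optimal values in $\lambda$. The only difference is that it cites the maximum theorem where you use the finite-vertex, max-of-linear-functions argument, which is equivalent here and also gives Lipschitz continuity.

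One small correction to your prose: the ``externality of removing one unit of $j$,'' $W_\lambda([n],[n]) - W_\lambda([n],[n]-j)$, is the maximal Walrasian price. It is not equivalent to the marginal value of an extra copy, which is the minimal Walrasian price. This does not affect the argument, since you (like the paper) use the extra-copy formula, and either difference would be continuous anyway.
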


\begin{proof}
Recall that 
\[
    \textstyle p_\ell(\lambda) = \max \{\sum_i \lambda_i u_i \cdot x_i : \sum_j x_{ij} \leq 1 \,\forall\, i, \sum_i x_{ij} \leq 1 \,\forall\, j \not= \ell, \sum_i x_{i\ell} \leq 2\} - \max_{x \in \mathcal{M}} \sum_i \lambda_i u_i \cdot x_i.
\]
The optimal objective values are continuous in $\lambda$ by the maximum theorem, so $p_\ell(\lambda)$ is continuous in $\lambda$.
\end{proof}

\begin{lemma}\label{lemma:optimal-consumption-continuous-1D-LDE}
Given $\lambda$ and $\alpha$, define $C_i(\lambda, \alpha) \coloneqq \{y_i \in \Delta_-^n : p(\lambda) \cdot y_i \leq p(\lambda) \cdot \omega_i + \alpha\}$.
Suppose $\lambda^t \to \lambda$ and $\alpha^t \to \alpha$.
If $p(\lambda) \not= 0$, then
\[
    \textstyle \lim_{t \to \infty} \max \{u_i \cdot y_i : y_i \in C_i(\lambda^t, \alpha^t)\} = \max \{u_i \cdot : y_i \in C_i(\lambda, \alpha)\}
\]
\end{lemma}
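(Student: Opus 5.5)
The plan is to prove continuity of the value function $v(\lambda,\alpha)\coloneqq\max\{u_i\cdot y_i : y_i\in C_i(\lambda,\alpha)\}$ through Berge's maximum theorem. The objective $y_i\mapsto u_i\cdot y_i$ is continuous and does not depend on the parameters, so it suffices to show that the correspondence $(\lambda,\alpha)\mapsto C_i(\lambda,\alpha)$ is compact-valued and both upper and lower hemicontinuous at $(\lambda,\alpha)$. Compactness is immediate, since $C_i$ is a closed subset of $\Delta^n_-$. By Lemma~\ref{lemma:VCG-price-continuous}, $p(\lambda^t)\to p(\lambda)$, and the budget $b^t\coloneqq p(\lambda^t)\cdot\omega_i+\alpha^t$ converges to $b\coloneqq p(\lambda)\cdot\omega_i+\alpha$. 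The two hemicontinuity properties give the matching $\limsup$ and $\liminf$ inequalities for the values.

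\textbf{Upper bound.} Take maximizers $y^t\in C_i(\lambda^t,\alpha^t)$. Pass to a subsequence realizing the $\limsup$ of the values, and then to a further convergent subsequence $y^t\to y$ using compactness of $\Delta^n_-$. Passing to the limit in the linear inequality $p(\lambda^t)\cdot y^t\le b^t$ gives $p(\lambda)\cdot y\le b$, so $y\in C_i(\lambda,\alpha)$. Hence
\[
\limsup_t v(\lambda^t,\alpha^t)=u_i\cdot y\le v(\lambda,\alpha).
\]
This is the closed-graph (upper hemicontinuity) half, and it needs no hypothesis on $p(\lambda)$.

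\textbf{Lower bound.} This is the step where $p(\lambda)\neq0$ and $\omega\gg0$ must enter, and I expect it to be the main obstacle, because lower hemicontinuity fails exactly in the Debreu minimum-wealth situation. I would first show the budget is strictly positive. Since VCG prices are nonnegative and $p(\lambda)\ne0$, some $p_j(\lambda)>0$. Together with $\omega_i\gg0$ and $\alpha\ge0$ this gives $b\ge p(\lambda)\cdot\omega_i>0$. Consequently the point $0\in\Delta^n_-$ satisfies the budget constraint strictly, so Slater's condition holds.

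Now fix a maximizer $y^*\in C_i(\lambda,\alpha)$ and set $y^s\coloneqq(1-s)y^*$ for $s\in(0,1)$. Then $p(\lambda)\cdot y^s\le(1-s)b<b$. By convergence of prices and budgets, $y^s\in C_i(\lambda^t,\alpha^t)$ for all sufficiently large $t$. Therefore
\[
\liminf_t v(\lambda^t,\alpha^t)\ge(1-s)\,u_i\cdot y^*,
\]
and letting $s\to0$ yields $\liminf_t v(\lambda^t,\alpha^t)\ge v(\lambda,\alpha)$. Combined with the upper bound, this proves the limit.

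One remaining detail is to make sure the positivity of $b$ is stated in terms of the limiting prices rather than the $p(\lambda^t)$. The convergence $b^t\to b>0$ then handles the approximating budgets uniformly.
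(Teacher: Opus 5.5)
Your proof is correct. Its upper half, where the maximizers $y^t$ have a limit point in $C_i(\lambda,\alpha)$ by continuity of $p$, is the same as the paper's. The lower half takes a different and simpler route.

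The paper argues by contradiction. It supposes a maximizer $x^*_i$ of the limiting problem beats the limit point of maximizers by a gap $\zeta$. It then shaves a constant $r=\min\{\min_{j\in\mathrm{supp}(x^*_i)}x^*_{ij},\,\zeta/2\}$ off every coordinate in the support of $x^*_i$. Finally it shows the shaved bundle is eventually affordable, via a two-case analysis on the size of $\sum_{j\in\mathrm{supp}(x^*_i)}p_j(\lambda^t)$ with an explicit tolerance $\gamma=p(\lambda)\cdot\omega_i/((n+1)/r+n)$.

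You instead contract $y^*$ radially toward the strictly affordable point $0$, which is the standard Slater-point argument. It rests on the same key fact as the paper's: $p(\lambda)\cdot\omega_i>0$, which follows from $p\ge 0$, $p(\lambda)\neq 0$ and $\omega\gg 0$, together with $\alpha\ge 0$. These are standing assumptions that the paper's proof also invokes, so your importing them is legitimate. Your version needs no case split and no explicit constants.

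It also avoids the paper's quick claim that $u_i\cdot x'_i\ge u_i\cdot x^*_i-\zeta/2$. That claim tacitly requires $\sum_{j} u_{ij}\le 1$, or else a smaller $r$. In your argument, $u_i\cdot(1-s)y^*=(1-s)\,u_i\cdot y^*$ holds exactly.

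Nothing is lost relative to the paper's formulation. Your two inequalities imply that every limit point of maximizers lies in $C_i(\lambda,\alpha)$ and attains $\max\{u_i\cdot y_i: y_i\in C_i(\lambda,\alpha)\}$. Such a limit point is therefore itself a maximizer, which is the form in which the paper's proof concludes.
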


\begin{proof}
For each $t$, let $x^t_i \in \arg\max \{u_i \cdot y_i : y_i \in C_i(\lambda^t, \alpha^t)\}$, and suppose $x^t_i \to x_i$ as $t \to \infty$ (after passing to a subsequence).
It is clear that $x_i \in C_i(\lambda, s)$ (since $p$ is continuous by Lemma~\ref{lemma:VCG-price-continuous}).
We now show that $x_i$ is in fact a maximizer.
Let $x_i^* \in \arg\max \{u_i \cdot y_i: y_i \in C_i(\lambda, s)\}$, and suppose by way of contradiction that $u_i \cdot x_i < u_i \cdot x_i^*$, so $x^*_i \not= 0$.
Let $\zeta$ denote the gap.
Define $r \coloneqq \min \{x^*_{ij} : j \in \supp{x^*_i}\} \cup \{\zeta / 2\}$ and $x'_i$ item-wise as $x'_{ij} \coloneqq (x^*_{ij} - r)^+$.
Clearly, $u_i \cdot x'_i \geq u_i \cdot x^*_i - \zeta / 2$.
We now show that for sufficiently large $t$, $x'_i \in C_i(\lambda^t, \alpha^t)$.

Since $\lambda^t \to \lambda$, $\alpha^t \to \alpha$, and $p$ is continuous by Lemma~\ref{lemma:VCG-price-continuous}, we eventually have that 
\[
    \lVert p(\lambda) - p(\lambda^t)\rVert_\infty + \lvert \alpha^t\rvert \leq \gamma
\] 
where $\gamma \coloneqq p(\lambda) \cdot \omega_i / ((n + 1)/r + n)$, which is strictly positive since $p(\lambda) \not= 0$ and $\omega \gg 0$.
If $\sum_{j : x^*_{ij} > 0} p_j(\lambda^t) \leq \gamma (n + 1) / r$, then 
\[
     \textstyle p(\lambda^t) \cdot x'_i \leq p(\lambda^t) \cdot x^*_i \leq \sum_{j : x^*_{ij} > 0} p_j(\lambda^t) \leq \gamma (n + 1) / r = p(\lambda) \cdot \omega_i - n \gamma \leq p(\lambda^t) \cdot \omega_i
\]
where the first inequality follows from the definition of $x'_i$ and the fourth inequality follows from the fact that $\norm{p(\lambda) - p(\lambda^t)}_\infty \leq \gamma$ and $\omega_i \in [0,1]^n$, so $x'_i \in C_i(\lambda^t, \alpha^t)$ in this case.
Otherwise, $\sum_{j : x^*_{ij} > 0} p_j(\lambda^t) \geq \gamma (n + 1) / r$, in which case
\begin{align*}
    p(\lambda^t) \cdot x'_{i}
        &= \textstyle p(\lambda^t) \cdot x_{i}^* - r \sum_{j : x^*_{ij} > 0} p_j(\lambda^t) \tag{definition of $x'_i$} \\
        &\leq \textstyle p(\lambda) \cdot x_{i}^* - r \sum_{j : x^*_{ij} > 0} p_j(\lambda^t) + \gamma \tag{$\norm{p(\lambda) - p(\lambda^t)}_\infty \leq \gamma$; $x^*_i \in \Delta_-^n$} \\
        &\leq \textstyle p(\lambda) \cdot \omega_i + \alpha - r \sum_{j : x^*_{ij} > 0} p_j(\lambda^t) + \gamma \tag{$x^*_i \in C_i(\lambda, \alpha)$} \\
        &\leq \textstyle p(\lambda^t) \cdot \omega_i + \alpha^t - r \sum_{j : x^*_{ij} > 0} p_j(\lambda^t) + \gamma (n + 1) \tag{$\norm{p(\lambda) - p(\lambda^t)}_\infty \leq \gamma$; $\omega_i \in [0,1]^n$; $\abs{s-s^t} \leq \gamma$} \\
        &\leq \textstyle p(\lambda^t) \cdot \omega_i + \alpha^t
\end{align*}
so again, $x'_i \in C_i(\lambda^t, \alpha^t)$.
It follows that for sufficiently large $t$, $u_i \cdot x^t_i \geq u_i \cdot x'_i \geq u_i \cdot x^*_i - \zeta / 2$.
But since $x^t_i \to x_i$ as $t \to \infty$, it is also true that for sufficiently large $t$, $u_i \cdot x^t_i < u_i \cdot x_i + \zeta / 2 \leq u_i \cdot x^*_i - \zeta / 2$, a contradiction.
\end{proof}

Now, define the following quantities:
\begin{align*}
    x^\delta(\lambda) 
        &\coloneqq \textstyle  \arg\max \{ \sum_i \lambda_i u_i \cdot x_i + \delta \norm{x}_F^2 : x \in \mathcal{M}\} \\
    C_i^\delta(\lambda) 
        &\coloneqq \{y_i \in \Delta_-^n : p(\lambda) \cdot y_i \leq p(\lambda) \cdot \omega_i + \max_k (p(\lambda) \cdot x_k^\delta(\lambda) - p(\lambda) \cdot \omega_k)^+\}\\
    u_i^\delta(\lambda) 
        &\coloneqq \textstyle \max \{u_i \cdot y_i : y_i \in C_i^\delta(\lambda)\} \\
    \varphi_i^{\varepsilon, \delta}(\lambda)
        &\coloneqq \textstyle \left[\frac{\lambda_i + [u_i^\delta(\lambda) - u_i \cdot x^\delta_i(\lambda)]_0^1}{\eta + (1 - \eta) \sum_j p_j(\lambda)} \right]_\varepsilon^{\Bar{\lambda}}
\end{align*}

\begin{lemma}\label{lemma:xdelta-continuous}
For all $\delta > 0$, $x^\delta(\lambda)$ is continuous in $\lambda$.
\end{lemma}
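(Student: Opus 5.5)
We need to show that $x^\delta(\lambda) = \arg\max\{\sum_i \lambda_i u_i\cdot x_i + \delta\norm{x}_F^2 : x\in\mathcal{M}\}$ is a continuous single-valued function of $\lambda$. One caveat: with a $+\delta\norm{x}_F^2$ term the objective is strictly \emph{convex}, so its maximum over the polytope $\mathcal{M}$ is attained at vertices and need not be unique. We therefore read the regularizer as a strictly concave penalty, i.e.\ the objective $\sum_i \lambda_i u_i\cdot x_i - \delta\norm{x}_F^2$, which is presumably the intended meaning given that the lemma asserts a well-defined continuous function. The plan is the standard route: first show the argmax is a singleton, then apply Berge's maximum theorem to get upper hemicontinuity of the argmax correspondence, and conclude continuity since a single-valued upper hemicontinuous correspondence into a compact space is continuous.

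\emph{Step 1: uniqueness.} The feasible set $\mathcal{M}$ is a nonempty compact convex polytope, being the doubly stochastic matrices. The map $x\mapsto \sum_i\lambda_i u_i\cdot x_i$ is linear, and $-\delta\norm{x}_F^2$ is strictly concave for $\delta>0$, so the objective is strictly concave. Suppose $x\neq x'$ were both maximizers. Then $(x+x')/2\in\mathcal{M}$ would attain a strictly larger value, a contradiction. So the argmax is a single point for every $\lambda\in\RR^n_+$.

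\emph{Step 2: Berge.} The objective $f(x,\lambda)$ is jointly continuous in $(x,\lambda)$ because it is a polynomial. The constraint correspondence is constant at the compact set $\mathcal{M}$, hence continuous. By the maximum theorem, the argmax correspondence $\lambda\mapsto x^\delta(\lambda)$ is nonempty, compact-valued and upper hemicontinuous. Since it is single-valued, it is continuous as a function.

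If one prefers to avoid citing Berge, Step 2 can be done directly. Take $\lambda^t\to\lambda$ and any subsequence of $x^\delta(\lambda^t)$; by compactness of $\mathcal{M}$ it has a further subsequence converging to some $\bar x$. For any $y\in\mathcal{M}$ we have $f(x^\delta(\lambda^t),\lambda^t)\ge f(y,\lambda^t)$, and passing to the limit gives $f(\bar x,\lambda)\ge f(y,\lambda)$. Hence $\bar x$ is a maximizer, so $\bar x = x^\delta(\lambda)$ by uniqueness. Since every subsequence has a further subsequence converging to $x^\delta(\lambda)$, the whole sequence converges.

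The only real obstacle is the uniqueness step, and specifically getting the sign of the regularizer right. With the literal $+\delta\norm{x}_F^2$ uniqueness genuinely fails: for example, with $\lambda=0$ every permutation matrix is a maximizer. So if the sign is meant as written, the proof must be adjusted to use $-\delta$ (or equivalently to minimize $-\sum_i\lambda_i u_i\cdot x_i + \delta\norm{x}_F^2$). Once that is settled, the remainder is routine.
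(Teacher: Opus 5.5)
Your proof is correct and follows the paper's own argument: the maximum theorem with the constant feasible set $\mathcal{M}$ gives upper hemicontinuity, and uniqueness of the maximizer, from strict concavity of the objective, upgrades this to continuity. The paper's proof appeals to strict concavity, so your reading of the regularizer as $-\delta\lVert x\rVert_F^2$ is the intended one; you are right that with the literal $+\delta\lVert x\rVert_F^2$ the objective is strictly convex and the uniqueness claim fails.
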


\begin{proof}
$\mathcal{M}$ is compact and continuous in $\lambda$ (since it does not even vary with $\lambda$), so $x^\delta$ is upper-hemicontinuous in $\lambda$ by the maximum theorem.
Continuity follows from the uniqueness of the maximizer (the objective is strictly concave).
\end{proof}

\begin{lemma}
For all $\delta > 0$, $u_i^\delta(\lambda)$ is continuous in $\lambda \in [\varepsilon, \bar{\lambda}]^n$ if $\omega \gg 0$ and $p(\lambda) \not= 0$ for all $\lambda \in [\varepsilon, \bar{\lambda}]^n$.
\end{lemma}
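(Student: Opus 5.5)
The plan is to write $u_i^\delta(\lambda) = \max\{u_i \cdot y_i : y_i \in C_i(\lambda, s^\delta(\lambda))\}$, where
\[
    s^\delta(\lambda) \coloneqq \max_k \bigl(p(\lambda) \cdot x_k^\delta(\lambda) - p(\lambda) \cdot \omega_k\bigr)^+
\]
is the common dividend and $C_i(\lambda, \alpha)$ is the budget set from Lemma~\ref{lemma:optimal-consumption-continuous-1D-LDE}. This reduces continuity of $u_i^\delta$ to two ingredients: continuity of the dividend map $\lambda \mapsto s^\delta(\lambda)$, and the sequential continuity of the optimal value already established in Lemma~\ref{lemma:optimal-consumption-continuous-1D-LDE}.

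First, I would show that $s^\delta$ is continuous on $[\varepsilon, \bar{\lambda}]^n$. By Lemma~\ref{lemma:VCG-price-continuous}, $p(\lambda)$ is continuous. By Lemma~\ref{lemma:xdelta-continuous}, $x^\delta(\lambda)$ is continuous for fixed $\delta > 0$. The endowments $\omega_k$ are constants. Hence each $p(\lambda) \cdot x^\delta_k(\lambda) - p(\lambda) \cdot \omega_k$ is continuous as an inner product of continuous maps. Taking the positive part and then the maximum over finitely many $k$ preserves continuity, so $s^\delta$ is continuous and nonnegative.

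Second, I would fix $\lambda \in [\varepsilon, \bar{\lambda}]^n$ and an arbitrary sequence $\lambda^t \to \lambda$ in this box, and set $\alpha^t \coloneqq s^\delta(\lambda^t)$ and $\alpha \coloneqq s^\delta(\lambda)$. By the first step, $\alpha^t \to \alpha$. By hypothesis $p(\lambda) \neq 0$, and $\omega \gg 0$ is exactly the standing assumption of Lemma~\ref{lemma:optimal-consumption-continuous-1D-LDE}. That lemma then gives
\[
    u_i^\delta(\lambda^t) = \max\{u_i \cdot y_i : y_i \in C_i(\lambda^t, \alpha^t)\} \to \max\{u_i \cdot y_i : y_i \in C_i(\lambda, \alpha)\} = u_i^\delta(\lambda).
\]
Since the sequence was arbitrary, $u_i^\delta$ is sequentially continuous, hence continuous, on the compact box.

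The main point needing care is checking that Lemma~\ref{lemma:optimal-consumption-continuous-1D-LDE} really applies as stated. It is phrased for limits of the full value, and its proof passes to subsequences of maximizers. Because the budget sets lie in the compact set $\Delta^n_-$, every subsequence of $u_i^\delta(\lambda^t)$ has a further subsequence converging to $u_i^\delta(\lambda)$, so the whole sequence converges.

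The lemma's proof also uses $|\alpha^t - \alpha| \le \gamma$ with $\gamma$ proportional to $p(\lambda) \cdot \omega_i > 0$. This quantity is positive precisely because $\omega \gg 0$ and $p(\lambda) \ge 0$ is nonzero, and the required closeness is eventually guaranteed by continuity of $s^\delta$. Both hypotheses of the statement therefore enter exactly there.
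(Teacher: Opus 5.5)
Your proposal is correct and follows the paper's proof: continuity of $p$ and $x^\delta$ (Lemmas~\ref{lemma:VCG-price-continuous} and~\ref{lemma:xdelta-continuous}) gives continuity of the common dividend, and Lemma~\ref{lemma:optimal-consumption-continuous-1D-LDE} then gives continuity of $u_i^\delta$. Your remarks on the subsequence argument and on reading the lemma's closeness condition as $|\alpha^t - \alpha| \le \gamma$ only fill in details the paper leaves implicit.
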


\begin{proof}
Since $p$ and $x^\delta$ are continuous by Lemmas~\ref{lemma:VCG-price-continuous} and~\ref{lemma:xdelta-continuous}, the result follows from Lemma~\ref{lemma:optimal-consumption-continuous-1D-LDE}.
\end{proof}

\begin{lemma}
If there exists $\varepsilon > 0$ and $\lambda \in [\varepsilon, \bar{\lambda}]^n$ such that $p(\lambda) = 0$, then for all $x_i \in \arg\max \{ \sum_i \lambda_i u_i \cdot y_i : y \in \mathcal{M}\}$, we have that $u_i \cdot x_i = \max_j u_{ij}$ for all agents $i$.
\end{lemma}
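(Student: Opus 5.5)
The plan is to read the conclusion off LP duality for the assignment problem $\max\{\sum_i \lambda_i u_i\cdot y_i : y\in\mathcal M\}$, using the same dual as in the proof of the preceding lemma (the one showing $u_i\cdot y_i\ge u_i\cdot x_i\implies p(\lambda)\cdot y_i\ge p(\lambda)\cdot x_i$). That lemma relied on a standard property of VCG prices: for any optimal $x$, the pair $\beta_j = p_j(\lambda)$ and $\alpha_i = \lambda_i u_{ij}-p_j(\lambda)$ (for any $j$ with $x_{ij}>0$) is well defined and optimal for the dual
\[
\min \textstyle\sum_i \alpha_i+\sum_j\beta_j \quad\text{s.t.}\quad \alpha_i+\beta_j\ge \lambda_i u_{ij},\ \alpha,\beta\ge 0 .
\]
I will invoke this property exactly as it was used there.

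First, fix $\lambda\in[\varepsilon,\bar\lambda]^n$ with $p(\lambda)=0$ and any maximizer $x\in\mathcal M$. Plugging $\beta=p(\lambda)=0$ into the property gives an optimal dual solution with $\beta=0$ and $\alpha_i=\lambda_i u_{ij}$ for every $j$ with $x_{ij}>0$. In particular, $u_{ij}$ is the same for all $j$ in $\mathrm{supp}(x_i)$.

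Second, use dual feasibility with $\beta=0$: $\alpha_i\ge \lambda_i u_{ij'}$ for all goods $j'$. So for $j\in\mathrm{supp}(x_i)$ we get $\lambda_i u_{ij}\ge \lambda_i u_{ij'}$ for every $j'$. Since $\lambda_i\ge\varepsilon>0$, we may divide by $\lambda_i$ to get $u_{ij}=\max_{j'}u_{ij'}$. This is the only place strict positivity of $\lambda$ enters, and it is essential: an agent with zero weight could receive anything.

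Finally, since $x\in\mathcal M$, each row satisfies $x_i\in\Delta^n$, so $u_i\cdot x_i=\sum_{j\in\mathrm{supp}(x_i)}x_{ij}u_{ij}=\max_j u_{ij}$, as claimed.

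The main obstacle is justifying that the VCG vector itself, and not just some vector, is the $\beta$-part of an optimal dual. For this I plan to cite the classical fact that VCG prices are the minimum Walrasian (minimum dual-optimal $\beta$) prices of the assignment game, i.e.\ the bottom of the lattice of optimal dual $\beta$'s.

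If a self-contained argument is preferred, I would proceed as follows. Concavity of the optimal value in the capacity of good $\ell$, together with $p_\ell(\lambda)=0$, shows that the minimum of $\beta_\ell$ over optimal duals is $0$ for each $\ell$ separately. The lattice structure of optimal duals (closure under coordinatewise min of $\beta$, with the corresponding max of $\alpha$) then yields a single optimal dual with $\beta=0$. Complementary slackness with this dual gives $\alpha_i=\lambda_iu_{ij}$ on $\mathrm{supp}(x_i)$ for any primal optimum $x$, and the two steps above go through unchanged.
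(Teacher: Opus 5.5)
Your proof is correct, but it takes a different route from the paper: you go through LP duality, while the paper uses a direct primal exchange argument.

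\textbf{The paper's route.} It argues by contradiction. Suppose $u_i\cdot x_i<\max_j u_{ij}=u_{ij^*}$. Birkhoff--von Neumann gives a matching $M$ in the support of $x$ (hence itself welfare-optimal) with $u_{i,M(i)}<u_{ij^*}$. Giving agent $i$ a second copy of $j^*$ in place of $M(i)$ is feasible in the capacity-$2$ program that defines $p_{j^*}(\lambda)$. Therefore $p_{j^*}(\lambda)\ge\lambda_i(u_{ij^*}-u_{i,M(i)})>0$, a contradiction.

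\textbf{Comparison.} The paper's argument uses nothing beyond the defining formula for $p_\ell(\lambda)$. Yours needs one of two things: the classical fact that VCG prices are the minimum optimal-dual (Walrasian) prices, which the paper also invokes without proof in the preceding lemma; or your substitute via LP sensitivity plus the Shapley--Shubik lattice. In exchange, your route shows the structure behind the result. Zero prices yield an optimal dual with $\beta=0$, so $\alpha_i/\lambda_i$ is an upper bound on every $u_{ij'}$ and is attained on $\mathrm{supp}(x_i)$.

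\textbf{Two small remarks on your self-contained version.}
\begin{enumerate}
\item Concavity of the value $V$ in the capacity of good $\ell$ only gives $V'(1^+)\ge V(2)-V(1)=0$, which is the wrong direction. What forces $\min\beta_\ell=V'(1^+)=0$ is monotonicity: the feasible set grows with capacity, so $V(1)\le V(t)\le V(2)=V(1)$ on $[1,2]$.
\item The lattice step is unnecessary. Fix $i$, $j\in\mathrm{supp}(x_i)$, and any $j'$. A single optimal dual with $\beta_{j'}=0$ already gives $\lambda_i u_{ij}=\alpha_i+\beta_j\ge\alpha_i\ge\lambda_i u_{ij'}$, using complementary slackness, $\beta_j\ge 0$, and dual feasibility at $(i,j')$.
\end{enumerate}
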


\begin{proof}
Suppose by way of contradiction that there exists an agent $i$ such that $u_i \cdot x_i < \max_j u_{ij}$.
Let $j^*$ denote her favorite item.
By the Birkhoff-von-Neumann theorem, there exists a matching $M$ in the support of $x_i$ such that $u_{i, M(i)} < u_{ij^*}$.
It follows that $p_{j^*}(\lambda) \geq \lambda_i (u_{ij^*} - u_{i, M(i)}) > 0$, a contradiction.
\end{proof}

Now, suppose for all $\varepsilon > 0$ and $\lambda \in [\varepsilon, \bar{\lambda}]^n$, we have that $p(\lambda) \not = 0$.
By Brouwer's fixed point theorem, we have that for all $\delta > 0$, there exists $\lambda^{\varepsilon, \delta} \in [\varepsilon, \bar{\lambda}]^n$ such that $\lambda^{\varepsilon, \delta} = \varphi^{\varepsilon, \delta}(\lambda^{\varepsilon, \delta})$.
For each $\varepsilon > 0$, consider the limit of $\lambda^{\varepsilon, \delta}$ (after passing to a subsequence) as $\delta \to 0$.
Let $\lambda^\varepsilon$ denote the limit of $\lambda^{\varepsilon, \delta}$ and $x(\lambda^\varepsilon)$ the limit of $x^\delta(\lambda^{\varepsilon, \delta})$.

\begin{lemma}
$x(\lambda^\varepsilon) \in \arg\max \{\sum_i \lambda^\varepsilon u_i \cdot x_i : x \in \mathcal{M}\}$.
\end{lemma}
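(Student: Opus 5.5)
The plan is a standard perturbation (Berge maximum theorem) argument. Write $f_\delta(x,\lambda) \coloneqq \sum_i \lambda_i u_i \cdot x_i + \delta \norm{x}_F^2$ and $f_0(x,\lambda) \coloneqq \sum_i \lambda_i u_i \cdot x_i$. Along the chosen subsequence we have $\lambda^{\varepsilon,\delta} \to \lambda^\varepsilon$ and $x^\delta(\lambda^{\varepsilon,\delta}) \to x(\lambda^\varepsilon)$ as $\delta \to 0$. By Lemma~\ref{lemma:xdelta-continuous}, each $x^\delta(\lambda^{\varepsilon,\delta})$ is the unique maximizer of $f_\delta(\cdot,\lambda^{\varepsilon,\delta})$ over $\mathcal{M}$.

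\textbf{Step 1 (feasibility of the limit).} Since $\mathcal{M}$ is closed, $x(\lambda^\varepsilon) \in \mathcal{M}$.

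\textbf{Step 2 (optimality of the limit).} Fix an arbitrary $y \in \mathcal{M}$. Optimality of $x^\delta(\lambda^{\varepsilon,\delta})$ gives
\[
    f_\delta\bigl(x^\delta(\lambda^{\varepsilon,\delta}), \lambda^{\varepsilon,\delta}\bigr) \geq f_\delta\bigl(y, \lambda^{\varepsilon,\delta}\bigr).
\]
The key bound is that on the compact set $\mathcal{M} \times [\varepsilon,\bar\lambda]^n$ the perturbation is uniformly small: every $x \in \mathcal{M}$ satisfies $\norm{x}_F^2 \leq n$, so $\lvert f_\delta - f_0 \rvert \leq \delta n$ uniformly. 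It follows that
\[
    f_0\bigl(x^\delta(\lambda^{\varepsilon,\delta}), \lambda^{\varepsilon,\delta}\bigr) \geq f_0\bigl(y, \lambda^{\varepsilon,\delta}\bigr) - \delta n.
\]
The map $f_0$ is bilinear, hence jointly continuous in $(x,\lambda)$. Letting $\delta \to 0$ along the subsequence yields $\sum_i \lambda^\varepsilon_i u_i \cdot x_i(\lambda^\varepsilon) \geq \sum_i \lambda^\varepsilon_i u_i \cdot y_i$. Since $y$ was arbitrary, this is the claimed membership in the argmax.

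\textbf{Main obstacle.} There is no real obstacle; the only point needing care is that the limits are taken jointly in $x$ and $\lambda$. Joint continuity of $f_0$ together with the uniform $O(\delta)$ bound on the regularizer handles this. Equivalently, one could invoke upper hemicontinuity of the argmax correspondence from the maximum theorem applied to $(\delta,\lambda) \mapsto \arg\max f_\delta(\cdot,\lambda)$, with $\delta \geq 0$ included as a parameter.
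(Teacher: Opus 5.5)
Your proof is correct and is essentially the paper's argument. The paper fixes a maximizer $x^*$, assumes a gap $\zeta$, and contradicts the optimality of $x^\delta(\lambda^{\varepsilon,\delta})$ using continuity of $\sum_i \lambda_i u_i \cdot x_i$ in $(x,\lambda)$ and the vanishing regularizer; that is just your direct limit of the optimality inequality against an arbitrary $y \in \mathcal{M}$, phrased by contradiction. One small remark: you only need $x^\delta(\lambda^{\varepsilon,\delta})$ to be some maximizer, not the unique one. This matters because, with the $+\delta \lVert x \rVert_F^2$ sign as written, the objective is convex rather than strictly concave, and your two-sided $\delta n$ bound works for either sign.
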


\begin{proof}
Let $x^* \in \arg\max \{\sum_i \lambda^\varepsilon_i u_i \cdot x_i: x \in \mathcal{M}\}$.
Suppose by way of contradiction that $\sum_i \lambda^\varepsilon_i u_i \cdot x_i(\lambda^\varepsilon) < \sum_i \lambda^\varepsilon_i u_i \cdot x_i^*$.
Let $\zeta$ denote the gap, and note that for sufficiently small $\delta > 0$, we have that
\[
    \textstyle \sum_i \lambda^{\varepsilon, \delta} u_i \cdot x^\delta_i(\lambda^{\varepsilon, \delta}) + \delta \norm{x^\delta(\lambda^{\varepsilon, \delta})}^2_F 
        < \sum_i \lambda^{\varepsilon} u_i \cdot x_i(\lambda^\varepsilon) + \zeta / 2 
        = \sum_i \lambda^\varepsilon u_i \cdot x_i^* - \zeta / 2 < \sum_i \lambda^{\varepsilon, \delta} u_i \cdot x^*_i
\]
which contradicts the definition of $x^\delta(\lambda^{\varepsilon, \delta})$.
\end{proof}

\begin{lemma}\label{lemma:utility-favorite-bundle-continuous-delta}
Let $C^0_i(\lambda^\varepsilon) = \{y_i \in \Delta_-^n : p(\lambda^\varepsilon) \cdot y_i \leq p(\lambda^\varepsilon) \cdot \omega_i + \max_k (p(\lambda^\varepsilon) \cdot x_k(\lambda^\varepsilon) - p(\lambda^\varepsilon) \cdot \omega_k)^+\}$.
\[
    \textstyle \lim_{\delta \to 0} u_i^\delta(\lambda^{\varepsilon, \delta}) = \max \{u_i \cdot y_i : y_i \in C^0_i(\lambda^\varepsilon)\}
\]
\end{lemma}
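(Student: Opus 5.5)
Since $\lambda^{\varepsilon,\delta}\to\lambda^\varepsilon$ along the chosen subsequence, the idea is to apply Lemma~\ref{lemma:optimal-consumption-continuous-1D-LDE}. Take $\lambda^t \coloneqq \lambda^{\varepsilon,\delta_t}$ along that subsequence, and let $\alpha^t \coloneqq \max_k (p(\lambda^t)\cdot x^{\delta_t}_k(\lambda^t) - p(\lambda^t)\cdot\omega_k)^+$. By definition, $C^{\delta_t}_i(\lambda^t) = C_i(\lambda^t,\alpha^t)$, so $u_i^{\delta_t}(\lambda^t) = \max\{u_i\cdot y_i : y_i\in C_i(\lambda^t,\alpha^t)\}$. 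Likewise $C^0_i(\lambda^\varepsilon) = C_i(\lambda^\varepsilon,\alpha)$ with $\alpha \coloneqq \max_k (p(\lambda^\varepsilon)\cdot x_k(\lambda^\varepsilon) - p(\lambda^\varepsilon)\cdot\omega_k)^+$. The lemma then gives the claimed limit once we check its hypotheses: $\lambda^t\to\lambda^\varepsilon$, $\alpha^t\to\alpha$, and $p(\lambda^\varepsilon)\neq 0$.

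\begin{enumerate}
\item \emph{Convergence $\lambda^t\to\lambda^\varepsilon$.} This holds by construction of the subsequence.
\item \emph{Convergence $\alpha^t\to\alpha$.} By Lemma~\ref{lemma:VCG-price-continuous}, $p(\lambda^t)\to p(\lambda^\varepsilon)$. By definition, $x^{\delta_t}(\lambda^t)\to x(\lambda^\varepsilon)$ along the same subsequence, after passing to a further subsequence if needed by compactness of $\mathcal{M}$. The map $(p,x)\mapsto \max_k (p\cdot x_k - p\cdot\omega_k)^+$ is a finite max of continuous functions, so it is continuous, and hence $\alpha^t\to\alpha$.
\item \emph{Nondegeneracy $p(\lambda^\varepsilon)\neq 0$.} We are in the case where $p(\lambda)\neq 0$ for every $\lambda\in[\varepsilon,\bar\lambda]^n$. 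The points $\lambda^{\varepsilon,\delta}$ lie in this compact box, so their limit $\lambda^\varepsilon$ does too, and therefore $p(\lambda^\varepsilon)\neq 0$.
\end{enumerate}
The lemma also uses $\omega\gg 0$, which is the standing assumption of this section.

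\textbf{Main obstacle.} The main point needing care is the subsequence issue. The limit statement is phrased along $\delta\to 0$, but $\lambda^{\varepsilon,\delta}$ and $x^\delta(\lambda^{\varepsilon,\delta})$ converge only along the subsequence we extracted. So I will interpret $\lim_{\delta\to 0}$ as a limit along that subsequence. Alternatively, I will argue that every sub-subsequence has a further subsequence converging to the same value $\max\{u_i\cdot y_i: y_i\in C^0_i(\lambda^\varepsilon)\}$, which forces convergence of the whole sequence.

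A second point to verify is that Lemma~\ref{lemma:optimal-consumption-continuous-1D-LDE} really gives convergence of the full sequence of maxima and not only along a further subsequence. The same sub-subsequence argument settles this: its proof shows that any limit point $x_i$ of maximizers $x^t_i$ is a maximizer for the limit budget set. By continuity of $u_i$, every subsequential limit of the values is the same number, and since the values are bounded, the values converge.
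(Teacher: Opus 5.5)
Your proposal is correct and follows essentially the same route as the paper, which also notes that $\lambda^{\varepsilon,\delta}\to\lambda^\varepsilon$ and that the dividend term $\max_k(\cdot)^+$ converges by continuity of $p$ and convergence of $x^\delta(\lambda^{\varepsilon,\delta})$, and then invokes Lemma~\ref{lemma:optimal-consumption-continuous-1D-LDE}. The paper leaves implicit two details you make explicit: the check that $p(\lambda^\varepsilon)\neq 0$, which follows from the standing case assumption and closedness of $[\varepsilon,\bar\lambda]^n$, and the subsequence bookkeeping.
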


\begin{proof}
Since $\lambda^{\varepsilon, \delta} \to \lambda^\varepsilon$ and $x^\delta(\lambda^{\varepsilon, \delta}) \to x(\lambda^\varepsilon)$ by definition, we have that
\[
    \max_k (p(\lambda^{\varepsilon, \delta}) \cdot x_k^\delta(\lambda^{\varepsilon, \delta}) - p(\lambda^{\varepsilon, \delta}) \cdot \omega_k)^+ \to \max_k (p(\lambda^\varepsilon) \cdot x_k(\lambda^\varepsilon) - p(\lambda^\varepsilon) \cdot \omega_k)^+,
\]
as $p$ is continuous by Lemma~\ref{lemma:VCG-price-continuous}, so the result follows by Lemma~\ref{lemma:optimal-consumption-continuous-1D-LDE}.
\end{proof}

Define
\[
    \textstyle u_i(\lambda) \coloneqq \max_{y_i \in C^0_i(\lambda)} u_i \cdot y_i
\]
By Lemma \ref{lemma:utility-favorite-bundle-continuous-delta}, for all $\varepsilon > 0$, we have that
\[
    \textstyle \lambda^\varepsilon_i = \lim_{\delta \to 0} \lambda^{\varepsilon, \delta}_i = \lim_{\delta \to 0} \varphi^{\varepsilon, \delta}_i(\lambda^{\varepsilon, \delta}_i) =  \left[\frac{\lambda^\varepsilon + [u_i(\lambda^\varepsilon) - u_i \cdot x_i(\lambda^\varepsilon)]_0^1}{\eta + (1 - \eta) \sum_j p_j(\lambda^\varepsilon)} \right]_\varepsilon^{\Bar{\lambda}}
\]
Note that $x(\lambda^\varepsilon)$ allocates everything, so $\sum_i p(\lambda^\varepsilon) \cdot (x_i(\lambda^\varepsilon) - \omega_i) = 0$.
It follows that
\[
    \max_i (p(\lambda^\varepsilon) \cdot x_i(\lambda^\varepsilon) - p(\lambda^\varepsilon) \cdot \omega_i)^+ = \max_i p(\lambda^\varepsilon) \cdot (x_i(\lambda^\varepsilon) - \omega_i).
\]
Now, fix $\varepsilon > 0$, and consider the agent $i(\varepsilon) \in \arg\max_i p(\lambda^\varepsilon) \cdot (x_i(\lambda^\varepsilon) - \omega_i)$.

\begin{lemma}\label{lemma:greatest-overspender-consumes-optimally}
If $x(\lambda^\varepsilon) \in \arg\max \{\sum_i \lambda_i^\varepsilon u_i \cdot x_i : x \in \mathcal{M}\}$, then $u_{i(\varepsilon)}(\lambda^\varepsilon) = u_{i(\varepsilon)} \cdot x_{i(\varepsilon)}(\lambda^\varepsilon)$.
\end{lemma}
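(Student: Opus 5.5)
The plan is to show that the budget set of agent $i \coloneqq i(\varepsilon)$ collapses to $\{y_i \in \Delta^n_- : p(\lambda^\varepsilon)\cdot y_i \le p(\lambda^\varepsilon)\cdot x_i(\lambda^\varepsilon)\}$. We then use the dual certificate behind VCG prices, exactly as in the proof of the lemma on VCG prices above, to show that $x_i(\lambda^\varepsilon)$ maximizes $u_i$ over this set. Write $p \coloneqq p(\lambda^\varepsilon)$ and $x \coloneqq x(\lambda^\varepsilon)$.

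\textbf{Step 1: the budget set of $i$.} By the choice of $i(\varepsilon)$ and the identity established just before the statement, the dividend term in $C^0_i(\lambda^\varepsilon)$ is
\[
\max_k \bigl(p\cdot x_k - p\cdot \omega_k\bigr)^+ = p\cdot(x_i - \omega_i).
\]
This quantity is nonnegative, because $\sum_k p\cdot(x_k - \omega_k) = 0$ forces the maximum to be at least $0$. Hence
\[
C^0_i(\lambda^\varepsilon) = \{y_i \in \Delta^n_- : p\cdot y_i \le p\cdot x_i\}.
\]
In particular $x_i \in C^0_i(\lambda^\varepsilon)$, so $u_i(\lambda^\varepsilon) \ge u_i\cdot x_i$, and only the reverse inequality remains.

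\textbf{Step 2: dual certificate.} Since $x$ maximizes $\sum_k \lambda^\varepsilon_k u_k\cdot x_k$ over $\mathcal{M}$ (by the hypothesis, which the previous lemma supplies), we invoke the VCG property already used in the proof of the lemma on VCG prices. Namely, $\beta_j = p_j$ together with some $\alpha_i \ge 0$ is an optimal dual solution, where $\alpha_i = \lambda^\varepsilon_i u_{ij} - p_j$ for every $j$ with $x_{ij} > 0$. Should this need justifying independently, it follows from complementary slackness applied to any optimal dual whose item part equals the VCG prices. Dual feasibility gives $\lambda^\varepsilon_i u_{ij} \le \alpha_i + p_j$ for all $j$.

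\textbf{Step 3: optimality of $x_i$.} For any $y_i \in C^0_i(\lambda^\varepsilon)$, we use $y_i \ge 0$, then $\alpha_i \ge 0$ with $\sum_j y_{ij} \le 1$, then the budget constraint:
\[
\lambda^\varepsilon_i\, u_i\cdot y_i \le \alpha_i \textstyle\sum_j y_{ij} + p\cdot y_i \le \alpha_i + p\cdot x_i .
\]
Because $\sum_j x_{ij} = 1$ and the dual constraint is tight on the support of $x_i$, the right-hand side equals $\lambda^\varepsilon_i\, u_i\cdot x_i$. Since $\lambda^\varepsilon_i \ge \varepsilon > 0$ (the fixed point lies in $[\varepsilon,\bar\lambda]^n$, and hence so does its limit), we may divide by $\lambda^\varepsilon_i$ to get $u_i\cdot y_i \le u_i\cdot x_i$. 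Thus $u_i(\lambda^\varepsilon) = u_i\cdot x_i(\lambda^\varepsilon)$.

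\textbf{Main obstacle.} The delicate point is Step 2. The complementary-slackness identity must hold for the limit allocation $x(\lambda^\varepsilon)$, which is only known to be some maximizer, not a specific one. We also need $\alpha_i \ge 0$. I expect both to follow because every optimal primal solution is complementary to every optimal dual solution, so the same VCG dual works for any maximizer. Strict positivity of $\lambda^\varepsilon_i$, which comes from the truncation at $\varepsilon$, is the other ingredient that must not be overlooked.
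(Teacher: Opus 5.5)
Your proposal is correct and is essentially the paper's argument: the paper also reduces $C^0_{i(\varepsilon)}(\lambda^\varepsilon)$ to $\{y \in \Delta^n_- : p(\lambda^\varepsilon)\cdot(y - x_{i(\varepsilon)}(\lambda^\varepsilon)) \le 0\}$. It then certifies optimality of $x_{i(\varepsilon)}(\lambda^\varepsilon)$ using the VCG optimal dual $(\alpha^*, p(\lambda^\varepsilon))$ of the weighted assignment LP together with complementary slackness. The paper packages this as the feasible consumer-LP dual $(a,b) = (\alpha^*_{i(\varepsilon)}/\lambda^\varepsilon_{i(\varepsilon)},\, 1/\lambda^\varepsilon_{i(\varepsilon)})$ matching the primal value, whereas you write out the equivalent weak-duality chain directly and make explicit two points the paper leaves implicit: that $\lambda^\varepsilon_{i(\varepsilon)} \ge \varepsilon > 0$, and that complementary slackness holds for every optimal primal.
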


\begin{proof}
Note that $C^0_{i(\varepsilon)}(\lambda^\varepsilon) = \{y_{i(\varepsilon)} \in \Delta_-^n : p(\lambda^\varepsilon) \cdot (y_{i(\varepsilon)} - x_{i(\varepsilon)}(\lambda^\varepsilon)) \leq 0\}$.
The dual LP of $i(\varepsilon)$'s consumption problem is thus
\[
    \min_{a, b \geq 0} \{a + b p(\lambda^\varepsilon) \cdot x_{i(\varepsilon)}(\lambda^\varepsilon) : a + b p_j(\lambda^\varepsilon) \geq u_{i(\varepsilon), j}\}
\]
Now, consider the dual LP of the assignment problem with utilities $(\lambda^\varepsilon_i u_i)_i$.
We know that there exist $\alpha^*$ such that
\[
    \textstyle (\alpha^*, p(\lambda^\varepsilon)) \in \arg\min_{\alpha, \beta \geq 0} \{\sum_i \alpha_i + \sum_j \beta_j : \alpha_i + \beta_j \geq \lambda^\varepsilon_i u_{ij} \,\forall i, j\}
\]
We claim that $a \coloneqq \alpha^*_{i(\varepsilon)} / \lambda^\varepsilon_{i(\varepsilon)}$ and $b \coloneqq 1/\lambda^\varepsilon_{i(\varepsilon)}$ is an optimal solution to the dual LP of $i(\varepsilon)$'s consumption problem.
It is clearly a feasible dual solution.
Moreover, since $x(\lambda^\varepsilon)$ maximizes welfare with respect to $(\lambda^\varepsilon_i u_i)_i$, we have that $\sum_j x_{ij}(\lambda^\varepsilon) = 1$ and $x_{ij}(\lambda^\varepsilon) > 0 \implies \alpha_i^* + p_j(\lambda^\varepsilon) = \lambda^\varepsilon_i u_{ij}$.
Thus,
\begin{align*}
    a + b p(\lambda^\varepsilon) \cdot x_{i(\varepsilon)}(\lambda^\varepsilon) 
        &= \textstyle \frac{1}{\lambda_{i(\varepsilon)}^\varepsilon} (\alpha^*_{i(\varepsilon)} + p(\lambda^\varepsilon) \cdot x_{i(\varepsilon)}(\lambda^\varepsilon)) \\
        &= \textstyle \frac{1}{\lambda_{i(\varepsilon)}^\varepsilon} \sum_j x_{i(\varepsilon), j}(\lambda^\varepsilon) (\alpha_{i(\varepsilon)}^* + p_j(\lambda^\varepsilon)) \\
        &= \textstyle u_{i(\varepsilon)} \cdot x_{i(\varepsilon)}(\lambda^\varepsilon)
\end{align*}
Since $i(\varepsilon)$ has exactly enough money to consume $x_{i(\varepsilon)}(\lambda^\varepsilon)$, it is also feasible for $i(\varepsilon)$ to consume this lottery subject to her budget.
Since we have found a feasible primal solution and a feasible dual solution to $i(\varepsilon)$'s consumption problem that obtain the same objective value, it follows that $x_{i(\varepsilon)}(\lambda^\varepsilon)$ is an optimal consumption. 
\end{proof}

If $\lambda^\varepsilon_{i(\varepsilon)} > \varepsilon$, then
\[
    \lambda^\varepsilon_{i(\varepsilon)} 
        \leq \frac{\lambda^\varepsilon_{i(\varepsilon)} + [u_{i(\varepsilon)}(\lambda^\varepsilon) - u_{i(\varepsilon)} \cdot x_{i(\varepsilon)}(\lambda^\varepsilon)]_0^1}{\eta + (1 - \eta) \sum_j p_j(\lambda^\varepsilon)} 
        = \frac{\lambda^\varepsilon_{i(\varepsilon)}}{\eta + (1 - \eta) \sum_j p_j(\lambda^\varepsilon)}
\]
where the inequality follows from the fact that $\lambda^\varepsilon$ is a fixed point and the equality follows from Lemma \ref{lemma:greatest-overspender-consumes-optimally}. 
It follows that $\sum_j p_j(\lambda^\varepsilon) \leq 1$.
We claim that this inequality implies that $u_i(\lambda^\varepsilon) = u_i \cdot x_i(\lambda^\varepsilon)$ for all agents $i$.

\begin{lemma}
Let $\bar{\lambda} > 1/\min_{i,j,\ell : u_{ij} > u_{i\ell}}\{u_{ij} - u_{i\ell}\}$.
If $\sum_j p_j(\lambda^\varepsilon) \leq 1$ and $\lambda^\varepsilon_i = \bar{\lambda}$, then $u_i \cdot x_i(\lambda^\varepsilon) = \max_j u_{ij}$.
\end{lemma}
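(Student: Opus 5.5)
Suppose by way of contradiction that $u_i \cdot x_i(\lambda^\varepsilon) < \max_j u_{ij}$, and let $j^*$ be a favorite item of agent $i$. The plan is the same as in the preceding lemma, which handled $p(\lambda) = 0$: we lower-bound the VCG price of $j^*$ and then contradict the hypothesis $\sum_j p_j(\lambda^\varepsilon) \leq 1$.

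First, $x(\lambda^\varepsilon)$ maximizes welfare with respect to $(\lambda^\varepsilon_k u_k)_k$. By the Birkhoff--von Neumann theorem, decompose $x(\lambda^\varepsilon)$ into integral optimal matchings. Since $u_i \cdot x_i(\lambda^\varepsilon) < u_{ij^*}$, some matching $M$ in the support gives $i$ an item $M(i)$ with $u_{i,M(i)} < u_{ij^*}$. This matching $M$ is itself welfare-maximizing, since every matching in the support of an optimal fractional allocation is optimal by linearity.

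Next, we bound $p_{j^*}(\lambda^\varepsilon)$ using the characterization of VCG prices from the proof of Lemma~\ref{lemma:VCG-price-continuous}. The price $p_{j^*}$ is the increase in optimal welfare when a second copy of $j^*$ is added. Take $M$ and reassign agent $i$ to the extra copy of $j^*$, leaving $M(i)$ unused. This is feasible in the augmented problem and raises welfare by $\lambda^\varepsilon_i(u_{ij^*} - u_{i,M(i)})$. Hence
\[
p_{j^*}(\lambda^\varepsilon) \geq \lambda^\varepsilon_i (u_{ij^*} - u_{i,M(i)}).
\]
Equivalently, one can argue from dual feasibility and complementary slackness: $\alpha_i + p_{j^*} \geq \lambda_i u_{ij^*}$ and $\alpha_i + p_{M(i)} = \lambda_i u_{i,M(i)}$. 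That route would need $p_{M(i)} \geq 0$ together with minimality of the VCG prices, so the direct welfare-comparison argument is cleaner.

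Finally, substitute $\lambda^\varepsilon_i = \bar\lambda$. Since $u_{ij^*} > u_{i,M(i)}$, the gap is at least $\min_{i,j,\ell : u_{ij} > u_{i\ell}}\{u_{ij} - u_{i\ell}\}$. By the choice of $\bar\lambda$, this gives $p_{j^*}(\lambda^\varepsilon) > 1$. Because VCG prices are nonnegative, $\sum_j p_j(\lambda^\varepsilon) \geq p_{j^*}(\lambda^\varepsilon) > 1$, contradicting the hypothesis.

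The only delicate step is justifying that $p_{j^*}$ is at least the marginal welfare gain from the single reassignment. This needs the augmented LP in Lemma~\ref{lemma:VCG-price-continuous} to allow $M(i)$ to go unassigned. It does, because those constraints are inequalities ($\sum_i x_{ij} \leq 1$ for $j \neq j^*$ and $\sum_i x_{ij^*} \leq 2$). The optimum over $\mathcal{M}$ equals the optimum with inequality constraints, since utilities are nonnegative.
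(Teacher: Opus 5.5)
Your proof is correct and follows the paper's argument: Birkhoff--von Neumann gives a support matching $M$ with $u_{i,M(i)} < u_{ij^*}$, so $p_{j^*}(\lambda^\varepsilon) \geq \lambda^\varepsilon_i (u_{ij^*} - u_{i,M(i)}) > 1$, which contradicts $\sum_j p_j(\lambda^\varepsilon) \leq 1$. Your extra justifications are sound but are left implicit in the paper: that every support matching is optimal, the reassignment to a second copy of $j^*$ in the augmented LP, and the nonnegativity of VCG prices.
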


\begin{proof}
Let $j^* \in \arg\max_j u_{ij}$, and suppose by way of contradiction that $u_i \cdot x_i(\lambda^\varepsilon) < u_{ij^*}$.
By the Birkhoff-von-Neumann theorem, there exists a matching $M$ in the support of $x_i(\lambda^\varepsilon)$ such that $u_{i, M(i)} < u_{ij^*}$.
It follows that $p_{j^*}(\lambda^\varepsilon) \geq \lambda^\varepsilon_i (u_{ij^*} - u_{i, M(i)}) > 1$, a contradiction.
\end{proof}

\begin{lemma}
If $\sum_j p_j(\lambda^\varepsilon) \leq 1$ and $\varepsilon \leq \lambda^\varepsilon_i < \bar{\lambda}$, then $u_i \cdot x_i(\lambda^\varepsilon) = u_{i}(\lambda^\varepsilon)$.
\end{lemma}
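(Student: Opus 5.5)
The plan is to read the conclusion directly off the fixed-point identity for $\lambda^\varepsilon$ established just above (the limit as $\delta \to 0$ of $\lambda^{\varepsilon,\delta} = \varphi^{\varepsilon,\delta}(\lambda^{\varepsilon,\delta})$). Fix $\varepsilon$ and write $\lambda \coloneqq \lambda^\varepsilon$, $x \coloneqq x(\lambda^\varepsilon)$ and $S \coloneqq \sum_j p_j(\lambda)$. Let
\[
    g_i \coloneqq [u_i(\lambda) - u_i \cdot x_i]_0^1
\]
be the clipped utility gap. The fixed-point identity then reads
\[
    \lambda_i = \left[\frac{\lambda_i + g_i}{\eta + (1-\eta) S}\right]_\varepsilon^{\bar\lambda}.
\]
Throughout I take $\eta \in (0,1)$, as is implicit in the construction.

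\textbf{Step 1: $g_i$ measures the shortfall exactly.} I first check that $u_i(\lambda) \geq u_i \cdot x_i$, so that $g_i = 0$ holds if and only if $u_i(\lambda) = u_i \cdot x_i$. The budget set $C^0_i(\lambda)$ has right-hand side $p(\lambda)\cdot\omega_i + \max_k (p(\lambda)\cdot x_k - p(\lambda)\cdot\omega_k)^+$. The maximum over $k$ includes $k = i$, so this right-hand side is at least $p(\lambda)\cdot x_i$. Hence $x_i \in C^0_i(\lambda)$, and therefore $u_i(\lambda) \geq u_i \cdot x_i$. It follows that $g_i \ge 0$, and $g_i > 0$ whenever the gap is positive, since clipping to $[0,1]$ preserves strict positivity.

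\textbf{Step 2: neither clip is active.} Let $v \coloneqq (\lambda_i + g_i)/(\eta + (1-\eta)S)$. The hypothesis $S \le 1$ gives $0 < \eta + (1-\eta)S \le 1$, so $v \ge \lambda_i + g_i \ge \lambda_i \ge \varepsilon$. Thus the lower clip at $\varepsilon$ does not bind. If $v \ge \bar\lambda$, the fixed-point identity would force $\lambda_i = \bar\lambda$, contradicting the hypothesis $\lambda_i < \bar\lambda$. So the upper clip does not bind either, and $\lambda_i = v$.

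\textbf{Step 3: conclude.} From $\lambda_i = v \ge \lambda_i + g_i$ we get $g_i \le 0$. Combined with Step 1, $g_i = 0$, and hence $u_i \cdot x_i(\lambda^\varepsilon) = u_i(\lambda^\varepsilon)$.

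The only delicate point is the bookkeeping in Step 2. Both clips must be ruled out, and the denominator must be handled correctly: positivity of $\eta$ keeps it strictly positive, and $S \le 1$ keeps it at most $1$, which is exactly what makes the map $\lambda_i \mapsto v$ weakly expansive. Everything else is immediate from the definitions.
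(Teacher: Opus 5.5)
Your proof is correct and is essentially the paper's argument: the fixed-point identity together with $\sum_j p_j(\lambda^\varepsilon) \le 1$ makes the map weakly expansive and forces the clipped gap to vanish. You phrase this directly rather than by contradiction, and you implicitly use nonnegativity of the VCG prices to keep the denominator positive. Your Step 1, showing $x_i(\lambda^\varepsilon) \in C^0_i(\lambda^\varepsilon)$ and hence $u_i(\lambda^\varepsilon) \ge u_i \cdot x_i(\lambda^\varepsilon)$, supplies the reverse inequality that the paper leaves implicit, since its contradiction argument only rules out $u_i \cdot x_i(\lambda^\varepsilon) < u_i(\lambda^\varepsilon)$.
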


\begin{proof}
If $u_i \cdot x_i(\lambda^\varepsilon) < u_{i}(\lambda^\varepsilon)$, then
\[
    \lambda^\varepsilon_i \geq \frac{\lambda^\varepsilon_{i} + [u_{i}(\lambda^\varepsilon) - u_{i} \cdot x_{i}(\lambda^\varepsilon)]_0^1}{\eta + (1 - \eta) \sum_j p_j(\lambda^\varepsilon)}
        > \frac{\lambda^\varepsilon_{i(\varepsilon)}}{\eta + (1 - \eta) \sum_j p_j(\lambda^\varepsilon)} \geq \lambda^\varepsilon_i
\] 
a contradiction. 
The first inequality follows from the fact that $\lambda^\varepsilon$ is a fixed point and $\varepsilon \leq \lambda^\varepsilon_i < \bar{\lambda}$.
The second inequality follows from $u_i \cdot x_i(\lambda^\varepsilon) < u_{i}(\lambda^\varepsilon)$.
The third inequality follows from $\sum_j p_j(\lambda^\varepsilon) \leq 1$.
\end{proof}

At this point, we have shown that if there exists $\varepsilon > 0$ such that $\lambda^\varepsilon_{i(\varepsilon)} > \varepsilon$ where $i(\varepsilon) \in \arg\max_i p(\lambda^\varepsilon) \cdot (x_i(\lambda^\varepsilon) - \omega_i)$, then Theorem \ref{theorem:VCG-LDE} holds with $\alpha = p(\lambda^\varepsilon) \cdot (x_{i(\varepsilon)}(\lambda^\varepsilon) - \omega_{i(\varepsilon)})$.
We now show how to obtain a contradiction if such a $\varepsilon$ does not exist.

Suppose that for all $\varepsilon > 0$, $\lambda_{i(\varepsilon)}^\varepsilon = \varepsilon$.
By the fact that $\lambda^\varepsilon$ is a fixed point and Lemma \ref{lemma:greatest-overspender-consumes-optimally},
\[
    \lambda_{i(\varepsilon)}^\varepsilon 
        \geq \frac{\lambda^\varepsilon_{i(\varepsilon)} + [u_{i(\varepsilon)}(\lambda^\varepsilon) - u_{i(\varepsilon)} \cdot x_{i(\varepsilon)}(\lambda^\varepsilon)]_0^1}{\eta + (1 - \eta) \sum_j p_j(\lambda^\varepsilon)} 
        = \frac{\lambda^\varepsilon_{i(\varepsilon)}}{\eta + (1 - \eta) \sum_j p_j(\lambda^\varepsilon)} 
\]
so $\sum_j p_j(\lambda^\varepsilon) \geq 1$.
Thus, if we were to take the limit of $\lambda^\varepsilon$ as $\varepsilon \to 0$ (after passing to a subsequence), then prices would not item-wise vanish.
Since $\lambda_{i(\varepsilon)}^\varepsilon = \varepsilon$ for all $\varepsilon > 0$, we know that
\[
    \lim_{\varepsilon \to 0} p(\lambda^\varepsilon) \cdot x_{i(\varepsilon)}(\lambda^\varepsilon) \leq \lim_{\varepsilon \to 0} \lambda_i^\varepsilon u_i \cdot x_{i(\varepsilon)}(\lambda^\varepsilon) \leq 0
\]
so
\[
    \max_i p(\lambda^0) \cdot (x_i(\lambda^0) - \omega_i) = \lim_{\varepsilon \to 0} \max_i p(\lambda^\varepsilon) \cdot (x_i(\lambda^\varepsilon) - \omega_i) < 0
\]
as $\sum_j p_j(\lambda^\varepsilon) \geq 1$ and $\omega_i \gg 0$.
But this is a contradiction since $\sum_i p(\lambda^0) \cdot (x_i(\lambda^0) - \omega_i) = 0$.

\section{Existence: General Case}

In this section, we formally prove that an LDE satisfying the strong cheapest bundle property exists.

\LDEexistence*

\subsection{Proof of Theorem~\ref{theorem:LDE-existence}}

Throughout the proof, define $x/y$ as follows for all $x, y \in \RR$.
\[
    x/y \coloneqq \begin{cases}
        x/y & y \not= 0 \\
        +\infty & x > 0 \wedge y = 0 \\
        -\infty & x < 0 \wedge y = 0 \\
        1 & x = y = 0
    \end{cases}
\]
Now, for all agents $i$, let $\omega^\varepsilon_i \coloneqq \frac{\varepsilon}{n} \cdot \1 + (1 - \varepsilon) \cdot \omega_i$.
By Theorem~\ref{theorem:VCG-LDE}, there exists $(x^\varepsilon, p^\varepsilon)$ and $\alpha^\varepsilon$ such that
\begin{enumerate}
    \item $x^\varepsilon \in \mathcal{M}$
    \item $x_i^\varepsilon \in \arg\max_{y_i \in \Delta_-^n} \{u_i \cdot y_i : p^\varepsilon \cdot y_i \leq p^\varepsilon \cdot \omega^\varepsilon_i + \alpha^\varepsilon\}$ for each agent $i$
    \item $x_i \in \arg\min_{y_i \in \Delta_-^n} \{p^\varepsilon \cdot y_i : u_i \cdot y_i \geq u_i \cdot x_i^\varepsilon\}$ for each agent $i$
\end{enumerate}
Now, let $p^{\varepsilon, 1} \coloneqq p^\varepsilon$.
For $k \in [n]$, iteratively define $r^{\varepsilon, k}_{j\ell} \coloneqq p^{\varepsilon, k}_j / |p^{\varepsilon, k}_\ell|$.
Let $r^{(k)}_{j\ell}$ denote the limit of $r^{\varepsilon,k}_{j\ell}$ (after passing to a subsequence).
Consider the total preorder $\succeq_k$ on $[n]$ given by $j \succeq_k \ell$ if and only if $r^{(k)}_{j\ell} \not= 0$.
\begin{enumerate}
    \item (Reflexivity) $j \succeq_k j$ since $r^{(k)}_{jj} = \lim_{\varepsilon \to 0} p^{\varepsilon, k}_j / p^{\varepsilon, k}_j = 1 \not= 0$.
    \item (Transitivity) If $j \succeq_k \ell'$ (so $r^{(k)}_{j\ell'} \not= 0$) and $\ell' \succeq_1 \ell$ (so $r^{(k)}_{\ell'\ell} \not= 0$), then 
    \[
        r^{(k)}_{j\ell} 
            = \lim_{\varepsilon \to 0} p^{\varepsilon,k}_j / p^{\varepsilon,k}_\ell 
            = \lim_{\varepsilon \to 0} p^{\varepsilon,1}_j / p^{\varepsilon,k}_{\ell'} \cdot \lim_{\varepsilon \to 0} p^{\varepsilon,k}_{\ell'} / p^{\varepsilon,k}_\ell 
            = r^{(1)}_{j\ell'} \cdot r^{(1)}_{\ell'\ell} 
            \not= 0
    \]
    \item (Totality) Either $r^{(k)}_{j\ell} \not= 0$ (so $j \succeq_k \ell$) or $r^{(k)}_{j \ell} = 0$.
    We are in the latter case only if $p^{\varepsilon,k}_j = 0$ while $p^{\varepsilon,k}_\ell \not= 0$, so $r^{(k)}_{\ell j} \not= 0$ and $j \preceq_k \ell$.
\end{enumerate}
Consequently, the binary relation $\sim_k$ given by $j \sim_k \ell$ if and only if $j \succeq_k \ell$ and $j \preceq_k \ell$ constitutes an equivalence relation, and $\succeq_k$ induces a total order on the quotient set $[n] / \sim_k$.
Let $j_k$ denote the representative of the greatest element of $[n] / \sim_k$ under $\succeq_k$, and define $C^\varepsilon_k \coloneqq |p^{\varepsilon,k}_{j_k}|$. 
For all items $j$, define $p^{(k)}_j \coloneqq r^{(k)}_{jj_k}$ and $p^{\varepsilon, k+1}_j \coloneqq p^{\varepsilon, k}_j - C^\varepsilon_{k} \cdot p^{(k)}_j$.

\begin{lemma}\label{lemma:some-properties}
Some properties:
\begin{enumerate}
    \item If $p^{\varepsilon, k}_j = 0$, then $p^{\varepsilon, k+1}_j = 0$.
    \item $C^\varepsilon_k \cdot p^{(k)}_{j_k} = p^{\varepsilon, k}_{j_k}$ for all $k \in [n]$.
    \item $p^{\varepsilon, k}_{j_\ell} = 0$ for all $\ell < k \in [n]$.
    \item $p^{(k)}_{j_\ell} = 0$ for all $\ell < k \in [n]$ such that $C^\varepsilon_k > 0$ for all sufficiently small $\varepsilon$.
    \item For all items $j$, $p^{\varepsilon, n+1}_j = 0$.
    \item $p^{(k)}_j \not= 0$ if and only if $j \sim_k j_k$.
    \item $p^{(k)}_j < 0$ only if $j \sim_\ell j_\ell$ for some $\ell \in [k-1]$.
    \item $p^{(k)}_j \in \RR$ for all items $j$ and $k \in [n]$.
    \item $\lim_{\varepsilon \to 0} C^\varepsilon_{k+1} / C^\varepsilon_{k} = 0$ for all $k \in [n]$ such that $C^\varepsilon_k > 0$ for all sufficiently small $\varepsilon$. 
    \item If $p^{(\ell)} \cdot y_i \leq 0$ for all $\ell \leq k$, then $y_{ij} = 0$ for all $j \sim_\ell j_\ell$ for some $\ell \leq k$.
\end{enumerate}
\end{lemma}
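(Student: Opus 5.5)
The plan is to pass once and for all to a subsequence of $\varepsilon \to 0$ along which, for every $k \in [n]$ and every pair $j,\ell$, the sign pattern of $p^{\varepsilon,k}_j$ (positive, negative, or zero) is constant and every ratio $r^{\varepsilon,k}_{j\ell}$ converges in $[-\infty,+\infty]$. Finitely many quantities are involved, so a diagonal argument suffices. Everything then follows from the definitions plus two basic facts:
\begin{itemize}
\item $j_k$ is greatest for $\succeq_k$, so $j_k \succeq_k j$ for all $j$.
\item $p^{\varepsilon,1} = p^\varepsilon \ge 0$, since VCG prices are non-negative.
\end{itemize}
I would prove the items roughly in the order $8,2,1,3,4,6,9,5,7,10$, since the later ones lean on the earlier ones.

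\textbf{Items 8, 2, 1, 3, 4.} For item 8, I split on $C^\varepsilon_k$.
\begin{itemize}
\item If $C^\varepsilon_k > 0$, then $r^{(k)}_{j_k j} \neq 0$. Hence $p^{\varepsilon,k}_j/p^{\varepsilon,k}_{j_k}$ converges to the reciprocal of a nonzero extended real, which is finite.
\item If $C^\varepsilon_k = 0$, then every $p^{\varepsilon,k}_j = 0$. Otherwise $j$ would strictly dominate $j_k$ (the ratio would be $\pm\infty$ against $0/0=1$ in the other direction). So every ratio is $1$ by the convention.
\end{itemize}
Item 2 follows because $p^{(k)}_{j_k} = \lim p^{\varepsilon,k}_{j_k}/|p^{\varepsilon,k}_{j_k}|$ is the eventually constant sign (or $1$ when $C^\varepsilon_k=0$, in which case both sides vanish). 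For item 1, if $p^{\varepsilon,k}_j=0$ then either $C^\varepsilon_k>0$ and $p^{(k)}_j = 0$, or $C^\varepsilon_k=0$; in both cases $C^\varepsilon_k p^{(k)}_j=0$. Item 2 gives $p^{\varepsilon,k+1}_{j_k}=0$, and item 1 propagates this zero forward, which is item 3. Item 4 is then $\lim 0/C^\varepsilon_k = 0$.

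\textbf{Items 6, 9, 5.} Item 6 is a restatement: $p^{(k)}_j\neq 0$ means $j\succeq_k j_k$, and maximality of $j_k$ makes this equivalent to $j\sim_k j_k$. For item 9, write $j=j_{k+1}$. Then
\[
C^\varepsilon_{k+1}/C^\varepsilon_k = \bigl|p^{\varepsilon,k}_j/C^\varepsilon_k - p^{(k)}_j\bigr| \to 0
\]
by the very definition of $p^{(k)}_j$. For item 5, observe that whenever $p^{\varepsilon,k}\neq 0$, the coordinate $j_k$ is nonzero, because it is maximal. Item 2 kills it, and item 1 shows no zero is revived. So the support of $p^{\varepsilon,k}$ strictly shrinks at each step until it is empty, and after $n$ steps $p^{\varepsilon,n+1}=0$.

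\textbf{Items 7 and 10 (the main obstacle).} These are where signs matter. I would prove by induction on $k$ the invariant: if $j$ lies in no class $[j_\ell]_{\sim_\ell}$ with $\ell<k$, then $p^{\varepsilon,k}_j = p^\varepsilon_j \ge 0$. The induction step uses item 6: $p^{(\ell)}_j=0$ for those $\ell$, so subtracting $C^\varepsilon_\ell p^{(\ell)}_j$ leaves the coordinate unchanged. This gives item 7 directly. It also shows that $j_k$ has a non-negative coordinate, so $p^{(k)}_j>0$ for every $j\sim_k j_k$ outside earlier classes.

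Item 10 then follows by induction on $k$, applied to a non-negative $y_i$. By the inductive hypothesis, $y_i$ vanishes on all classes $\ell<k$. The only negative entries of $p^{(k)}$ sit on those classes, by item 7. So $p^{(k)}\cdot y_i\le 0$ reduces to a sum of positive coefficients times non-negative $y_{ij}$ over the remaining members of $[j_k]$, which forces $y_{ij}=0$ there. In the degenerate case $C^\varepsilon_k=0$, $p^{(k)}$ is all ones and the conclusion is immediate.

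The delicate points are the $0/0=1$ and $\pm\infty$ conventions and checking that each subsequence choice is consistent across all $k$. I expect that bookkeeping, rather than any real estimate, to be the main work.
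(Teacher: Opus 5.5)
Your proposal is correct and follows the paper's proof essentially item by item. It uses the same definitional arguments for (1)--(4), (6), (8) and (9), and the same invariant $p^{\varepsilon,k}_j=p^\varepsilon_j\ge 0$ off earlier classes for (7). For (10) you use an induction in place of the paper's minimal-counterexample argument, and your support-shrinking argument for (5) is a repackaging of the paper's pigeonhole argument on $j_1,\dots,j_n$. Your upfront constant-sign subsequence, refined stage by stage, makes explicit what the paper leaves implicit.

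One small slip, which you never actually use: $j_k$ need not have a non-negative coordinate. It can lie in an earlier class with a negative residual. For example, $p^\varepsilon=(1,1-\varepsilon,\varepsilon^2)$ with $j_1=1$ gives $p^{\varepsilon,2}=(0,-\varepsilon,\varepsilon^2)$ and $j_2=2$. Positivity of $p^{(k)}_j$ for $j\sim_k j_k$ outside earlier classes follows anyway, because $p^{(k)}_j=\lim p^{\varepsilon,k}_j/|p^{\varepsilon,k}_{j_k}|$ carries an absolute value in the denominator.
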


\begin{proof}
(1) follows from the fact that
\[
    p^{\varepsilon, k+1}_j = p^{\varepsilon, k}_j - C^\varepsilon_k \cdot p^{(k)}_j = 0 - C^\varepsilon_k \cdot \lim_{\varepsilon \to 0} 0/C^\varepsilon_k = 0
\]
while (2) follows from the fact that
\[
    C^\varepsilon_k \cdot p^{(k)}_{j_k} = |p^{\varepsilon, k}_{j_k}| \cdot r^{(k)}_{j_kj_k} = |p^{\varepsilon, k}_{j_k}| \cdot \mathrm{sgn}(p^{\varepsilon, k}_{j_k}) = p^{\varepsilon, k}_{j_k}
\]
(3) follows from (1) and (2): by (2), $p^{\varepsilon, \ell+1}_{j_\ell} = p^{\varepsilon,\ell+1}_{j_\ell} - C^\varepsilon_\ell \cdot p^{(\ell)}_{j_\ell} = 0$, which, by repeated application of (1), implies that $p^{\varepsilon, k}_{j_\ell} = 0$ for all $k > \ell$.
(4) follows from (3): if $C^\varepsilon_k > 0$ for all sufficiently small $\varepsilon$, then 
\[
    r^{\varepsilon,k}_{j_\ell j_k} = p^{\varepsilon, k}_{j_\ell} / C^\varepsilon_k = 0
\]
and $p^{(k)}_{j_\ell} = r^{(k)}_{j_\ell j_k} = 0$ for all $\ell \in [k-1]$.

To see (5), suppose by way of contradiction that there exists an item $j$ such that $p^{\varepsilon, n+1}_j \not= 0$.
By (2) and (3), it must be the case that $j \not\in \{j_1, \dots, j_n\}$.
Since there are only $n$ items, there exists $\ell < k \in [n]$ such that $j_\ell = j_k$.
By (3), we know that $p^{\varepsilon, k}_{j_\ell} = 0$.
But since $j_\ell$ represents the greatest element of $[n] / \sim_k$ under $\succeq_k$, we know that $r^{(k)}_{j_\ell j} \not= 0$, so it must be the case that $p^{\varepsilon, k}_j = 0$.
By (1), it follows that $p^{\varepsilon, n+1}_j = 0$, a contradiction.

(6) follows from the definition of $\succeq_k$.
To see (7), note that by (6), $j \prec_\ell j_\ell$ for all $\ell \in [k-1]$ implies that $p^{(\ell)}_j = 0$ for all such $\ell$.
Thus, $p^{\varepsilon, k}_j = p^\varepsilon_j \geq 0$, so $p^{(k)}_j = \lim_{\varepsilon \to 0} p^{\varepsilon, k}_j / C^\varepsilon_k \geq 0$.
(8) follows from the definition of $\succeq_k$ and $j_k$.

To see (9), note that for all $\delta > 0$, we have that
\[  
    p^{(k)}_{j_{k+1}} = p^{\varepsilon, k}_{j_{k+1}} / C^\varepsilon_k \pm \delta
\]
for sufficiently small $\varepsilon$ since $p^{(k)}_{j_{k+1}} \in \RR$ by definition of $\succeq_k$ and $j_k$.
Thus, we will eventually have that
\[
    \frac{C^\varepsilon_{k+1}}{C^\varepsilon_k} = \frac{|p^{\varepsilon, k}_{j_{k+1}} - C^\varepsilon_k \cdot p^{(k)}_{j_{k+1}}|}{C^\varepsilon_k} \leq \delta
\]

To see (10), suppose by way of contradiction that there exists an item $j \sim_\ell j_\ell$ for some $\ell \leq k$ yet $y_{ij} > 0$.
Let $\ell$ denote the smallest index for which such an item exists, and let $j$ denote the item.
By the minimality of $\ell$, it follows from (6) and (7) that 
\begin{enumerate}
    \item $p^{(\ell)}_j > 0$ (otherwise, by (6), $p^{(\ell)}_j < 0$, so by (7), $j \sim_h j_h$ for some $h < \ell$, contradicting the minimality of $\ell$) 
    \item $p^{(h)} \cdot y_i = 0$ for all $h < \ell$ (otherwise, there exists $j' \in \supp{y_i}$ such that $p^{(h)}_{j'} \not= 0$ for some $h < \ell$, which, by (6), implies that $j' \sim_h j_h$, contradicting the minimality of $\ell$)
    \item $p^{(k)}_{j'} \geq 0$ for all $j' \in \supp{y_i}$ (otherwise, by (7), $j' \sim_h j_h$ for some $h < \ell$, contradicting the minimality of $\ell$)
\end{enumerate}
It follows that $p^{(\ell)} \cdot y_i > 0$, contradicting the fact that the opposite is true.

\end{proof}

Now, define
\[
    M \coloneqq \sup \{k \in [n] : \exists\,\delta : C^\varepsilon_k > 0 \,\forall\, \varepsilon < \delta\}
\]
If $M = -\infty$, then $p^\varepsilon = 0$ infinitely often, so it is possible to clear the market while giving each agent her favorite lottery for free.
For the rest of the proof, we assume that $M \in [n]$.

\begin{lemma}\label{lemma:decomposition-of-prices}
For all items $j$, $p^\varepsilon_j = \sum_{k=1}^M C^\varepsilon_k \cdot p^{(k)}_j$.
\end{lemma}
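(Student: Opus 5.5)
The identity follows by telescoping the recursive definition $p^{\varepsilon,k+1}_j = p^{\varepsilon,k}_j - C^\varepsilon_k \cdot p^{(k)}_j$. Summing this relation for $k = 1, \dots, n$ and using $p^{\varepsilon,1} = p^\varepsilon$ gives
\[
    p^\varepsilon_j = p^{\varepsilon, n+1}_j + \sum_{k=1}^n C^\varepsilon_k \cdot p^{(k)}_j
\]
for every item $j$ and every $\varepsilon$ along the chosen subsequence. By property (5) of Lemma~\ref{lemma:some-properties}, $p^{\varepsilon,n+1}_j = 0$. Hence $p^\varepsilon_j = \sum_{k=1}^n C^\varepsilon_k p^{(k)}_j$, and it remains to discard the terms with $k > M$.

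To discard those terms, I show that $C^\varepsilon_k = 0$ for all $k > M$ and all sufficiently small $\varepsilon$. I restrict attention to small $\varepsilon$ along the subsequence, since the lemma is only used in the limit; the statement should be read as holding for all sufficiently small $\varepsilon$ along it. When extracting the convergent subsequences for $r^{(k)}$, I also pass to a further subsequence along which, for each $k \in [n]$, either $C^\varepsilon_k > 0$ for all $\varepsilon$ or $C^\varepsilon_k = 0$ for all $\varepsilon$. This is possible because there are finitely many $k$ and each dichotomy occurs infinitely often on one side. Under this normalization, $M$ is the largest $k$ with $C^\varepsilon_k > 0$ eventually.

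Next I show that vanishing is inherited upward. If $C^\varepsilon_k = |p^{\varepsilon,k}_{j_k}| = 0$, then since $j_k$ is the greatest class under $\succeq_k$, every $j$ satisfies $r^{(k)}_{j_k j} \neq 0$. By the convention for $x/y$, this forces $p^{\varepsilon,k}_j = 0$: if $p^{\varepsilon,k}_j \neq 0$ while $p^{\varepsilon,k}_{j_k} = 0$, the ratio would be $0$ along the subsequence. So $p^{\varepsilon,k} \equiv 0$. Property (1) of Lemma~\ref{lemma:some-properties} then gives $p^{\varepsilon,k'} \equiv 0$ for all $k' \geq k$, so $C^\varepsilon_{k'} = 0$. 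In particular, once some $C^\varepsilon_k$ vanishes, all subsequent ones vanish. Therefore $C^\varepsilon_k = 0$ for every $k > M$, and those terms drop out of the sum.

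The main obstacle is the bookkeeping around the subsequence and the meaning of $M$, which is defined as a supremum of a set of indices rather than a threshold. I handle this with the subsequence normalization above. The monotonicity argument relies on the same division convention used for totality in Lemma~\ref{lemma:some-properties}, so no new ideas are needed beyond the telescoping step.
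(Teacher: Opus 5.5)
Your proposal is correct and follows essentially the same route as the paper. Both arguments telescope the recursion and use the fact that $C^\varepsilon_{M+1}=0$, together with $j_{M+1}$ being the $\succeq_{M+1}$-greatest class, to force $p^{\varepsilon,M+1}\equiv 0$; the paper stops the telescoping at $M+1$, whereas you go to $n+1$, invoke property (5), and then discard the terms $k>M$ via property (1). Your subsequence normalization, under which each $C^\varepsilon_k$ is either always positive or always zero, is a worthwhile addition: it is exactly what justifies the paper's step ``by definition of $M$, $C^\varepsilon_{M+1}=0$,'' which does not follow from the $\sup$ definition alone.
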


\begin{proof}
Note that
\[
    \textstyle p^\varepsilon_j = p^{\varepsilon,1}_j = p^{\varepsilon, 2}_j + C^\varepsilon_1 \cdot p^{(1)}_j = \dots = p^{\varepsilon, M+1}_j + \sum_{k=1}^M C^\varepsilon_k \cdot p^{(k)}_j  
\]
If $M = n$, then by (5) in Lemma~\ref{lemma:some-properties}, $p^{\varepsilon, M+1}_j = 0$.
Otherwise, $M < n$, and by definition of $M$, $|p^{\varepsilon, M+1}_{j_{M+1}}| = C^\varepsilon_{M+1} = 0$.
Since $j_{M+1}$ represents the greatest element of $[n] / \sim_{M+1}$ under $\succeq_{M+1}$, we know that $\lim_{\varepsilon \to 0} p^{\varepsilon, M+1}_{j_{M+1}} / |p^{\varepsilon, M+1}_{j}| = r^{(M+1)}_{j_{M+1} j} \not= 0$ for all items $j$, so it must be the case that $p^{\varepsilon, M+1}_j = 0$.
\end{proof}

Now, define
\begin{align*}
    m &\coloneqq M \wedge \inf \CrBr{k \in [M]: \lim_{\varepsilon \to 0} \frac{\frac{\varepsilon}{n} \sum_j p^\varepsilon_j + \alpha^\varepsilon}{C^\varepsilon_k} > 0} \\
    \alpha &\coloneqq \lim_{\varepsilon \to 0} \frac{\frac{\varepsilon}{n} \sum_j p^\varepsilon_j + \alpha^\varepsilon}{C^\varepsilon_m}
\end{align*}
We show that $(x, p^{(1)}, \dots, p^{(m)})$ where $x$ denotes the limit of $x^\varepsilon$ (after passing to a subsequence) constitutes an $\alpha$-EDE in $m$ lexicographic currencies.

For each agent $i$, define 
\[
    k_i \coloneqq M \wedge \inf \CrBr{k \in [M]: \lim_{\varepsilon \to 0} \frac{p^\varepsilon \cdot \omega^\varepsilon_i + \alpha^\varepsilon}{C^\varepsilon_k} > 0}
\]
Note that $k_i \leq m$ for all agents $i$ since $\omega^\varepsilon_i \coloneqq \frac{\varepsilon}{m} \cdot \1 + (1 - \varepsilon) \cdot \omega_i$.

\begin{lemma}\label{lemma:no-higher-currencies}
For all $k < k_i$, $p^{(k)} \cdot \omega_i = 0$.
\end{lemma}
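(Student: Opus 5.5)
We argue by induction on $k$, comparing the expansion of $p^\varepsilon \cdot \omega_i$ from Lemma~\ref{lemma:decomposition-of-prices} with the definition of $k_i$. First some bookkeeping. Every $k < k_i$ satisfies $k < M$, so $C^\varepsilon_k > 0$ for all sufficiently small $\varepsilon$, and the ratios below are well-defined. All limits are taken along the subsequence fixed earlier, passing to a further subsequence if needed so that they exist.

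The first step is to extract a vanishing limit. The VCG prices are non-negative, $\alpha^\varepsilon \geq 0$, and $\omega^\varepsilon_i \geq (1-\varepsilon)\omega_i \geq 0$ coordinatewise. Hence
\[
    0 \leq (1-\varepsilon)\, p^\varepsilon \cdot \omega_i \leq p^\varepsilon \cdot \omega^\varepsilon_i + \alpha^\varepsilon .
\]
Since $k < k_i$, the minimality in the definition of $k_i$ gives $\lim_{\varepsilon\to 0} (p^\varepsilon\cdot\omega^\varepsilon_i+\alpha^\varepsilon)/C^\varepsilon_k \leq 0$. The numerator is non-negative, so this limit equals $0$. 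By the sandwich above,
\[
    \lim_{\varepsilon \to 0} \frac{p^\varepsilon \cdot \omega_i}{C^\varepsilon_k} = 0 .
\]

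Next we expand the ratio. By Lemma~\ref{lemma:decomposition-of-prices},
\[
    \frac{p^\varepsilon \cdot \omega_i}{C^\varepsilon_k} = \sum_{h=1}^{M} \frac{C^\varepsilon_h}{C^\varepsilon_k}\, p^{(h)} \cdot \omega_i .
\]
We split this sum into three groups.
\begin{itemize}
    \item For $h < k$, the induction hypothesis (applied to the smaller index $h < k_i$) gives $p^{(h)} \cdot \omega_i = 0$, so these terms vanish. This avoids the difficulty that $C^\varepsilon_h / C^\varepsilon_k$ blows up.
    \item The term $h = k$ equals $p^{(k)} \cdot \omega_i$.
    \item For $h > k$, write
    \[
        \frac{C^\varepsilon_h}{C^\varepsilon_k} = \prod_{t=k}^{h-1} \frac{C^\varepsilon_{t+1}}{C^\varepsilon_t}.
    \]
    Each factor tends to $0$ by property (9) of Lemma~\ref{lemma:some-properties}. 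For $t+1 \le M$ this is legitimate because $C^\varepsilon_t > 0$ for small $\varepsilon$. Moreover, $p^{(h)} \cdot \omega_i$ is a finite real number by property (8) and because $\omega_i \in [0,1]^n$. So each of these terms tends to $0$.
\end{itemize}
Passing to the limit, the right-hand side tends to $p^{(k)} \cdot \omega_i$ while the left-hand side tends to $0$. Hence $p^{(k)}\cdot\omega_i = 0$, which completes the induction; the base case $k=1$ is the same argument with the first group empty.

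The main obstacle is the ordering of the expansion. The terms with $h<k$ carry the diverging factor $C^\varepsilon_h/C^\varepsilon_k$, so they can only be discarded via the induction hypothesis, which forces us to prove the claim for $k=1,2,\dots$ in order. The terms with $h>k$ need the chained use of property (9). A minor additional point is the subsequence bookkeeping: the limits in the definitions of $k_i$ and $p^{(h)}$ must be taken along a common subsequence.
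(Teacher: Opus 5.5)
Your proof is correct. Its skeleton matches the paper's. The paper takes the smallest $k$ with $p^{(k)} \cdot \omega_i \neq 0$, which is equivalent to your induction. It then expands the ratio via Lemma~\ref{lemma:decomposition-of-prices}, kills the terms $h<k$ by minimality and the terms $h>k$ by (8) and (9), and contradicts the definition of $k_i$.

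The real difference is how the sign of $p^{(k)} \cdot \omega_i$ is handled. The paper first invokes (10) of Lemma~\ref{lemma:some-properties} to show that $\omega_i$ puts no mass on items priced in earlier currencies. It then uses (7) to conclude $p^{(k)} \cdot \omega_i > 0$. This makes $\lim (p^\varepsilon \cdot \omega^\varepsilon_i + \alpha^\varepsilon)/C^\varepsilon_k$ strictly positive, contradicting $k < k_i$.

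You instead use the nonnegativity of the VCG prices $p^\varepsilon$ and of $\alpha^\varepsilon$ to sandwich the ratio. The paper relies on the same nonnegativity in the first inequality of its chain. The sandwich pins $\lim p^\varepsilon \cdot \omega_i / C^\varepsilon_k$ to exactly $0$, so the expansion identifies $p^{(k)} \cdot \omega_i$ as $0$ whatever its sign. This makes (7) and (10) unnecessary for this lemma and gives a somewhat more economical argument. In effect, the paper's sign argument could be replaced by the one-line observation $p^\varepsilon \cdot \omega_i \geq 0$.
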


\begin{proof}
Suppose by way of contradiction otherwise.
Let $k$ denote the smallest index for which $p^{(k)} \cdot \omega_i \not= 0$.
By (10) in Lemma~\ref{lemma:some-properties}, $\omega_{ij} = 0$ for all $j \sim_\ell j_\ell$ for some $\ell < k$.
It follows that $p^{(k)} \cdot \omega_i > 0$ as otherwise, there would exist $j \in \supp{\omega_i}$ such that $j \sim_\ell j_\ell$ for some $\ell < k$ by (7) in Lemma~\ref{lemma:some-properties}.
It follows that
\begin{align*}
    \lim_{\varepsilon \to 0} \frac{p^\varepsilon \cdot \omega^\varepsilon_i + \alpha^\varepsilon}{C^\varepsilon_k} 
        &\geq \lim_{\varepsilon \to 0} \frac{(1-\varepsilon) \cdot p^\varepsilon \cdot \omega_i}{C^\varepsilon_k} \tag{definition of $\omega^\varepsilon_i$} \\
        &= \lim_{\varepsilon \to 0} \frac{(1-\varepsilon) \sum_{\ell=1}^M C^\varepsilon_\ell \cdot p^{(\ell)} \cdot \omega_i}{C^\varepsilon_k} \tag{Lemma~\ref{lemma:decomposition-of-prices}} \\
        &= \lim_{\varepsilon \to 0} \frac{(1-\varepsilon) \sum_{\ell=k}^M C^\varepsilon_\ell \cdot p^{(\ell)} \cdot \omega_i}{C^\varepsilon_k} \tag{minimality of $k$} \\
        &\geq p^{(k)} \cdot \omega_i > 0 \tag{(8) and (9) in Lemma~\ref{lemma:some-properties}; definition of $M$}
\end{align*}
which contradicts the minimality of $k_i$.
\end{proof}

\begin{lemma}\label{lemma:endowment-in-each-currency-equals-limit-of-endowment-v2}
For all $k \leq k_i$, 
\[
    p^{(k)} \cdot \omega_i + \alpha \cdot \1(k = m) = \lim_{\varepsilon \to 0} \frac{p^\varepsilon \cdot \omega^\varepsilon_i + \alpha^\varepsilon}{C^\varepsilon_k}
\]
\end{lemma}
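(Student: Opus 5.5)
The plan is to split the numerator into the part coming from $\omega_i$ and the part coming from the perturbation and dividend, and to handle each with the price decomposition. By definition of $\omega^\varepsilon_i$,
\[
    p^\varepsilon \cdot \omega^\varepsilon_i + \alpha^\varepsilon = (1-\varepsilon)\, p^\varepsilon \cdot \omega_i + D^\varepsilon,
    \qquad D^\varepsilon \coloneqq \tfrac{\varepsilon}{n} \textstyle\sum_j p^\varepsilon_j + \alpha^\varepsilon .
\]
I will show that the first term divided by $C^\varepsilon_k$ tends to $p^{(k)} \cdot \omega_i$, and the second tends to $\alpha \cdot \1(k = m)$. Throughout, I use that $k \leq k_i \leq m \leq M$, so $C^\varepsilon_k > 0$ for all sufficiently small $\varepsilon$ by the definition of $M$, and the quotients are ordinary quotients. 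All limits are taken along the subsequence already fixed in the construction, refined further if necessary so that each limit exists in the extended reals.

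\textbf{First term.} By Lemma~\ref{lemma:decomposition-of-prices}, $p^\varepsilon \cdot \omega_i = \sum_{\ell=1}^M C^\varepsilon_\ell \, p^{(\ell)} \cdot \omega_i$. By Lemma~\ref{lemma:no-higher-currencies}, $p^{(\ell)} \cdot \omega_i = 0$ for every $\ell < k_i$, and in particular for every $\ell < k$. Hence
\[
    \frac{(1-\varepsilon)\, p^\varepsilon \cdot \omega_i}{C^\varepsilon_k} = (1-\varepsilon) \sum_{\ell = k}^{M} \frac{C^\varepsilon_\ell}{C^\varepsilon_k}\, p^{(\ell)} \cdot \omega_i .
\]
The $\ell = k$ summand is exactly $(1-\varepsilon)\, p^{(k)} \cdot \omega_i$. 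For $\ell > k$, telescoping property (9) of Lemma~\ref{lemma:some-properties} gives $C^\varepsilon_\ell / C^\varepsilon_k \to 0$. Each $p^{(\ell)} \cdot \omega_i$ is a finite real number by property (8), so these summands vanish in the limit. The first term therefore converges to $p^{(k)} \cdot \omega_i$.

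\textbf{Second term.} Since $p^\varepsilon \geq 0$ and $\alpha^\varepsilon \geq 0$, we have $D^\varepsilon \geq 0$, so every limit of $D^\varepsilon / C^\varepsilon_k$ is non-negative. There are two cases.
\begin{itemize}
    \item If $k < m$, then $k$ lies strictly below the infimum defining $m$. So the limit is not positive, and it therefore equals $0$.
    \item If $k = m$, the limit is $\alpha$ by definition.
\end{itemize}
Combining the two terms gives the claimed identity $p^{(k)} \cdot \omega_i + \alpha \cdot \1(k = m)$.

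\textbf{Main obstacle.} The delicate point is the case $k = m = M$ when the infimum defining $m$ is empty, since then $\alpha$ must be finite for the identity to be meaningful. I would first note that this case is still covered verbatim by the definition of $\alpha$ as a limit. I would then argue finiteness separately: for instance, $k_i \leq m$ together with the budget constraint of an agent with $k_i = m$ should bound $D^\varepsilon$ by a constant multiple of $C^\varepsilon_m$, and this is exactly what the subsequent uses of the lemma require. I would also double-check that the subsequence extractions are done once, up front, so that all the limits above refer to the same subsequence.
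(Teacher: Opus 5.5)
Your argument is correct and is essentially the paper's. You split off $\frac{\varepsilon}{n}\sum_j p^\varepsilon_j + \alpha^\varepsilon$, handle the endowment term with Lemma~\ref{lemma:decomposition-of-prices}, Lemma~\ref{lemma:no-higher-currencies} and property (9) of Lemma~\ref{lemma:some-properties}, and handle the remaining term with the definitions of $m$ and $\alpha$. Your ``main obstacle'' is not an obstacle, and the finiteness argument you sketch would not work anyway, since a budget constraint bounds spending by income and gives no upper bound on income. If the infimum defining $m$ is empty, then $M$ itself fails the positivity test, so $\alpha = 0$. Otherwise the paper explicitly allows $\alpha = +\infty$ (see the second case of Lemma~\ref{lemma:limiting-allocation-optimal}), and the identity still holds in the extended reals because the endowment term has a finite limit.
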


\begin{proof}
For all $k \leq k_i$,
\begin{align*}
    \lim_{\varepsilon \to 0} \frac{p^\varepsilon \cdot \omega^\varepsilon_i + \alpha^\varepsilon}{C^\varepsilon_k} 
        &= \lim_{\varepsilon \to 0} \frac{(1-\varepsilon) \sum_{\ell=1}^M C^\varepsilon_\ell \cdot p^{(\ell)} \cdot \omega_i + \frac{\varepsilon}{n} \sum_j p^\varepsilon_j + \alpha^\varepsilon}{C^\varepsilon_k} \tag{definition of $\omega_i$; Lemma~\ref{lemma:decomposition-of-prices}} \\
        &= \lim_{\varepsilon \to 0} \frac{(1-\varepsilon) \sum_{\ell=k_i}^M C^\varepsilon_\ell \cdot p^{(\ell)} \cdot \omega_i + \frac{\varepsilon}{n} \sum_j p^\varepsilon_j + \alpha^\varepsilon}{C^\varepsilon_k} \tag{Lemma~\ref{lemma:no-higher-currencies}} \\
        &= (p^{(k_i)} \cdot \omega_i + \alpha \cdot \1(k = m)) \cdot \1(k = k_i) \tag{(8) and (9) in Lemma~\ref{lemma:some-properties}; definition of $M$; definition of $\alpha$}
\end{align*}
To conclude, note that by Lemma~\ref{lemma:no-higher-currencies}, $p^{(k)} \cdot \omega_i = 0$ for all $k < k_i$.
\end{proof}

\begin{lemma}\label{lemma:limiting-allocation-feasible}
$x_i \in C^\alpha_i(p)$.
\end{lemma}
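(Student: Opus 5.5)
We need to show that the limit allocation $x_i$ satisfies $p^{(k)}\cdot x_i \le p^{(k)}\cdot\omega_i + \alpha^{(k)}_i$ for every $k\in[k_i]$, where the dividends of the limiting object are $\alpha^{(k)}_i$ (equal to $\alpha\cdot\1(k=m)$ as constructed, or to $\max\{p^{(k)}\cdot(x_i-\omega_i),0\}$ after the adjustment described in the footnote). The plan is to pass the budget inequality $p^\varepsilon\cdot x^\varepsilon_i \le p^\varepsilon\cdot\omega^\varepsilon_i+\alpha^\varepsilon$ to the limit, one currency at a time. We divide by the scales $C^\varepsilon_k$ and use the decomposition $p^\varepsilon_j=\sum_{\ell=1}^M C^\varepsilon_\ell p^{(\ell)}_j$ from Lemma~\ref{lemma:decomposition-of-prices}. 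The fact $C^\varepsilon_{\ell+1}/C^\varepsilon_\ell\to 0$ from (9) of Lemma~\ref{lemma:some-properties} makes the expansion asymptotically lexicographic.

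\textbf{Step 1: lower currencies.} By induction on $k<k_i$, we show $p^{(k)}\cdot x_i=0$ and $x_{ij}=0$ for all $j\sim_\ell j_\ell$ with $\ell\le k$. For $k<k_i$, minimality of $k_i$ together with Lemma~\ref{lemma:endowment-in-each-currency-equals-limit-of-endowment-v2} gives
\[
\lim_{\varepsilon\to0}\frac{p^\varepsilon\cdot x^\varepsilon_i}{C^\varepsilon_k}\le \lim_{\varepsilon\to0}\frac{p^\varepsilon\cdot\omega^\varepsilon_i+\alpha^\varepsilon}{C^\varepsilon_k}=0 .
\]
On the other hand, the decomposition yields
\[
\frac{p^\varepsilon\cdot x^\varepsilon_i}{C^\varepsilon_k}=\sum_{\ell<k}\frac{C^\varepsilon_\ell}{C^\varepsilon_k}\,p^{(\ell)}\cdot x^\varepsilon_i + p^{(k)}\cdot x^\varepsilon_i + o(1).
\]
The terms with $\ell<k$ are the delicate part: $C^\varepsilon_\ell/C^\varepsilon_k\to\infty$, while $p^{(\ell)}\cdot x^\varepsilon_i$ only tends to $p^{(\ell)}\cdot x_i=0$. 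I would handle this using the original nonnegativity $p^\varepsilon\ge 0$ rather than the decomposition. For items $j$ with $x_{ij}>0$ in the limit, the support of $x^\varepsilon_i$ contains $j$ for small $\varepsilon$, so $p^\varepsilon_j x^\varepsilon_{ij}\le p^\varepsilon\cdot x^\varepsilon_i$. Hence $\limsup_{\varepsilon\to0} p^\varepsilon_j/C^\varepsilon_k \le 0$ whenever $x_{ij}>0$. By the definition of $\succeq_k$ and $p^{(k)}$, this forces $p^{(\ell)}_j=0$ for $\ell\le k$, i.e.\ $j\not\sim_\ell j_\ell$, and $p^{(k)}_j\le 0$, hence $=0$ by (7). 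Therefore $p^{(k)}\cdot x_i=0=p^{(k)}\cdot\omega_i$ by Lemma~\ref{lemma:no-higher-currencies}. Since the corresponding dividends vanish, the constraint holds for every $k<k_i$.

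\textbf{Step 2: the currency $k_i$.} By Step 1, $x_i$ is supported on items $j$ with $p^{(\ell)}_j=0$ for all $\ell<k_i$. Write $p^\varepsilon_j=C^\varepsilon_{k_i}p^{(k_i)}_j+o(C^\varepsilon_{k_i})$ for such $j$; the higher-order terms are nonnegative there by (7). Using $x^\varepsilon_{ij}\ge0$, drop from $p^\varepsilon\cdot x^\varepsilon_i$ the contributions of items outside this set, which only lowers the left side. Then divide the budget inequality by $C^\varepsilon_{k_i}$ and let $\varepsilon\to0$. By Lemma~\ref{lemma:endowment-in-each-currency-equals-limit-of-endowment-v2} this gives
\[
p^{(k_i)}\cdot x_i\le p^{(k_i)}\cdot\omega_i+\alpha\cdot\1(k_i=m).
\]
If dividends are later redefined as $\max\{p^{(k)}\cdot(x_i-\omega_i),0\}$, the inequality holds trivially, and this step only confirms consistency with the $\alpha$-version. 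Finally, $x_i\in\Delta^n$ as a limit of the $x^\varepsilon_i$.

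The main obstacle is Step 1: controlling the blow-up factor $C^\varepsilon_\ell/C^\varepsilon_k$ for $\ell<k$. It must be handled itemwise through positivity of $p^\varepsilon$ and the support of $x_i$, not through the matrix decomposition, so that signs from possibly negative later coordinates $p^{(k)}_j$ cannot interfere.
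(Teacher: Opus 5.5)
Your proof is correct and follows the paper's argument. The paper likewise divides the budget inequality by $C^\varepsilon_k$, discards items priced in an earlier currency using $p^\varepsilon\ge 0$ and $x^\varepsilon_{ij}\ge 0$, and expands the remaining items via Lemma~\ref{lemma:decomposition-of-prices} together with (8)--(9) of Lemma~\ref{lemma:some-properties}. The only real difference is that you get $x_{ij}=0$ on items priced before currency $k_i$ from the direct itemwise bound $p^\varepsilon_j x^\varepsilon_{ij}\le p^\varepsilon\cdot x^\varepsilon_i=o(C^\varepsilon_k)$, whereas the paper takes a minimal violating currency and invokes property (10). One small slip: in Step 2, (7) only makes the leading coefficient $p^{(k_i)}_j$ nonnegative on those items, not the higher-order terms, but the argument does not need this because those terms are $o(C^\varepsilon_{k_i})$ whatever their sign.
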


\begin{proof}
Suppose by way of contradiction that there exists $k \leq k_i$ such that $p^{(k)} \cdot x_i > p^{(k)} \cdot \omega_i + \alpha \cdot \1(k = m)$.
Let $k$ denote the smallest such index, and let $\gamma > 0$ denote the gap.
Note that the minimality of $k$ implies that $p^{(\ell)} \cdot x_i \leq p^{(\ell)} \cdot \omega_i = 0$ (where the last equality follows from Lemma~\ref{lemma:no-higher-currencies}).
By (10) in Lemma~\ref{lemma:some-properties}, $x_{ij} = 0$ for all $j \sim_\ell j_\ell$ for some $\ell < k$.
It follows that
\begin{align*}
    p^{(k)} \cdot \omega_i + \alpha \cdot \1(k = m)
        &= \lim_{\varepsilon \to 0} \frac{p^\varepsilon \cdot \omega_i^\varepsilon + \alpha^\varepsilon}{C^\varepsilon_k} \tag{Lemma~\ref{lemma:endowment-in-each-currency-equals-limit-of-endowment-v2}} \\ 
        &\geq \lim_{\varepsilon \to 0} \frac{p^\varepsilon \cdot x_i^\varepsilon}{C^\varepsilon_k} \tag{feasibility of $x^\varepsilon_i$} \\
        &= \lim_{\varepsilon \to 0} \frac{\sum_{j : \exists\, \ell < k : j \sim_\ell j_\ell} p^{\varepsilon}_j \cdot x^\varepsilon_{ij} + \sum_{h=1}^M C^\varepsilon_h \cdot \sum_{j : \forall\, \ell < k : j \prec_\ell j_\ell} p^{(h)}_j \cdot x^\varepsilon_{ij}}{C^\varepsilon_k} \tag{Lemma~\ref{lemma:decomposition-of-prices}} \\
        &\geq \lim_{\varepsilon \to 0} \frac{\sum_{h=k}^M C^\varepsilon_h \cdot \sum_{j : \forall\, \ell < k : j \prec_\ell j_\ell} p^{(h)}_j \cdot x^\varepsilon_{ij}}{C^\varepsilon_k} \tag{(6) in Lemma~\ref{lemma:some-properties}} \\
        &= \lim_{\varepsilon \to 0} \textstyle \sum_{j : \forall\, \ell < k : j \prec_\ell j_\ell} p^{(k)}_j \cdot x^\varepsilon_{ij} \tag{(8) and (9) in Lemma~\ref{lemma:some-properties}; definition of $M$} \\
        &\geq \textstyle \sum_{j : \forall\, \ell < k : j \prec_\ell j_\ell} p^{(k)}_j \cdot x_{ij} \\
        &= p^{(k)} \cdot x_i \tag{$x_{ij} = 0$ for all $j \sim_\ell j_\ell$ for some $\ell < k$}
\end{align*}
Note that this conclusion contradicts the definition of $k$.
\end{proof}

\begin{lemma}\label{lemma:limiting-allocation-optimal}
$x_i \in \arg\max \{u_i \cdot y_i : y_i \in C^\alpha_i(p)\}$.
\end{lemma}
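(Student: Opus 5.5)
The argument is by contradiction, mirroring the limiting argument of Lemma~\ref{lemma:optimal-consumption-continuous-1D-LDE}. By Lemma~\ref{lemma:limiting-allocation-feasible}, $x_i \in C^\alpha_i(p)$. Suppose some $y_i \in C^\alpha_i(p)$ has $u_i \cdot y_i > u_i \cdot x_i$, and let $\zeta$ be the gap. We will build, for all sufficiently small $\varepsilon$, a bundle $y^\varepsilon_i$ that is affordable in the $\varepsilon$-economy, i.e.\ $p^\varepsilon \cdot y^\varepsilon_i \leq p^\varepsilon \cdot \omega^\varepsilon_i + \alpha^\varepsilon$, and satisfies $u_i \cdot y^\varepsilon_i \geq u_i \cdot y_i - \zeta/2$. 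Since $x^\varepsilon_i$ is optimal in the $\varepsilon$-economy and $x^\varepsilon_i \to x_i$, this gives a contradiction.

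\textbf{Structure of the budget set.} Since $y_i \in C^\alpha_i(p)$, we have $p^{(k)} \cdot y_i \leq p^{(k)} \cdot \omega_i + \alpha\cdot\1(k=m)$ for all $k \leq k_i$. For $k < k_i$, Lemma~\ref{lemma:no-higher-currencies} makes the right-hand side $0$, using $k_i \leq m$. By (10) in Lemma~\ref{lemma:some-properties}, $y_{ij} = 0$ for every $j \sim_\ell j_\ell$ with $\ell < k_i$. Hence only items whose first nonzero currency is $\geq k_i$ are in $\supp{y_i}$, and on those items Lemma~\ref{lemma:decomposition-of-prices} gives
\[
p^\varepsilon_j = \sum_{h \geq k_i} C^\varepsilon_h p^{(h)}_j .
\]
Dividing by $C^\varepsilon_{k_i}$ and using (8) and (9) of Lemma~\ref{lemma:some-properties}, we get $p^\varepsilon \cdot y_i / C^\varepsilon_{k_i} \to p^{(k_i)} \cdot y_i$. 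Meanwhile, by Lemma~\ref{lemma:endowment-in-each-currency-equals-limit-of-endowment-v2}, $(p^\varepsilon \cdot \omega^\varepsilon_i + \alpha^\varepsilon)/C^\varepsilon_{k_i}$ tends to $p^{(k_i)}\cdot\omega_i + \alpha\1(k_i = m)$, which is strictly positive by the definition of $k_i$ (including the edge case $k_i = M$, handled separately below).

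\textbf{The shrinking trick.} The budget inequality in currency $k_i$ may be tight, so convergence of normalized prices alone does not give affordability. Set $y'_i \coloneqq (1-t) y_i$ for a small $t>0$ with $t \max_j u_{ij} \leq \zeta/2$. Then $p^{(k_i)} \cdot y'_i \leq (1-t)\,(p^{(k_i)}\cdot\omega_i + \alpha\1(k_i=m))$, which is strictly below the limit of the normalized wealth because that limit is positive. Two sub-issues need care:
\begin{itemize}
\item Since $p^{(k_i)}_j$ is nonnegative on the relevant items by (7) of Lemma~\ref{lemma:some-properties}, scaling down cannot increase the cost.
\item The normalized $\varepsilon$-price converges, so for small $\varepsilon$ we get $p^\varepsilon \cdot y'_i < p^\varepsilon\cdot\omega^\varepsilon_i + \alpha^\varepsilon$.
\end{itemize}
This yields the contradiction with the optimality of $x^\varepsilon_i$ once $\varepsilon$ is small enough that $u_i \cdot x^\varepsilon_i < u_i\cdot x_i + \zeta/2$.

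\textbf{Main obstacle.} The hardest part is the degenerate case where agent $i$ has zero wealth in every currency up to $M$, so that $k_i = M$ with a zero limit. Then $C^\alpha_i(p)$ forces $p^{(h)}\cdot y_i \leq 0$ for all $h \leq M$, and by (10) of Lemma~\ref{lemma:some-properties} together with Lemma~\ref{lemma:decomposition-of-prices}, $y_i$ is supported on items with $p^\varepsilon_j = 0$. Such a $y_i$ is free in the $\varepsilon$-economy and hence directly affordable, and the same contradiction follows. A second delicacy is $m$ versus $k_i$: the dividend $\alpha$ enters only at currency $m$, and we must ensure that $\alpha^\varepsilon/C^\varepsilon_k \to 0$ for $k < m$, which holds by the definition of $m$. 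I would verify both cases carefully, reusing the bookkeeping from the proof of Lemma~\ref{lemma:limiting-allocation-feasible}.
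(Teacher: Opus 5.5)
Your proposal is correct and follows essentially the same route as the paper. The paper likewise scales the improving bundle by $(1-\gamma)$. It then uses Lemma~\ref{lemma:decomposition-of-prices} and (9) of Lemma~\ref{lemma:some-properties} to compare its $\varepsilon$-price with the $\varepsilon$-wealth at currency $k_i$, and it treats the zero-wealth case $k_i = m = M$ separately, where the bundle is free.

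The only point to make explicit is the case $k_i = m$ with $\alpha = +\infty$, which the paper handles as its own case. There your strict comparison with $(1-t)$ times the limiting wealth is not literally meaningful. Affordability is still immediate, because the normalized price of $y'_i$ stays bounded while the normalized wealth diverges.
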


\begin{proof}
Membership follows from Lemma~\ref{lemma:limiting-allocation-feasible}.
To prove optimality, suppose by way of contradiction that there exists $x'_i \in C^\alpha_i(p)$ such that $u_i \cdot x'_i > u_i \cdot x_i$.
Note that membership in $C^\alpha_i(p)$ implies that $p^{(k)} \cdot x'_i \leq p^{(k)} \cdot \omega_i = 0$ for all $k < k_i$ where the equality follows from Lemma~\ref{lemma:no-higher-currencies}.
Consequently, by (10) in Lemma~\ref{lemma:some-properties}, we have that $x'_{ij} = 0$ for all $j \sim_k j_k$ for some $k < k_i$, so $p^{(k)} \cdot x'_i = 0$ for all $k < k_i$.
Now, define $\gamma > 0$ so that $(1 - \gamma) \cdot u_i \cdot x'_i > u_i \cdot x_i$.
We show that for sufficiently small $(1 - \gamma) \cdot p^\varepsilon \cdot x_i' \leq p^\varepsilon \cdot \omega_i^\varepsilon + \alpha^\varepsilon$.
Note that this inequality suffices for a contradiction to the optimality of $x^\varepsilon_i$ since for sufficiently small $\varepsilon$, $(1 - \gamma) \cdot u_i \cdot x_i' > u_i \cdot x^\varepsilon_i$.

We break the proof into three cases.
In the first case, either $p^{(k_i)} \cdot \omega_i > 0$ or $\alpha \in (0, +\infty)$.
Note that these conditions imply that $p^{(k_i)} \cdot \omega_i + \alpha \cdot \1(k_i = m) > 0$: if $p^{(k_i)} \cdot \omega_i = 0$, then $k_i = m$ by Lemma~\ref{lemma:endowment-in-each-currency-equals-limit-of-endowment-v2} and the definition of $k_i$, so $p^{(k_i)} \cdot \omega_i + \alpha \cdot \1(k_i = m) = \alpha > 0$.
Now, since $p^{(k_i)} \cdot \omega_i + \alpha \cdot \1(k_i = m) > 0$, (9) in Lemma~\ref{lemma:some-properties} and the definition of $M$ imply that
\begin{equation}\label{eq:exchange-rates}
    C^\varepsilon_k / C^\varepsilon_{k_i} \leq \frac{\gamma \cdot (p^{(k_i)} \cdot \omega_i + \alpha \cdot \1(k_i = m))}{3M \cdot (1 + \max_k | p^{(k)} \cdot x'_i | + \max_k | p^{(k)} \cdot \omega_i |)}
\end{equation}
for all $k \in \{k_i + 1, \cdots, M\}$ and sufficiently small $\varepsilon$.
For all such $\varepsilon$, we also have that
\begin{equation}\label{eq:surplus}
    \abs{\alpha - \frac{\frac{\varepsilon}{n} \sum_j p^\varepsilon_j + \alpha^\varepsilon}{C^\varepsilon_m}} \leq \gamma \cdot \alpha / 3
\end{equation}
by the definition of $\alpha$.
Thus,
\begin{align*}
    (1 - \gamma) \cdot p^\varepsilon \cdot x'_i
        &= \textstyle (1 - \gamma) \sum_{k=1}^M C^\varepsilon_k \cdot p^{(k)} \cdot x'_i \tag{Lemma~\ref{lemma:decomposition-of-prices}} \\
        &= \textstyle (1 - \gamma) \sum_{k=k_i}^M C^\varepsilon_k \cdot p^{(k)} \cdot x'_i \tag{$p^{(k)} \cdot x'_i = 0$ for all $k < k_i$} \\
        &\leq \textstyle (1 - \gamma) \cdot C^\varepsilon_{k_i} \cdot (p^{(k_i)} \cdot \omega_i + \alpha \cdot \1(k_i = m)) + \sum_{k=k_i+1}^M C^\varepsilon_k \cdot | p^{(k)} \cdot x'_i | \tag{$x'_i \in C^\alpha_i(p)$} \\
        &\leq \textstyle (1 - \gamma / 3) \cdot \sum_{k=k_i}^M C^\varepsilon_{k} \cdot p^{(k)} \cdot \omega_i + (1 - \gamma / 3) \cdot C^\varepsilon_{k_i} \cdot \alpha \cdot \1(k_i = m) \tag{Equation~\ref{eq:exchange-rates}} \\
        &= \textstyle (1 - \gamma / 3) \cdot \sum_{k=1}^M C^\varepsilon_{k} \cdot p^{(k)} \cdot \omega_i + (1 - \gamma / 3) \cdot C^\varepsilon_{k_i} \cdot \alpha \cdot \1(k_i = m) \tag{Lemma~\ref{lemma:no-higher-currencies}} \\
        &= \textstyle (1 - \gamma / 3) \cdot p^\varepsilon \cdot \omega_i + (1 - \gamma / 3) \cdot C^\varepsilon_{k_i} \cdot \alpha \cdot \1(k_i = m) \tag{Lemma~\ref{lemma:decomposition-of-prices}} \\
        &\leq \textstyle (1 - \gamma / 3) \cdot p^\varepsilon \cdot \omega_i + \frac{\varepsilon}{n} \sum_j p^\varepsilon_j + \alpha^\varepsilon \tag{Equation~\ref{eq:surplus}} \\
        &\leq p^\varepsilon \cdot \omega^\varepsilon_i + \alpha^\varepsilon \tag{for sufficiently small $\varepsilon$}
\end{align*}

In the second case, $p^{(k_i)} \cdot \omega_i = 0$ and $\alpha = +\infty$.
It follows from Lemma~\ref{lemma:endowment-in-each-currency-equals-limit-of-endowment-v2} and the definition of $k_i$ that $k_i = m$. 
Thus, $\alpha = +\infty$ implies that
\[
    \textstyle \frac{\varepsilon}{n} \sum_j p^\varepsilon_j + \alpha^\varepsilon > C^\varepsilon_{k_i} \cdot \sum_{k=k_i}^M |p^{(k)} \cdot x'_i|
\]
for all sufficiently small $\varepsilon$.
By definition of $M$, $C^\varepsilon_{k_i} > 0$ for all such $\varepsilon$ as well, so
\begin{align*}
    p^\varepsilon \cdot x'_i
        &= \textstyle \sum_{k=1}^M C^\varepsilon_k \cdot p^{(k)} \cdot x'_i \tag{Lemma~\ref{lemma:decomposition-of-prices}} \\
        &= \textstyle \sum_{k=k_i}^M C^\varepsilon_k \cdot p^{(k)} \cdot x'_i \tag{$p^{(k)} \cdot x'_i = 0$ for all $k < k_i$} \\ 
        &\leq \textstyle C^\varepsilon_{k_i} \cdot \sum_{k=k_i}^M |p^{(k)} \cdot x'_i| \tag{(9) in Lemma~\ref{lemma:some-properties}; definition of $M$} \\
        &< p^\varepsilon \cdot \omega_i^\varepsilon + \alpha^\varepsilon
\end{align*}

In the last case, $p^{(k_i)} \cdot \omega_i = \alpha = 0$.
It follows that $k_i = m = M$ where the first equality follows from Lemma~\ref{lemma:endowment-in-each-currency-equals-limit-of-endowment-v2} and the definition of $k_i$ while the latter equality follows from the definition of $m$.
Since $x'_i \in C^\alpha_i(p)$, we know that $p^{(k)} \cdot x_i \leq p^{(k)} \cdot \omega_i + \alpha \cdot \1(k = m) = 0$.
Thus,
\[
    p^\varepsilon \cdot x'_i
        = \textstyle \sum_{k=1}^M C^\varepsilon_k \cdot p^{(k)} \cdot x'_i 
        \leq 0 
        \leq p^\varepsilon \cdot \omega_i^\varepsilon + \alpha^\varepsilon
\]
\end{proof}

\begin{lemma}\label{lemma:strong-cheapest-bundle}
There does not exist $x_i' \in C^\alpha_i(p)$ such that $u_i \cdot x_i' \geq u_i \cdot x_i$ yet there exists $k$ such that $p^{(\ell)} \cdot x_i' \leq p^{(\ell)} \cdot x_i$ for all $\ell < k$ and $p^{(k)} \cdot x_i' < p^{(k)} \cdot x_i$.
\end{lemma}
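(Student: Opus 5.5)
Suppose, for contradiction, that some $x'_i \in C^\alpha_i(p)$ has $u_i \cdot x'_i \geq u_i \cdot x_i$, with $p^{(\ell)} \cdot x'_i \leq p^{(\ell)} \cdot x_i$ for all $\ell < k$ and $p^{(k)} \cdot x'_i < p^{(k)} \cdot x_i$. The plan is to transport this deviation back to the $\varepsilon$-economies and contradict property 3 of $(x^\varepsilon, p^\varepsilon)$: $x^\varepsilon_i$ minimizes $p^\varepsilon \cdot y_i$ over $y_i \in \Delta^n_-$ with $u_i \cdot y_i \geq u_i \cdot x^\varepsilon_i$. This property comes from the VCG lemma, which shows that a weakly preferred bundle costs at least as much.

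The natural candidate bundle in the $\varepsilon$-economy is $y^\varepsilon_i \coloneqq x^\varepsilon_i + t(x'_i - x_i)$ for a small fixed $t \in (0,1]$. This is exactly the linearity device mentioned in the Discussion.

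\textbf{Step 1: $y^\varepsilon_i$ is feasible and weakly preferred.} By linearity, $u_i \cdot y^\varepsilon_i = u_i \cdot x^\varepsilon_i + t\,(u_i \cdot x'_i - u_i \cdot x_i) \geq u_i \cdot x^\varepsilon_i$. For $y^\varepsilon_i \in \Delta^n_-$ we need two things.
\begin{itemize}
\item Nonnegativity: $y^\varepsilon_{ij} \geq 0$ requires $x^\varepsilon_{ij} \geq t\, x_{ij}$ whenever $x'_{ij} < x_{ij}$. Since $x^\varepsilon_{ij} \to x_{ij}$, this holds for small $\varepsilon$ whenever $x_{ij} > 0$ (say with $t = 1/2$). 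When $x_{ij} = 0$ we have $x'_{ij} \geq x_{ij}$, so no constraint arises.
\item Total mass: the sum of $y^\varepsilon_i$ is at most $1$ because $x'_i, x_i \in \Delta^n_-$ and $\sum_j x^\varepsilon_{ij} \to \sum_j x_{ij}$. If $x_i$ has full mass and $x'_i$ less, there is slack; otherwise we may pad $x'_i$ up to the mass of $x_i$ using a zero-price item. If that is awkward, I would instead scale, using $t(x'_i - x_i)$ with the mass defect absorbed by taking $t$ small.
\end{itemize}

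\textbf{Step 2: $y^\varepsilon_i$ is strictly cheaper.} By Lemma~\ref{lemma:decomposition-of-prices},
\[
p^\varepsilon \cdot (y^\varepsilon_i - x^\varepsilon_i) = t \sum_{h=1}^M C^\varepsilon_h \, p^{(h)} \cdot (x'_i - x_i).
\]
Let $k_0$ be the first index with $p^{(k_0)} \cdot (x'_i - x_i) \neq 0$. By hypothesis $k_0 \leq k$ and the value there is negative; if the hypothesis gives strict inequality at an earlier $\ell < k$, that $\ell$ is $k_0$. If $k_0 > M$ the argument adjusts trivially, since those rows vanish. By (9) in Lemma~\ref{lemma:some-properties}, $C^\varepsilon_h / C^\varepsilon_{k_0} \to 0$ for $h > k_0$. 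Hence, for small $\varepsilon$, the sum is dominated by $C^\varepsilon_{k_0}\, p^{(k_0)} \cdot (x'_i - x_i) < 0$ with $C^\varepsilon_{k_0} > 0$. So $p^\varepsilon \cdot y^\varepsilon_i < p^\varepsilon \cdot x^\varepsilon_i$ while $u_i \cdot y^\varepsilon_i \geq u_i \cdot x^\varepsilon_i$, contradicting property 3.

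\textbf{Main obstacle.} The delicate point is Step 1's feasibility. We must choose $t$ uniformly in $\varepsilon$, keep nonnegativity for coordinates where $x_{ij} > 0$ but $x^\varepsilon_{ij}$ converges to it from below, and handle the total-mass constraint of $\Delta^n_-$. The first two are handled by $t = 1/2$ and $\varepsilon$ small. For the mass issue, I would note that $\sum_j (x'_{ij} - x_{ij}) \leq 1 - \sum_j x_{ij}$, and that $\sum_j x^\varepsilon_{ij} = 1$ under $\mathcal{M}$ forces $\sum_j x_{ij} = 1$. The perturbation therefore has nonpositive mass, so feasibility holds.

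Membership $x'_i \in C^\alpha_i(p)$ is not needed beyond providing a valid bundle in $\Delta^n_-$. This is why the statement gives the strong property even for currencies $k > k_i$.
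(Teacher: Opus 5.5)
Your proposal is correct and follows essentially the same route as the paper. The paper also perturbs to $x^\varepsilon_i - (x_i - x'_i)/2$ (your $t = 1/2$), gets feasibility from $x_i/2 \le x^\varepsilon_i$ for small $\varepsilon$ together with $\sum_j x_{ij} = 1 \ge \sum_j x'_{ij}$, and uses Lemma~\ref{lemma:decomposition-of-prices} with (9) of Lemma~\ref{lemma:some-properties} to get $p^\varepsilon \cdot x'_i < p^\varepsilon \cdot x_i$, contradicting that $x^\varepsilon_i$ is a cheapest bundle at its utility level; the only cosmetic differences are that you isolate the first index $k_0$ with $p^{(k_0)} \cdot (x'_i - x_i) \neq 0$ where the paper works with $k$ directly, and that your hedges about padding or shrinking $t$ are unnecessary because your observation $\sum_j x_{ij} = 1$ already settles the mass constraint.
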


\begin{proof}
Suppose by way of contradiction that such an $x'_i$ exists.
Let $\gamma = p^{(k)} \cdot (x_i - x'_i)$.
(9) in Lemma~\ref{lemma:some-properties} and the definition of $M$ imply that
\[
    C^\varepsilon_\ell / C^\varepsilon_{k} \leq \frac{\gamma}{3M \cdot (1 + \max_\ell | p^{(\ell)} \cdot x'_i | + \max_\ell | p^{(\ell)} \cdot x_i |)}
\]
for all $\ell > k$ and sufficiently small $\varepsilon$.
It follows that
\begin{align*}
    p^\varepsilon \cdot x'_i 
        &= \textstyle \sum_{\ell=1}^M C^\varepsilon_\ell \cdot p^{(\ell)} \cdot x'_i \tag{Lemma~\ref{lemma:decomposition-of-prices}} \\
        &\leq \textstyle \sum_{\ell=1}^k C^\varepsilon_\ell \cdot p^{(\ell)} \cdot x_i - C^\varepsilon_k \cdot \gamma + \sum_{\ell=k+1}^M C^\varepsilon_\ell \cdot p^{(\ell)} \cdot x'_i \tag{definition of $\gamma$} \\
        &\leq \textstyle \sum_{\ell=1}^M C^\varepsilon_\ell \cdot p^{(\ell)} \cdot x_i - C^\varepsilon_k \cdot \gamma / 3 \\
        &< p^\varepsilon \cdot x_i
\end{align*}
for all sufficiently small $\varepsilon$.
Now, note that for all such $\varepsilon$, we also have that $x_i / 2 \leq x_i^\varepsilon$.
This is because eventually, $\lvert x_{ij} - x^\varepsilon_{ij}\rvert < \min_{j : x_{ij} > 0} x_{ij} / 2$ for all items $j$.
But consider $y_i \coloneqq x^\varepsilon_i - (x_i - x'_i)/2$ (which is a feasible allocation since $\sum_j x_{ij} = 1$ while $\sum_j x'_{ij} \leq 1$): since $u_i \cdot x_i \leq u_i \cdot x'_i$, we have that $u_i \cdot y_i \geq u_i \cdot x_i^\varepsilon$, yet since $p^\varepsilon \cdot x_i > p^\varepsilon \cdot x'_i$, we also have that $p^\varepsilon \cdot y_i < p^\varepsilon \cdot x_i^\varepsilon$, contradicting the fact that $x^\varepsilon_i$ is the cheapest lottery with utility at least $u_i \cdot x^\varepsilon_i$.
\end{proof}

\section{Core Convergence: Proof of Theorem~\ref{theorem:weak-LDE-contains-rejective-core}}

In this section, we formally verify that the set of LDEs with simple prices satisfying the weak and aggregate cheapest bundle properties contains the limit of the rejective core as the economy grows.

\weakLDEcontainsRC*

\begin{proof}
Let $x \in \bigcap_{N=1}^\infty RC(u, \omega, N)$, and consider the following procedure for constructing $p$ and $\alpha$ recursively.
\begin{enumerate}
    \item $S \gets [m]$, $T \gets \varnothing$, $k \gets 0$.
    \item If $x$ satiates each agent, then $p \gets 0$, $\alpha \gets 0$, and $S \gets \varnothing$.
    \item While $S \not= \varnothing$,
    \begin{enumerate}
        \item $k \gets k+1$.
        \item For each agent $i \in T$, let $Q_i(S) \coloneqq \{z_i \in \RR^n : z_i + x_i \succeq_i x_i\}$
        \item Let $Q_T(S)$ denote the intersection of $\{z \in \RR^n : z_{j} \leq 0 \,\forall\, j \not\in S\}$ and the convex hull of $\cup_{i \in T} Q_i(S)$.
        \item For each agent $i \not\in T$, let $P_i(S) \coloneqq \{z_i \in \RR^n : z_i + \omega_i \succ_i x_i \wedge z_{ij} = 0 \,\forall\, j \not\in S\}$ and $Q_i(S) \coloneqq \{z_i \in \RR^n : z_i + x_i \succeq_i x_i \wedge z_{ij} = 0 \,\forall\, j \not\in S\}$.
        \item Let $P(S)$ denote the convex hull of $\cup_{i \not\in T} (P_i(S) \cup Q_i(S)) \cup Q_T(S)$.
        \item Find $p^{(k)} \in \RR^{n} \setminus \{0\}$ so that $p^{(k)} \cdot z \geq 0$ for all $z \in P(S)$ and $p^{(k)} \cdot z \leq 0$ for all $z \in V(S)$ where $V(S) \coloneqq \{z \in \RR^n : z_{j} < 0 \,\forall\, j \in S\}$.
        That is, $\{z \in \RR^n : p^{(k)} \cdot z = 0\}$ is a hyperplane that separates $P(S)$ from $V(S)$. We will prove that such a hyperplane exists.
        \item For each agent $i$, $\alpha^{(k)}_i \gets \max\{p^{(k)} \cdot (x_i - \omega_i), 0\}$.
        \item $S \gets S \setminus \{j \in [n]: p^{(k)}_j > 0\}$.
        \item $T \gets T \cup \{i \in [n] : p^{(k)} \cdot \omega_i + \alpha^{(k)}_i > 0\}$.
        \item If $y_i \preceq_i x_i$ for all $i \not\in T$ and $y_i$ such that $p^{(\ell)} \cdot y_i = 0$ for all $\ell \in [k]$, that is, if agents with no income by the end of this iteration receive their favorite free bundle, then $S \gets \varnothing$.
    \end{enumerate}
\end{enumerate}
We prove by induction on $k$ that $(x, p, \alpha)$ constitutes an LDE with simple prices that satisfies the weak and the aggregate cheapest bundle properties.
Let $S_k$ and $T_k$ denote $S$ and $T$, respectively, at the beginning of the $k$-th iteration.
By \ref{item:IH-non-negative-prices},
\begin{align*}
    S_k 
        &= \{j \in [n] : p^{(\ell)}_j = 0 \,\forall\, \ell \in [k-1]\} \\
    T_k 
        &= \{i \in [n] :  \exists\,\ell \in [k-1] : p^{(\ell)} \cdot \omega_i + \alpha^{(\ell)}_i > 0\}
\end{align*}
In words, $S_k$ is the set of free items at the beginning of the $k$-th iteration, while $T_k$ is the set of agents with positive income by the beginning of the $k$-th iteration.
Our inductive hypothesis is the following:
\begin{enumerate}[label=\text{IH.\arabic*}]
    \item\label{item:IH-non-negative-prices} $p^{(\ell)} \in \RR^n_+$ for all $\ell \in [k]$.
    \item\label{item:IH-simple-prices} For all items $j$, $p^{(\ell)}_j \not= 0$ for some $\ell \in [k]$ implies $p^{(h)}_j = 0$ for all $h \in [k] \setminus \{\ell\}$.
    \item\label{item:IH-optimal-consumption} For all $i \in T_{k+1}$, there does not exist $y_i \in C^\alpha_i(p)$ such that $y_i \succ_i x_i$. 
    Note that $C^\alpha_i(p)$ is well-defined as soon as an agent has positive income.
    \item\label{item:IH-weak-cheapest-bundle} For all $i \in T_{k+1}$, there does not exist $y_i \succeq_i x_i$, and $k \in [k_i]$ such that $p^{(\ell)} \cdot y_i = p^{(\ell)} \cdot x_i$ for all $\ell \in [k-1]$ and $p^{(k)} \cdot y_i < p^{(k)} \cdot x_i$.
    Note that $k_i$ is well-defined as soon as an agent has positive income.
    \item\label{item:IH-agg-cheapest-bundle} There does not exist $\beta \in \RR^n_+$, $y \in X$, and $\ell \in [k]$ such that 
    \begin{itemize}
        \item $y_i \succeq_i x_i$ for all $i \in \supp{\beta}$
        \item $p^{(h)} \cdot \sum_i \beta_i y_i = p^{(h)} \cdot \sum_i \beta_i x_i$ for all $h \in [\ell-1]$ and $p^{(\ell)} \cdot \sum_i \beta_i  y_i < p^{(\ell)} \cdot \sum_i \beta_i x_i$
        \item $\sum_i \beta_i y_{ij} \leq \sum_i \beta_i x_{ij}$ for all $j \not\in S_{\ell}$.
    \end{itemize}
    \item\label{item:IH-progress} If $S_k \not= \varnothing$, then there exists $j \in S_k$ such that $p^{(k)}_j > 0$.
\end{enumerate}
Note that \ref{item:IH-progress} implies that our procedure terminates in at most $n$ iterations.

It is clear that our inductive hypothesis holds for $k = 0$ (define $S_0 \coloneqq \varnothing$).
We now show that if it holds for $k$, then it holds for $k+1$.
Note that if $S_{k+1} = \varnothing$, then our inductive hypothesis implies that $(x, p, \alpha)$ constitutes an LDE: any agent $i \in T_{k+1}$ weakly prefers $x_i$ over her other affordable bundles by \ref{item:IH-optimal-consumption}, while any agent $i \not\in T_{k+1}$ must be receiving her favorite free bundle (otherwise, the set of free items after the $k$-th iteration $S_{k+1} \not= \varnothing$).
Meanwhile, \ref{item:IH-non-negative-prices} implies that the first non-zero entry in $p_j$ is positive for all $j$, and \ref{item:IH-simple-prices} states that the prices are simple.
\ref{item:IH-weak-cheapest-bundle} and \ref{item:IH-agg-cheapest-bundle} guarantee the weak and aggregate cheapest bundle properties, respectively.
We carry out the inductive step under the assumption that $S_{k+1} \not= \varnothing$. 

We begin by showing that $P(S_{k+1}) \cap V(S_{k+1}) = \varnothing$.
Let $U$ denote the set of agents for which there exists $y_i \succ_i x_i$ such that 
\[
    \begin{cases}
        y_i \in C^\alpha_i(p) & i \in T_{k+1} \\
        p^{(\ell)} \cdot y_i = 0 \,\forall\, \ell \in [k] & i \not\in T_{k+1}
    \end{cases}
\]
Note that by definition, $S_{k+1} \not= \varnothing$ implies that $U \not= \varnothing$.
Moreover, by \ref{item:IH-optimal-consumption}, $U$ is precisely the set of agents $i \not\in T_{k+1}$ for which there exists $y_i \succ_i x_i$ such that $p^{(\ell)} \cdot y_i = 0$ for all $\ell \in [k]$.
It follows that $P_i(S_{k+1}) \not= \varnothing$ iff $i \in U$.
Now, suppose $P(S_{k+1}) \cap V(S_{k+1}) \not= \varnothing$.
In particular, since $\succeq_i$ is convex for all agents $i$, there exists $(\lambda, \mu) \in \Delta^{\abs{U} + n}$, $z_i \in P_i(S_{k+1})$ for all $i \in U$, and $z'_i \in Q_i(S_{k+1})$ for all $i \in [n]$ such that 
\begin{align*}
    & z_{ij} = 0 \,\forall\, i \in U, j \not\in S_{k+1} \\
    & z'_{ij} = 0 \,\forall\, i \not\in T_{k+1}, j \not\in S_{k+1} \\
    & \sum_{i \in T} \mu_i z'_{ij} \leq 0 \,\forall\, j \not\in S_{k+1} \\
    & \sum_{i \in U} \lambda_i z_i + \sum_i \mu_i z'_i \in V(S_{k+1}) 
\end{align*}
by definition.
We assume without loss of generality that $(\lambda, \mu) \in \Delta^{\abs{U} + n} \cap \QQ^{\abs{U} + n}$ and that $\lambda_i > 0$ for some $i \in U$ since a tiny shift in the convex coefficients would not cause the convex combination to leave $V(S_{k+1})$ (which is an open set).
Let $N \in \NN$ be such that $N \cdot (\lambda, \mu) \in (\NN \cup \{0\})^{\abs{U} + n}$, and consider the coalition $C$ with $\lambda_i N$ copies of agent $i \in U$ each endowed with $\omega_i$ and the coalition $C'$ with $\mu_i N$ copies of agent $i$ each endowed with $x_i$.
For concision, let $y_i \coloneqq z_i + \omega_i$ for all $i \in U$ and $y'_i \coloneqq z'_i + x_i$ for all $i$.
By definition of $P(S_{k+1})$ and $V(S_{k+1})$, it follows that
\begin{align*}
    \sum_{i \in C} y_i + \sum_{i \in C'} y'_i &\leq \sum_{i \in C} \omega_i + \sum_{i \in C'} x_i \,\forall\, j \not\in S_{k+1} \\
    \sum_{i \in C} y_i + \sum_{i \in C'} y'_i &< \sum_{i \in C} \omega_i + \sum_{i \in C'} x_i \,\forall\, j \in S_{k+1}.
\end{align*}
That is, $((y_i)_{i \in C}, (y'_i)_{i \in C'})$ is a redistribution of $((\omega_i)_{i \in C}, (x_i)_{i \in C'})$.
Thus, $C$, $C'$, and $(y, y')$ constitute a rejecting coalition since $y_i \succ_i x_i$ for all $i \in U$ (since $z_i \in P_i(S_{k+1})$), $y'_i \succeq_i x_i$ for all $i \in [n]$ (since $z'_i \in Q_i(S_{k+1})$), and $\lambda_i > 0$ for some $i \in U$ (so $C \not= \varnothing$).
However, the existence of a rejecting coalition contradicts the fact $x$ lies in the rejective core.

At this point, we have shown that $P(S_{k+1}) \cap V(S_{k+1}) = \varnothing$.
Since both sets are non-empty and convex, there exists $p^{(k+1)} \in \RR^n \setminus \{0\}$ and $c \in \RR$ such that
\begin{align*}
    & p^{(k+1)} \cdot z \geq c \,\forall\, z \in P(S_{k+1}) \\
    & p^{(k+1)} \cdot z \leq c \,\forall\, z \in V(S_{k+1}).
\end{align*}
Note that $c = 0$ since $0 \in P(S_{k+1})$ (so $c \leq 0$) and $-\varepsilon \cdot \1 \in V(S_{k+1})$ for all $\varepsilon > 0$ (so $c \geq 0$).
Moreover, $p^{(k)}_j \geq 0$ for all $j$ \textit{with equality for all $j \not\in S_{k+1}$} since the elements of $V(S_{k+1})$ are free in these coordinates.
Since $p^{(k+1)} \not= 0$, it follows that $p^{(k+1)}_j > 0$ for some $j \in S_{k+1}$.

So far, we have shown that \ref{item:IH-non-negative-prices}, \ref{item:IH-simple-prices}, and \ref{item:IH-progress} hold for $k+1$.
We now turn to the rest of the inductive hypothesis, starting with \ref{item:IH-optimal-consumption}.
Note that it suffices to demonstrate the desired property only for $i \not\in T_{k+1}$ such that $p^{(k+1)} \cdot \omega_i + \alpha^{(k+1)}_i > 0$ since it already holds for all $i \in T_{k+1}$ by \ref{item:IH-optimal-consumption} for $k$.
Suppose by way of contradiction that there exists $i \not\in T_{k+1}$ and $y_i \succ_i x_i$ such that $p^{(k+1)} \cdot \omega_i + \alpha^{(k+1)}_i > 0$ yet $y_i \in C^\alpha_i(p)$.
\ref{item:IH-non-negative-prices} and the fact that $\max\{p^{(\ell)} \cdot x_i, p^{(\ell)} \cdot y_i\} \leq p^{(\ell)} \cdot \omega_i + \alpha^{(\ell)}_i = 0$ for all $\ell \in [k]$ imply that $\omega_{ij} = x_{ij} = y_{ij} = 0$ for all $j \not\in S_{k+1}$.
Thus, $y_i - \omega_i \in P_i(S_{k+1})$ and $y_i - x_i \in Q_i(S_{k+1})$, so
\[
    p^{(k+1)} \cdot y_i \geq \max\{p^{(k+1)} \cdot \omega_i, p^{(k+1)} \cdot x_i\} = p^{(k+1)} \cdot \omega_i + \alpha^{(k+1)}_i > 0
\]
Since $y_i \in C^\alpha_i(p)$ as well, the first inequality is in fact tight.
However, since $\succeq_i$ is continuous, there exists $\varepsilon > 0$ such that $(1 - \varepsilon) \cdot y_i \succ_i x_i$.
The fact that this bundle is strictly cheaper than $y_i$ contradicts the fact that $(1 - \varepsilon) \cdot y_i - \omega_i \in P_i(S_{k+1})$ and $(1 - \varepsilon) \cdot y_i - x_i \in Q_i(S_{k+1})$ as well.

A similar argument shows that \ref{item:IH-weak-cheapest-bundle} holds for $k+1$.
Consider any $i \not\in T_{k+1}$ such that $p^{(k+1)} \cdot \omega_i + \alpha^{(k+1)}_i > 0$ and any $y_i \in C^\alpha_i(p)$ such that $y_i \succeq_i x_i$.
\ref{item:IH-non-negative-prices} and the fact that $p^{(\ell)} \cdot x_i \leq p^{(\ell)} \cdot \omega_i + \alpha^{(\ell)}_i = 0$ for all $\ell \in [k]$ imply that $y_{ij} = 0$ for all $j \not\in S_{k+1}$.
Thus, $y_i - x_i \in Q_i(S_{k+1})$, so
\[
    p^{(k+1)} \cdot y_i \geq p^{(k+1)} \cdot x_i,
\]
and the weak cheapest bundle property holds for all $i \in T_{k+2} \setminus T_{k+1}$.
To conclude our analysis of \ref{item:IH-weak-cheapest-bundle}, simply note that the desired property holds for $i \in T_{k+1}$ by \ref{item:IH-weak-cheapest-bundle} for $k$.

To see why \ref{item:IH-agg-cheapest-bundle} holds for $k + 1$, suppose by way of contradiction that there exists $\beta \in \RR^n_+$ and $y \in X$ such that
\begin{align}
    & y_i \succeq_i x_i \,\forall\, i \in \supp{\beta} \notag \\
    & p^{(\ell)} \cdot \sum_i \beta_i y_i = p^{(\ell)} \cdot \sum_i \beta_i x_i \,\forall\, \ell \in [k] \label{equation:equal-in-higher-currencies} \\
    & p^{(k+1)} \cdot \sum_i \beta_i  y_i < p^{(k+1)} \cdot \sum_i \beta_i x_i \label{equation:cheaper} \\ 
    & \sum_i \beta_i y_{ij} \leq \sum_i \beta_i x_{ij} \,\forall\, j \not\in S_{k+1} \label{equation:redistribution}
\end{align}
Note that we immediately have that 
\[
    \sum_{i \in T_{k+1}} \frac{\beta_i}{\sum_{i' \in T_{k+1}} \beta_{i'}} \cdot (y_i - x_i) \in Q_{T_{k+1}}(S_{k+1})
\]
since
\begin{equation}\label{equation:redistribution-among-T_k}
    \sum_{i \in T_{k+1}} \beta_i y_{ij} \leq \sum_i \beta_i y_{ij} \leq \sum_i \beta_i x_{ij} = \sum_{i \in T_{k+1}} \beta_i x_{ij}
\end{equation}
for all $j \not\in S_{k+1}$.
The first inequality follows from the fact that $y \in \RR^{n \times n}_+$, and the second follows from Equation~\ref{equation:redistribution}.
Meanwhile, \ref{item:IH-non-negative-prices} and the fact that $p^{(\ell)} \cdot x_i \leq p^{(\ell)} \cdot \omega_i + \alpha^{(\ell)}_i = 0$ for all $\ell \in [k]$ imply that $x_{ij} = 0$ for all $i \not\in T_{k+1}$ and $j \not\in S_{k+1}$, from which the equality follows.
By definition of $p^{(k+1)}$ and membership in $Q_{T_{k+1}}(S_{k+1})$,
\begin{equation}\label{equation:T_k-not-cheaper}
    p^{(k+1)} \cdot \sum_{i \in T_{k+1}} \beta_i \cdot (y_i - x_i) \geq 0.
\end{equation}
Now, note that there cannot exist $i \in \supp{\beta} \setminus T_{k+1}$ and $j \not\in S_{k+1}$ such that $y_{ij} > 0$: otherwise, by \ref{item:IH-non-negative-prices} and Equations~\ref{equation:equal-in-higher-currencies} and~\ref{equation:redistribution-among-T_k}, $(\beta', y)$ where $\beta'_i \coloneqq \beta_i \cdot \1(i \in T_{k+1})$ for all $i \in [n]$ would yield a violation of the aggregate cheapest bundle property for some $\ell \in [k]$, contradicting \ref{item:IH-agg-cheapest-bundle} for $k$.
Thus, $y_i - x_i \in Q_i(S_{k+1})$ for all $i \in \supp{\beta} \setminus T_{k+1}$, so 
\[
    p^{(k+1)} \cdot y_i \geq p^{(k+1)} \cdot x_i \,\forall\, i \in \supp{\beta} \setminus T_{k+1}
\]
Together with Equation~\ref{equation:T_k-not-cheaper}, this inequality contradicts Equation~\ref{equation:cheaper}.
\end{proof}

\end{document}